\documentclass[letter,11pt]{article}
\pdfoutput=1

\usepackage{amsmath,amssymb, amsthm}
\usepackage{complexity}
\usepackage{float}
\usepackage{algorithm}
\floatname{algorithm}{Protocol}
\usepackage{multirow}
\usepackage{graphicx}
\usepackage[margin=1in]{geometry}
\usepackage[normalem]{ulem}

\newtheorem{theorem}{Theorem}
\newtheorem{lemma}{Lemma}
\newtheorem{corollary}{Corollary}

\newtheorem{definition}{Definition}

\newcommand{\graphstate} {brickwork state }
\newcounter{linenum}

\newcommand{\ket}[1]{\left\lvert #1 \right\rangle}
\newcommand{\bra}[1]{\left\langle #1 \right\lvert}

\newcommand{\AR}[2][c]{$$\begin{array}[#1]{lllllllllllllll}#2\end{array}$$}

\def\EQ#1{\begin{eqnarray}#1\end{eqnarray}}

\hyphenation{crypto-graphically}
\hyphenation{super-computers}
\hyphenation{inter-ference}

\usepackage[affil-it]{authblk}
\begin{document}
\allowdisplaybreaks[3]

\title{Unconditionally Verifiable Blind Quantum Computation}

\author[1,2]{Joseph F. Fitzsimons}
\author[3,4]{Elham Kashefi}
\affil[1]{Singapore University of Technology and Design,}
\affil[ ]{8 Somapah Road, Singapore 487372}
\affil[2]{Centre for Quantum Technologies, National University of Singapore,}
\affil[ ]{3 Science Drive 2, Singapore 117543}
\affil[3]{School of Informatics, University of Edinburgh,}
\affil[ ]{10 Crichton Street, Edinburgh EH8 9AB, UK}
\affil[4]{CNRS - Télécom ParisTech}
\affil[ ]{23 Avenue d'Italie, Paris 75013, France}
\maketitle

\begin{abstract}
Blind Quantum Computing (BQC) allows a client to have a server carry out a quantum computation for them such that the client's input, output and computation remain private. A desirable property for any BQC protocol is verification, whereby the client can verify with high probability whether the server has followed the instructions of the protocol, or if there has been some deviation resulting in a corrupted output state. A verifiable BQC protocol can be viewed as an interactive proof system leading to consequences for complexity theory. 
The authors, together with Broadbent, previously proposed a universal and unconditionally secure BQC scheme where the client only needs to be able to prepare single qubits in separable states randomly chosen from a finite set and send them to the server, who has the balance of the required quantum computational resources. In this paper we extend that protocol with new functionality allowing blind computational basis measurements, which we use to construct a new verifiable BQC protocol based on a new class of resource states. We rigorously prove that the probability of failing to detect an incorrect output is exponentially small in a security parameter, while resource overhead remains polynomial in this parameter. The new resource state allows entangling gates to be performed between arbitrary pairs of logical qubits with only constant overhead. This is a significant improvement on the original scheme, which required that all computations to be performed must first be put into a nearest neighbour form, incurring linear overhead in the number of qubits. Such an improvement has important consequences for efficiency and fault-tolerance thresholds.
\end{abstract}
 \newpage
\section{Introduction}

Scalable quantum computing has proven extremely difficult to achieve, and when the technology to build large scale quantum computers does become available it is likely that they will appear initially in small numbers at a handful of centers. How will a user interface securely with such a quantum computer?  A solution to this problem is offered by blind quantum computing (BQC), which enables a classical client (Alice) with limited quantum technology to delegate a computation to the quantum server(s) (Bob) in such a way that the privacy of the computation is preserved \cite{Childs,AS06,BFK09,Dorit,RUV13,BGS13}.

Blind classical computing (the notion of ``computing with encrypted data") was proposed by Feigenbaum~\cite{Feigen86} and then extended 
by Abadi, Feigenbaum and Killian in 
a client server setting~\cite{AFK}. They showed that a randomized classical polynomial time client can encrypt and delegate general instances of 
certain problems in  $\NP$\footnote{A problem is in the class $\NP$ if one can verify its answers efficiently; it is $\NP$-hard if it is as hard as any problem in $\NP$.} to a powerful but untrusted server. Remarkably, they also proved that the decision of no $\NP$-hard function can be encrypted in this way if unconditional security is required,\footnote{A crypto system is unconditionally (computationally) secure if it is secure even when the adversary has unlimited (restricted) computing power.} unless the polynomial hierarchy collapses at the third level. The idea of computing known circuits on encrypted data, while requiring the encryption and decryption procedures be independent of the complexity of the function to be evaluated, was introduced earlier by Rivest, Adleman and Dertouzous in a scenario restricted to computational security~\cite{RAD78} shortly after the invention of RSA~\cite{RSA}. The problem of creating such a scheme, known as fully homomorphic encryption, remained open for 30 years before being settled by Gentry in 2009 \cite{Gentry09}, leading to one of the most active areas of research in modern cryptography \cite{Vaik12} \footnote{While several attempts have been made in recent years to find homomorphic encryption schemes which allow for the evaluation of certain quantum operations \cite{rohde2012quantum,tan2014quantum,broadbent2014quantum,ouyang2015quantum}, a quantum analogue of fully homomorphic encryption remains elusive \cite{yu2014limitations}.}.

The first example of blind quantum computation was proposed by Childs~\cite{Childs} based on the idea of encrypting input qubits with a quantum one-time pad~\cite{AMTW00,BR00}. At each step, the client sends the encrypted qubits to the server, which applies a known quantum gate. Finally, the server returns the quantum state for the client to decrypt with their key. Cycling through a fixed set of universal gates ensures that the server learns nothing about the circuit. The next quantum blind protocol with the possibility of detecting a cheating server was proposed by Arrighi and Salvail~\cite{AS06}. In their scheme, the client gives the server multiple quantum inputs, most of which are \emph{decoys} (not intended to be part of the desired computation), but rather are used to detect the server's deviation. This leads to a trade-off on the server side between gaining information and not disturbing the system, and achieves cheat-sensitive security against individual attacks for a set of classical functions called \emph{random verifiable}, where it is possible for the client to efficiently generate random input-output pairs. Extending these results, together with Broadbent, we presented the first universal blind quantum computing (UBQC) protocol~\cite{BFK09} in the measurement-based model \cite{RB01,Mcal07}, where the only requirement for the client is a classical computing machine and a very weak quantum instrument, a random single qubit generator, a currently available technology as we have demonstrated recently \cite{BKBFZW11}. Aside from the cryptographic scenario, a scheme based on a quantum authentication protocol\footnote{The parties aim to communicate messages over an untrusted channel in such a way that the receiver can authenticate the sender.} was proposed by Aharonov, Ben-Or and Eban~\cite{Dorit}, showing that any language in $\BQP$ has an interactive proof system with a verifier accessing a constant-size quantum computer. This work was complemented by a recent groundbreaking result of Reichardt, Unger and Vazirani on the command of quantum systems via rigidity of CHSH games \cite{RUV13}, leading to further work on device independent verifiable blind quantum computing \cite{GKW15,HPF15}. 

Recent years have seen an explosion of interest in the topic of blind quantum computing. This includes, for example, the extension of measurement-based UBQC to various setting \cite{Morimae12,SKM13,MDK15,Morimae12,MF12b,LCWW14}, addressing key questions regarding the effect of the noise \cite{FM12,CVK13}, the creation of new protocols to optimize communications requirements \cite{GMMR13,MPF13,PF14}, the development of privacy amplification techniques, similar to those applicable to quantum key distribution, to combat the adverse effect of imperfect devices on blindness \cite{DKL11}, experimental demonstrations \cite {BKBFZW11,FBS14,BFKW13}, and new cryptographic applications \cite{Q-coins,BGS13}.

A desirable property for any UBQC protocol is verifiability, whereby the client has a mechanism to verify the correctness of a delegated computation. The motivation for this stems from the broad range of computations which can be performed on a quantum computer. For problems which are in $\NP$, the solution can be efficiently verified, at least in principle, using a witness. However, for other problems which can be efficiently computed using quantum computation, such as quantum simulation~\cite{GAN14}, a dishonest server cannot be detected in such a way. The ability to compute with encrypted data, while hiding the underlying function, has opened up new approaches to the problem of verification~\cite{BFK09,Dorit,RUV13}. The main contributions of the present paper are to make rigorous the foundations of measurement-based UBQC and to present a new verification protocol which we prove to be secure against the most general adversarial behavior of the server. Using this protocol, the client can verify with high probability whether Bob has followed the instructions of the protocol and the output state is indeed in the correct form, or if there has been a deviation resulting in an incorrect output state. The central idea is based on the insertion of randomly prepared single qubits (called \emph{traps}), blindly isolated from the actual computation, which can act as such a witness. Here, even the computation of the test (measurement of the qubits) can be performed blindly by an untrusted server as we have demonstrated recently \cite{BFKW13}.

The verification scheme we present here makes use of similar elements as suggested in \cite{BFK09}: trap computations are used to detect errors, and a fault-tolerant encoding of the computation is used to amplify the detection rate. While the proof sketch for the effectiveness of verification in the original UBQC paper did not consider the most general adversary, we prove that the modified scheme we present here detects or corrects any possible deviation by the server, except with probability which is exponentially suppressed. In order to do so we introduce new universal resource states beyond the original brickwork state introduced in \cite{BFK09}. The first such family is a simple modification of the brickwork state which allows for the embedding of an arbitrary trap qubit, which leads to an inverse polynomial probability of detecting a deviation from the computation. In order to achieve a higher rate of detection, we introduce a second resource state which overcomes the locality limitations inherant in the brickwork state. This allows for the inclusion of a polynomial number of trap qubits and fault-tolerant implementation of the target computation based on the topological scheme of Raussendorf, Harrington and Goyal \cite{RHG07}. Together, these two new features allow for the probability of failing to detect or correct a deviation from the protocol to be made exponentially small. In this work we deal only with the stand-alone security definitions, as composable secuity follows from recent follow-up work by Dunjko \textit{et al} \cite{DFPR13}.

The remainder of the paper is organized as follows. Section 2 and 3 summarize various required concepts from measurement-based quantum computing and also the original UBQC scheme presented in \cite{BFK09}.  In order to construct our new verifiable UBQC protocol we first introduce the concept of dummy qubits in Section 4, where we assume Alice now can prepare a qubit randomly chosen not only in the equatorial plain, as in the original UBQC scheme, but also from the set $\{\ket 0, \ket 1\}$. The latter qubits are called dummy qubits as they have no effect on the actual underlying computation. However, they permit the blind construction of isolated trap qubits in the state $\ket {+_\theta}$ as explained in Section 6 where the core concept of verification is introduced. In order to deal with both universality and verification, in Section 5 we introduce two new resource state called the cylinder brickwork and dotted-complete graph states. The use of this scheme is expected to lead to substantially increased thresholds for fault tolerant computing in the blind setting. A threshold for fault-tolerant blind computation in the absence of verification based on this fault-tolerance scheme was previously calculated as $4.3 \times 10^{-3}$ by Morimae and Fujii \cite{FM12}. As shown in Section 6, introduction of a single blind isolated trap qubit leads to a verifiable blind quantum computing protocol with security polynomial in the total number of qubits. In order to boost the security while maintaining universality a new scheme has to be constructed. This is done in Section 7 where we put together various constructions of the previous sections to present the main result of this paper, a universal exponentially-secure verifiable blind quantum computing protocol. 
\section{Preliminaries}

Measurement-based quantum computing (MBQC) \cite{RB01,Mcal07} is a novel form
of quantum information processing, where the key twin notions that
distinguish quantum information processing from its classical
counterpart, entanglement (creating non-local correlations
between quantum elements) and measurement (observing a quantum
system), are the explicit driving force of computation. More
precisely, a measurement-based computation consists of a phase in
which a collection of qubits are set up in a standard entangled state.
Measurements are then made on individual qubits and the outcomes of
the measurements may be used to determine further adaptive
measurements. Finally, again depending on measurement outcomes, local
adaptive unitary operators, called corrections, are applied to some
qubits; this allows the elimination of the indeterminacy introduced by
measurements. Conceptually MBQC separates the quantum and classical
aspects of computation; thus it clarifies, in particular, the
interplay between classical control and the quantum evolution process.
The UBQC protocol explores this unique feature of MBQC as it has been
proven to be conceptually enlightening to reason about distributed computing tasks using this approach \cite{MS08}. We begin by describing all the
required elements for an MBQC protocol and then move to the particular
family of distributed MBQC protocols for hiding various aspects of
a given computation.

\subsection{Single party (undistributed) MBQC protocol}

A formal language to describe in a compact way the operations needed for the MBQC model was proposed in \cite{Mcal07}. In this framework every MBQC algorithm (usually referred to as an MBQC pattern) involves a sequence of operations such as entangling gates, measurements and feed-forwarding of outcome results to determine further measurement bases. A measurement pattern, or simply a pattern, is defined by a choice of a set of working qubits ($V$), a subset of input qubits ($I$), another subset of output qubits ($O$), and a finite 
sequence of commands acting on qubits in $V$. Therefore, we consider patterns associated with the so-called open graphs.
\begin{definition}
	\label{def_opengraph}
	An \emph{open graph} is a triplet $(G, I, O)$, where $G = (V, E)$ is a undirected graph, and $I, O \subseteq V$ are respectively called input and output vertices.
\end{definition}

\noindent Following the terminology of \cite{Mcal07}, a single party MBQC protocol consists of three elements:

\begin{enumerate}
\item A uniform family of open graph states $\{(G_{n,m},I_n,O_n)\}_n$ over $m$ vertices associated with individual
qubits, where $n$ is the size of the input/output space of the
underlying computation. In this paper we deal only with those MBQC protocol that implements a unitary operator over their input space and hence the size of the output space is the same as the input space, but this is not a restriction and we can extend this treatment to any general completely positive trace preserving map by padding the input and output spaces. Further, for simplicity, we will assume that the input is always a pure state, though again this treatment can be extended to the general case. We usually assume that $|I|=|O|=n$,
however sometime $n$ is taken to be strictly larger than the dimension
of the input/output Hilbert space due to the existence of auxiliary
input or output qubits (as in later protocols which incorporate trap
qubits). In order to have uniform notation, for the latter case, we
will still use $I$/$O$ to be the class of all
non-prepared/non-measured qubits where it is strictly larger than the
class of all input/output qubits. By the term ``uniform family'' we
simply mean that for any protocol there exist a classical Turing
machine that for a given input of the size $n$ describes the required
graph over $m\ge n$ vertices. If the underlying geometry of the graph
is regular, for example being one-dimensional lines, two-dimensional
regular lattices or brickwork graphs (as we describe later), then
instead of referring to the Turing machine to define the uniform
family we simply use fixed parameters such as the size of the line or
lattice to specify the graphs. For any fixed input size $n$ the graph
$G_{n,m}$ describes the initial quantum state of the protocol. Given
an arbitrary state of the input qubits corresponding to the input
vertices of the graph, one prepares $m-n$ qubits in the state $\ket
+=\frac 1 {\sqrt 2} (\ket 0 + \ket 1)$ corresponding to all non-input
qubits ($I^c$) in the graph and then apply \textsc{ctrl}-$Z$ operator
between qubits $i$ and $j$, if the corresponding vertices in $G_{n,m}$
are connected. Note that since the \textsc{ctrl}-$Z$ gate is symmetric the
direction of the edge is not important and hence we are working with
undirected graphs. We will usually refer to the obtained quantum state
based on the graph $G_{n,m}$ as the graph state $G_{n,m}$, unless a
different notation is more appropriate, also for simplicity we drop
the indices.\\

\item A set of angles $\phi_i \in A$ where $A\subseteq [0,\ 2\pi)$ for all non-output qubits, to describe a collection of single qubit $(X,Y)$-measurements, that is measurement in the bases $\frac 1 {\sqrt 2}(\ket 0 \pm e^{i\phi_i} \ket 1)$. For the specific class of MBQC protocols that we discuss in this paper we require the angles to specify a collection of measurement bases, such that individual measurements are unbiased with respect to the initial state. This is an essential ingredient for the blindness property that we define later. Without loss of generality, we can fix the set from which the angles are chosen to be $A=\{0, \pi/4, 2\pi/4, \cdots , 7\pi/4 \}$. We will discuss later how this combination of angles and particular families of graph states leads to approximate universality. \\

\item The last ingredient is the structure of the dependency among the measurements. It is known that despite the probabilistic nature of the measurements, an MBQC protocol can implement a unitary computation over the input space by introducing a casual structure over the measurements. This is done by allowing any measurement on qubit $i$ to be dependent on the result of some (possibly none) previously measured qubits. Let $s_i \in \{0,1\}$ be the classical result of the measurement at qubit $i$. There are two type of dependencies, called $X$ and $Z$ dependency. If a measurement at qubit $i$ is $X$ or $Z$ dependent on the $s_j$ where qubit $j$ has been already measured then the actual angle of the measurement of qubit $i$ during the protocol run is $(-1)^{s_j}\phi_i$ or $\phi_i+s_j\pi$ respectively. Naturally one needs a non-cyclic structure to be able to run such dependencies and for an arbitrary graph such construction (if it exists) is formalized by the notion of the \emph{flow} of the graph \cite{flow06,gflow07}. Intuitively, flow captures the propagation of quantum information as the resource state is measured, identifying the locations where measurement-dependent corrections should be made (see Figure~\ref{flow}). A flow is defined by a function ($f:O^c \rightarrow I^c$) from the measured qubits to non-input qubits and a partial order ($\preceq$) over the vertices of the graph such that $\forall i:\; i \preceq f(i)$ and $\forall j \in N_G(i):\; f(i) \preceq j$, where $N_G(i)$ denotes the neighborhood of vertex $i$ in $G$. Each qubit $k$ is $X$ dependent on $f^{-1}(k)$ and $Z$ dependent on all qubits $l$ such that $k \in N_G(f(l))$. Note that if the dependency set is empty, that is there is no qubit $q$ such that $q=f^{-1}(k)$ or $q\in N_G(f(l))$ then we set the convention that the corresponding value of $s_q$ is zero and hence we can use the same formulas ($(-1)^{s_j}\phi_i$ or $\phi_i+s_j\pi$) to compute the dependent angles. For a given graph, once the input and output qubits have been labeled, the flow, if it exists, is uniquely determined.

\end{enumerate}

\begin{figure} [h]
\begin{center}
\includegraphics{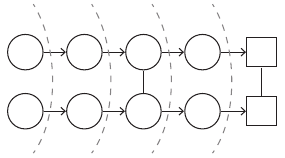}
\caption{An open graph state with flow. The boxed vertices are the output (non-measured) qubits and the circular vertices are the measured qubits. The flow function is represented as arrows (representing the $X$ dependency between measured qubits) and the partial order on the vertices (measurement order) is given by the dotted partition sets. One can see easily how the flow highlights the underlying circuit implemented by the measurement pattern.}
\label{flow}
\end{center}
\end{figure}

The above describes only a non-distributed (single party) MBQC protocol, that is a protocol where a party both prepares the graph state and performs the sequence of the dependent measurements according to the order given by the flow (see \cite{RB01,Mcal07} for more details on MBQC computation). One can easily extend the above definition to the distributed setting where different elements of the protocol are accessible and known only to specific parties and through classical/quantum communication the parties collaborate to perform a specific computation. Consider a simple two-party example where Alice has the information about the angles and Bob has the information about the graph and hence he can calculate the flow. Then they can collaborate to perform the corresponding computation as follows: first Bob prepares the required graph state and asks Alice to send him the classical information about the angles of the measurement, Bob then computes the dependency and performs the measurement and so forth. The purpose of this paper is to describe a family of such distributed protocols where, despite the communication, Alice can keep the measurement angles hidden from Bob. We then show that, for certain carefully chosen graph families, hiding these angles is sufficient to hide the full underlying computation together with the input and outputs.

\subsection{2-Party (distributed) Hiding Protocols}

We define a specific family of two-party (Alice and Bob) MBQC protocols (which we term hiding protocols) that can be shown to be ``blind'' in the sense that Alice can hide information from Bob. For simplicity, instead of working with a family of graphs representing the computation over an arbitrary size input, we fix the input size to be $n$ and we denote by $m \geq n$ the total number of vertices in the graph and hence the total number of qubits in the equivalent single-party protocol. Note that if we desire to have an efficient protocol, then we restrict the computation of the protocol to be of the polynomial size by requiring that $m=\text{Poly}(n)$. However blindness is independent of any complexity assumptions so we do not, in general, restrict the size of $m$.

The protocol will be interactive having $m-n$ steps if the output is quantum or $m$ steps if the output is classical, where at each step a single qubit is measured. In practice we can parallelize the protocol to $D$ steps, where $D$ is the depth of the partial order of the flow of the graph \cite{BK09,BKP10}. This is due to the special structure of the partial order of the qubits defined by the flow function whereby all the qubit in the same class of the partial order are independent of each other and hence can be measured in parallel, i.e. at the same time. However this parallelization will make no difference to the concept of blindness that we are concerned with, so we keep the simple convention that at each step only one qubit is measured.  Furthermore we assume for the case of classical output that all of the output qubits are measured in the final step with a Pauli $X$ measurement. Again this is simply a convention for the discussion in our paper and in general the output qubits could be measured with any angles and in different steps depending on the flow construction. Such a convention does not affect universality, as the circuit being implemented can simply be modified to replace measurements in arbitrary bases with measurements in fixed bases preceded by an appropriate local rotation. 

We will denote by $\mathbf{s}$ a sequence of length $m-n$ with value in $\{0,1\}$ describing the result of the non-output measurements performed so far. In the case of classical output, where output qubits are measured as the last $n$ steps, $\mathbf{s}$ is a sequence of length $m$. The value associated with a qubit that is not yet measured are set to $0$, and hence at the beginning of the protocol before any measurement being performed we set $\mathbf{s}=0,0,\cdots,0$. We will denote by $\mathbf{s}_{\leq i}$ the prefix of length $i$ of $\mathbf{s}$ and elements of $\mathbf{s}$ are denoted by $s_i$. Whenever adding the values of $s_i$ and $s_j$ we define their sum modulo $2$. All the qubits in the protocol are enumerated in such a way that at position $i$ all qubits with label less than $i$ are measured before measuring qubit $i$. Any total ordering of the qubits consistent with partial ordering of the flow will work and as a result the measurement at qubit $i$ will depend only on the string $\mathbf{s}_{< i}$\,.

We describe first a generic hiding protocol with quantum input and output (Protocol \ref{prot:GBQCQ}) and one with classical input and output (Protocol \ref{prot:GBQCC}) and then formalize various derivatives of them to obtain universal, blind and verifiable protocols. Protocol \ref{prot:GBQCC} is exactly the same as Protocol \ref{prot:GBQCQ} except that the steps for encoding input are removed and all the output qubits are measured in the Pauli $X$ basis. We retain the common text between the protocols so that they can be understood individually. Note that the reason we chose the measurement of the output qubits to be in the Pauli $X$ basis is purely for simplicity of presentation, so that the same evaluation function $C$ of the non-output measurements, in Protocol \ref{prot:GBQCQ}, can be used for the output qubits. However one could add separate evaluation function for the output qubit measurement to perform Pauli $Z$ measurement over them.

The outline of the main protocol is as follows. Alice has in mind a unitary operator $U$ that is implemented with a measurement pattern on some graph state $G$ with its unique flow function $f$, and measurements angles in $A=\{0, \pi/4, 2\pi/4, \cdots , 7\pi/4 \}$. This pattern could have been designed either directly within the MBQC framework or via translation from a circuit construction. The pattern assigns a measurement angle~$\phi_i$ to each qubit in $G$, however during the execution of the pattern, the actual measurement angle $\phi'_i$ is a modification of $\phi_i$ that depends on previous measurement outcomes instructed by $f$ in the following way \cite{flow06,gflow07}: 
\begin{align*}
\phi'_i = (-1)^{s_{f^{-1}(i)}} \phi_i + \sum_{j: \, i\in N_G(f(j))} s_j \pi \, .
\end{align*}
As said before, in a standard MBQC pattern all the non-input qubits are prepared in the state $\ket +$ and all the input qubits in the desired input state $\ket I$. Considering such quantum input allows for the possibility of Alice having additional capabilities allowing her to produce arbitrary input states, or for the possibility that the input state is supplied on Alice's behalf by a third party.

In our protocols, in order to hide the information about the angles some randomness has to be added to the preparation and consequently the measurements have to be adjusted to compensate for this initial randomness to obtain the correct outcome. This randomisation has three components:

\begin{itemize}
\item A set of random angles $\theta$ used to hide the true measurement angles $\phi$, 
\item A set of random bits $r$ used to hide measurement outcomes, 
\item A set of random bits $x$ used, along with $\theta$, to hide any quantum input via a one-time pad. 
\end{itemize}
 
\noindent Alice prepares all the non-input qubits in $\ket {+_{\theta_i}}=\frac 1 {\sqrt 2} (\ket 0 + e^{i\theta_i} \ket 1)$ for some randomly chosen $\theta_i \in A$ and also applies a modified version of a full quantum one-time pad encryption over the input qubits using random keys $x_i \in \{0,1\}$ and $\theta_i \in A$ in the following way:
\begin{align*}
\ket e = X_1^{x_1} Z_1(\theta_1) \otimes \ldots \otimes  X_n^{x_n} Z_n(\theta_n) \ket I \, ,
\end{align*}
before sending all qubits to Bob. After that, Bob entangles
qubits according to $G$. Note that this unavoidably
reveals upper bounds on the dimensions of the underlying quantum computation, corresponding to the length of the input and depth of the
computation. The computation stage
involves interaction: for each qubit, Alice sends Bob a classical message $\delta_i \in A$ to tell him in
which basis (in the~$(X,Y)$ plane) he should measure the qubit. This angle is computed in such a way as to correct for the one-time padding of the input qubits and the random rotation of the non-input qubits, as follows:
\begin{align*}
\delta_i = (-1)^{x_i+s_{f^{-1}(i)}} \phi_i + \sum_{j: \, i\in N_G(f(j))} s_j \pi + \theta_i + r_i \pi \, ,
\end{align*}
where the last term $r_i\pi$, with a randomly chosen $r_i \in \{0,1\}$, is added to hide the correct classical outcome of the measurement from Bob without affecting the overall computation (see correctness proof below). Bob then performs the measurement and  communicates the outcome $b_i$ to Alice. Alice's choice of angles in future rounds will depend on these values, hence she will correct the obtained outcome by setting $s_i:=b_i \oplus r_i$. If Alice is computing a classical function, the protocol finishes when all  qubits are measured (Protocol \ref{prot:GBQCC}), as the classical outputs are encoded in the measurement outcomes sent to Alice. If she is computing a quantum function, Bob returns to her
the final qubits (Protocol \ref{prot:GBQCQ}), and it is taken that the quantum output is encoded in these qubits. Note that in Protocol \ref{prot:GBQCC} we take the input to be $\ket + \otimes \cdots \otimes \ket + $, an encoding of the fixed classical input $0 \cdots 0$, any other arbitrary classical input $i_1 \cdots i_n$ is prepared by applying appropriate $Z$ on the corresponding qubit to create 
\begin{align*}
\ket e = Z_1^{i_1} \otimes \ldots \otimes  Z_n^{i_n} ( \ket {+_{\theta_1}} \otimes \cdots \otimes \ket {+_{\theta_n}}) \, .
\end{align*}
For classical input there is no need for a full one-time padding of the input hence no need for the $x_i$ random variables as $\theta_i$ rotation completely hides the input. The above explanation is the basis for the correctness of all of the protocols presented in this paper.

\begin{definition} A hiding protocol with quantum input is \emph{correct} if the quantum output state is $U\ket I$ or if the classical outputs are the result of Pauli $X$ measurements  on the state $U\ket I$, where $U$ is the unitary operator corresponding to the implementation of the measurement pattern of the hiding protocol. Similarly one could define correctness for protocols with classical input.
\end{definition}

\begin{theorem}[Correctness]
\label{thm:Correctness}Assume Alice and Bob follow the steps of Protocols~\ref{prot:GBQCQ} and \ref{prot:GBQCC}. Then the outcome is correct.
\end{theorem}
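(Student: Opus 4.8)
The plan is to reduce the claim to the correctness of the underlying flow-based MBQC pattern for $U$, which we may take as given, by checking that every randomising operation Alice introduces is cancelled either by a compensating summand in the reported angle $\delta_i$ or by her relabelling $s_i:=b_i\oplus r_i$. After Bob has entangled the qubits he holds
\[
\ket\psi = E_G\Big(\bigotimes_i X^{x_i}Z(\theta_i)\Big)\ket{\mathrm{init}},
\]
where $E_G=\prod_{(i,j)\in E}\mathrm{CZ}_{ij}$, where $\ket{\mathrm{init}}=\ket I_I\otimes\bigotimes_{i\notin I}\ket{+}$ is the honest initial state, and where I adopt the convention $x_i=0$ on non-input qubits so that the single operator $\bigotimes_i X^{x_i}Z(\theta_i)$ simultaneously encodes the one-time pad on the input and the rotations $\ket{+_{\theta_i}}=Z(\theta_i)\ket{+}$ of the non-input qubits.

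First I would push all of Alice's preparation operators to the left of $E_G$. Each $Z(\theta_i)$ is diagonal and hence commutes with $E_G$, while each Pauli $X$ propagates as $E_G\,X^{x_i}_i=X^{x_i}_i\big(\prod_{j\in N_G(i)}Z^{x_i}_j\big)E_G$. Collecting the Pauli-$Z$ byproducts into $B=\prod_i\prod_{j\in N_G(i)}Z^{x_i}_j$ and using that $B$, the $Z(\theta_i)$ and the $X^{x_i}$ pairwise commute where needed, this yields
\[
\ket\psi=\Big(\prod_i X^{x_i}_i\Big)\Big(\prod_i Z(\theta_i)\Big)B\,\ket G,
\]
with $\ket G=E_G\ket{\mathrm{init}}$ the honest graph state. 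Thus $\ket\psi$ is the honest graph state dressed by a known product of single-qubit rotations and Pauli operators.

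The core of the argument is the behaviour of a single $(X,Y)$-plane measurement under these dressings, governed by $Z(-\theta)\ket{+_\delta}=\ket{+_{\delta-\theta}}$, $X\ket{+_\delta}\propto\ket{+_{-\delta}}$, and $Z\ket{+_\delta}=\ket{+_{\delta+\pi}}$. Reading these off qubit $i$: the summand $\theta_i$ in $\delta_i$ absorbs the preparation rotation; the prefactor $(-1)^{x_i}$ absorbs the $X^{x_i}$ dressing by flipping the sign of $\phi_i$; each Pauli-$Z$ shifts the effective angle by $\pi$, which is the role of the $(\sum_{j:\,i\in N_G(f(j))}s_j)\pi$ term; and the remaining $r_i\pi$ merely flips $b_i$, which Alice undoes by $s_i:=b_i\oplus r_i$. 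What is left is exactly the honest flow-adjusted angle $\phi'_i=(-1)^{s_{f^{-1}(i)}}\phi_i+(\sum_{j:\,i\in N_G(f(j))}s_j)\pi$. I would then run an induction along any total order refining the flow partial order: if the corrected outcomes $s_j$ for $j<i$ agree with the honest pattern, the step above shows the same at $i$. After all non-output qubits are measured, the honest flow leaves $U\ket I$ on the output qubits up to a Pauli and the residual rotations $Z(\theta_i)$ on those qubits; since Alice knows all of $x_i,\theta_i,r_i$ and $\mathbf s$, she removes this known decryption to recover $U\ket I$ (Protocol~\ref{prot:GBQCQ}). Protocol~\ref{prot:GBQCC} is the special case with no $x_i$, classical input encoded by the extra $Z^{i_k}$, and the output qubits read out in the $X$ basis, where the same cancellations show the reported bits are the honest $X$-measurement outcomes on $U\ket I$.

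I expect the main obstacle to be the book-keeping of the Pauli-$Z$ byproducts. One must verify that the $\pi$-shifts arising from the propagated pad $B$ and from the honest flow dependencies assemble, term for term, into precisely the $X$-dependency $f^{-1}(i)$ and the $Z$-dependency set $\{j:i\in N_G(f(j))\}$ appearing in $\delta_i$, and that the net Pauli deposited on $O$ is exactly the operator Alice predicts from her keys, so that after her decryption the output is genuinely $U\ket I$ rather than a Pauli-deviated version of it. By comparison the rotations $\theta_i$ and the hiding bits $r_i$ are routine, since each cancels strictly locally at the measurement of its own qubit.
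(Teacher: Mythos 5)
Your strategy is the paper's own, carried out more explicitly: commute Alice's dressings past $E_G$, check that each one is cancelled either by a summand of $\delta_i$ or by the relabelling $s_i:=b_i\oplus r_i$, then invoke the flow theorem for the honest pattern. For Protocol~\ref{prot:GBQCC} (no $x_i$) and for the $\theta_i$, $r_i$ book-keeping this is sound, and you are in fact more careful than the paper, in that you actually write down the byproduct $B$ produced by pushing the $X$-pads through $E_G$; the paper's proof silently treats $X^{x_i}$ as though it commuted with the entangling operation and never confronts $B$ at all.

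The gap is precisely at the step you yourself flag as the main obstacle, and the resolution you assert for it is false. The term $(\sum_{j:\,i\in N_G(f(j))}s_j)\pi$ cannot absorb the propagated pad: it is already consumed implementing the honest pattern's own adaptive corrections (it is part of $\phi'_i$), and no summand of $C_G$ as defined in Protocol~\ref{prot:GBQCQ} depends on the $x_j$ of neighbouring input qubits. Concretely, take the path graph on qubits $1$--$2$--$3$ with input $\{1\}$, output $\{3\}$ and flow $f(1)=2$, $f(2)=3$. After entangling, qubit $2$ carries the extra dressing $Z_2^{x_1}$ from $B$; undoing all dressings, Bob's measurement at $\delta_2=(-1)^{s_1}\phi_2+\theta_2+r_2\pi$ is equivalent to an honest measurement at angle $(-1)^{s_1}\phi_2$ with outcome $b_2\oplus r_2\oplus x_1$, whereas Alice records $s_2=b_2\oplus r_2$. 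Her prescribed output correction $X_3^{s_2}Z_3^{s_1}Z_3(\theta_3)$ then differs from the flow-theorem correction $X_3^{s_2\oplus x_1}Z_3^{s_1}$ by $X^{x_1}$, so the run ends in $X^{x_1}U\ket{I}$ whenever $x_1=1$: your induction breaks at the first qubit adjacent to an input. To close the gap one must add $(\sum_{j\in N_G(i)\cap I}x_j)\pi$ to $\delta_i$ (equivalently, set $s_i:=b_i\oplus r_i\oplus\sum_{j\in N_G(i)\cap I}x_j$), which is exactly how the quantum-input protocol of \cite{BFK09} is specified; with that term your term-by-term matching and induction do go through. A smaller instance of the same kind, which you dismiss as routine: with the pad literally ordered $X^{x_i}Z(\theta_i)$, undoing the $X$ negates the entire effective angle, $\theta_i$ included, so the $\theta_i$ summand cancels only if the pad is read as $Z(\theta_i)X^{x_i}$ or the $\theta_i$ summand in $\delta_i$ is given the sign $(-1)^{x_i}$; this is a harmless key relabelling, but it is not nothing.
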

\begin{proof} The correctness of these protocol follows from the correctness of standard measurement based quantum computation~\cite{Mcal07}, as we now show. We explicitly give a proof only for the case of quantum input and output, as the remaining cases have virtually identical proofs. The protocol deviates in three ways from the standard implementation of the desired measurement pattern defined by a graph state $G$ with measurement angles $\phi_i$: a random $Z(\theta_i)$ rotation over all qubits; a random $X^{x_i}$ rotation over the input qubits; measuring with angles $\delta_i$. However, since  \textsc{ctrl}-$Z$ commutes with $Z$-rotations, Alice's preparation does not change the underlying graph state; only the phase of each qubit is locally changed, and it is as
if Bob had done the $Z$-rotation after the  \textsc{ctrl}-$Z$. Let $\phi_i'$ be the adapted angles of the measurement $\phi_i$ according to the flow structure of the desired measurement pattern defined by $G$. Note that a measurement in the $\{ |+_{\phi_i'} \rangle, |-_{\phi_i'} \rangle \}$
basis on a state $\ket{\psi}$ is the same as a measurement in the $\{ |+_{\phi_i'  +\theta_i}\rangle, |-_{\phi'_i  +\theta_i} \rangle \}$ basis on $Z(\theta_i) \ket{\psi}$. Also a measurement in the $\{| +_{\phi_i'}\rangle, |-_{\phi_i'}\rangle\}$
basis on a state $\ket{\psi}$ is the same as a measurement in the $\{|+_{- \phi_i'}\rangle, |-_{-\phi'_i}\rangle\}$ basis on $X \ket{\psi}$. Finally 
since $\delta_i = (-1)^{x_i}\phi'_i + \theta_i + \pi r_i$, if $r_i=0$, Bob's measurement
has the same effect as Alice's target measurement; if $r_i=1$,
all Alice needs to do is to flip the outcome. Therefore all the deviation from the actual implementation of the measurement pattern are corrected and the quantum output is the desired state corresponding to the action of the unitary operator implemented by the graph state $G$ over the input state.
\end{proof}

Note that in practice if Alice has the description of a unitary $V$ such that $V(\otimes_i \ket +) = \ket I$ then trivially a hiding protocol that blindly computes $UV$ over the input states $\otimes_i \ket +$ will prepare the desired output state of the form $U\ket I$. Therefore for such a scenario Alice can follow the step of the Protocol \ref{prot:GBQCQ} with classical input without having to prepare the encoded state $X_1^{x_1} Z_1(\theta_1) \otimes \ldots \otimes  X_n^{x_n} Z_n(\theta_n) \ket I$ herself. However, we have presented the full protocol for an arbitrary, possibly unknown, quantum input state, since the general scheme proved useful for dealing with input supplied by a third party \cite{Q-coins}.

\section{Blindness}

We say a hiding protocol is \emph{blind} if Bob cannot tell anything relating to the angles of measurements. In considering this it is worth noting that Bob can run the protocol only once with fixed values for Alice's parameters $\phi_i,\theta_i,r_i,x_i$. Later we will show how for generic graphs this will lead to hiding the output of the computation as well. Following the convention of \cite{AFK}, we use the notation of a leakage function, denoted as $L(X)$, to formalize what Bob learns during the interaction. We present a stand alone security definition that is equivalent to the original definition of blindness provided in \cite{BFK09}.

\begin{definition} \label{defn:blind} A hiding protocol \textsf{P} with input $X$ is \emph{blind} while leaking at most $L(X)$ if the distribution of messages obtained by Bob in \textsf{P} is dependent only on $L(X)$.
\end{definition}

\begin{theorem}[Blindness]\label{t-blind}
Protocol~\ref{prot:GBQCQ} is blind while leaking at most $G$ and $n$, and \ref{prot:GBQCC} is blind while leaking at most $G$.
\end{theorem}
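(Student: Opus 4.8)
The plan is to show that everything Bob ever sees—the graph $G$, the stream of classical angles $\delta_i$, and the quantum states he receives—can be generated from $G$ alone, independently of Alice's secret parameters $\phi_i, x_i$ and the true measurement outcomes. Per Definition~\ref{defn:blind}, this means verifying two things: that the distribution of Bob's classical transcript depends only on $L(X)=G$, and that, conditioned on that transcript (and on $G$), the quantum system Bob holds is fixed. I would treat Protocol~\ref{prot:GBQCQ} first and then remark that Protocol~\ref{prot:GBQCC} follows as a special case, since there the input encoding uses $Z$-rotations absorbed into the $\theta_i$ and the $x_i$ variables are simply dropped.

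\textbf{Classical transcript.} The key observation is that each angle Bob receives has the form
\begin{align*}
\delta_i = (-1)^{x_i+s_{f^{-1}(i)}} \phi_i + \Big(\sum_{j:\, i\in N_G(f(j))} s_j\Big)\pi + \theta_i + r_i\pi \, .
\end{align*}
I would fix all of Alice's other parameters and the history $\mathbf{s}_{<i}$, and argue that because $\theta_i$ is chosen uniformly from $A=\{0,\pi/4,\ldots,7\pi/4\}$ and enters $\delta_i$ additively, the map $\theta_i \mapsto \delta_i$ is a bijection of $A$ onto itself. Hence $\delta_i$ is uniformly distributed over $A$ and, crucially, independent of $\phi_i$, $x_i$, and $s_{f^{-1}(i)}$: whatever the ``true'' angle demands, the fresh uniform $\theta_i$ perfectly one-time-pads it. Since each $\theta_i$ is used only in the angle for qubit $i$, the $\delta_i$ are mutually independent and uniform over $A$, so the entire classical transcript is a string of i.i.d.\ uniform elements of $A$ whose \emph{length} is determined only by the geometry of $G$. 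Thus the transcript distribution depends on $G$ and nothing else. I would note the one subtlety that Bob's reported outcomes $b_i$ feed back into later $\delta_j$ through $\mathbf{s}$, but since for \emph{each} later qubit the corresponding fresh $\theta_j$ still acts as an independent one-time pad, the marginal of each $\delta_j$ remains uniform regardless of Bob's (possibly adversarial) replies; this is the point I would be most careful to state cleanly.

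\textbf{Quantum state.} For the second condition I would argue that, conditioned on $G$ and on the fixed classical transcript, the qubits Bob receives are in a state independent of the secrets. Each non-input qubit is sent as $\ket{+_{\theta_i}}$ with $\theta_i$ uniform over $A$; averaging over the value of $\theta_i$ consistent with the already-revealed $\delta_i$, or equivalently noting that before conditioning the $x_i,\theta_i$ act as a full quantum one-time pad, yields the maximally mixed state on each qubit. For the input qubits the encoding $X^{x_i}Z(\theta_i)\ket{I}$ with $x_i$ uniform on $\{0,1\}$ and $\theta_i$ uniform on $A$ again produces a state that is fixed (maximally mixed, tensored appropriately) independent of $\ket{I}$. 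I would then appeal to the fact that Bob's entangling operation according to $G$ is a fixed map, so the post-entangling state is likewise fixed given $G$.

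\textbf{Main obstacle.} The delicate point is the interplay between the two conditions, not either one alone: the $\theta_i$ must simultaneously randomise the classical angle \emph{and} the quantum state, and one must check that conditioning on the revealed $\delta_i$ does not leak residual information about $\phi_i$ or $x_i$ into the remaining qubit. I expect the cleanest route is to bundle $x_i$, $\theta_i$ (and $r_i$) into a single one-time-pad argument and show that the joint distribution of (transcript, quantum register) factorises through $G$. Formalising this joint independence—rather than treating the classical and quantum parts separately and risking a hidden correlation through the shared $\theta_i$—will be where the real care is needed.
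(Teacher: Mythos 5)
Your proposal is correct and follows essentially the same route as the paper's proof: the uniformly random $\theta_i$ (absorbed together with $r_i\pi$) one-time-pads each $\delta_i$, making the classical transcript uniform and independent of $\phi_i, x_i$, and then, conditioned on the revealed $\delta_i$, the residual randomness in $r_i$ (and $x_i$ for input qubits) leaves every qubit Bob holds maximally mixed. The ``main obstacle'' you flag---correlation between transcript and quantum state through the shared $\theta_i$---is resolved in the paper exactly as you anticipate, by observing that for any fixed $\delta_i$ each value of $r_i$ corresponds to a unique consistent $\theta_i$, so tracing over $r_i$ yields $\tfrac{1}{2}\ket{+_{\delta_i-\phi_i'}}\bra{+_{\delta_i-\phi_i'}}+\tfrac{1}{2}\ket{-_{\delta_i-\phi_i'}}\bra{-_{\delta_i-\phi_i'}}=\mathbb{I}/2$ independently of the secrets.
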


\begin{proof}
We first give a proof for the blindness of Protocol~\ref{prot:GBQCQ}. We show that given $G$ and $n$, and independent of the actions of Bob, the message registers he receives are always in a maximally mixed state. We begin by introducing a new variable $\theta'_i = \theta_i + r_i \pi$, for all $i$. Thus, any quantum input received by Bob during a run of the protocol is given by $\ket e = X_1^{x_1} Z_1^{r_1} Z_1(\theta'_1) \otimes \ldots \otimes  X_n^{x_n} Z_n^{r_n} Z_n(\theta'_n) \ket I$, while the remaining qubits he receives are in states $|+_{\theta'_i+r_i\pi}\rangle$ for $n<i\leq m$. Expressed in terms of $\theta'_i$, $\delta_i$ becomes independent of $r_i$ for all $i$, since
 \begin{equation*}
\delta_i = (-1)^{s_{f^{-1}(i)}} \phi_i + \sum_{j: \, i\in N_G(f(j))} s_j \pi + \theta'_i.
\end{equation*}
Thus, only the $i$th qubit received by Bob is dependent on $r_i$, and so tracing over the secret values $r$ simply dephases every qubit in the computational basis. Similarly, only qubit $i$ is dependent on $x_i$ for $1\leq i \leq n$, and so tracing over $x$ completes the depolarization of the quantum input. Thus every qubit received by Bob is in the maximally mixed state, and uncorrelated with all other qubits.

Next consider the classical communication used to convey measurement angles during the protocol. The computation of $\delta_i$ is composed of three terms. The first two terms, $(-1)^{s_{f^{-1}(i)}} \phi_i$ and $\sum_{j: \, i\in N_G(f(j))} s_j \pi$, may depend implicitly on $b_k$ and $\delta_k$ for $k<i$, and on $r$ and $x$. However, note that the communication received up to Step $i$ is independent of $ \theta'_i$, the third term of $\delta_i$. Since $ \theta'_i$ is uniformly random over $A$, $\delta_i$ must also be uniformly random and uncorrelated with previous communication sent to Bob. Thus, all communication in the protocol is uniformly random and uncorrelated, once the random keys ($x,r,\theta$) are traced out, independent of the actions of Bob.  An identical argument holds for Protocol ~\ref{prot:GBQCC}, except that all $m$ qubits are assigned measurements, and hence $n$ is not revealed.
\end{proof}

We note that the above definition is equivalent to a simulator based definition, since once $L(X)$ is fixed, the distribution of messages Bob receives is also fixed. Hence, Alice could be replaced by a simulator with access only to $L(X)$, and this substitution could not be detected by Bob. A more detailed treatment of simulator based definitions and composable security can be found in \cite{DFPR13}.

\begin{algorithm}[h!]
\caption{Generic Hiding Protocol with Quantum Input and Output}
 \label{prot:GBQCQ}

\begin{itemize}
\item \textbf{Alice's resources} \\
\noindent -- Graph $G$ over $m$ vertices where labeling of vertices are in such a way that the first $n$ qubits are input and the last $n$ qubits are output. \\  
\noindent -- An $n$-qubit input state $\ket I$.\\
\noindent -- A sequence of non-output measurement angles, $\phi=(\phi_i)_{1\leq i \leq (m-n)}$ with $\phi_i \in A$. \\
\noindent -- $m$ random variables $\theta_i$ with values taken uniformly at random from $A$.\\
\noindent -- $n$ random variables $x_i$ and $m-n$ random variables $r_i$ with values taken uniformly at random from $\{0,1\}$. \\
\noindent -- A fixed function $C_G$ that for each non-output qubit $i$ ($1 \leq i \leq m-n$) computes the angle of the measurement of qubit $i$ to be sent to Bob. This function depends on $\phi_i,\theta_i,r_i,x_i$ and the result of the measurements that have been performed so far ($\mathbf{s}_{< i}$). The function $C_G$ also depends on the flow ($f,\preceq$) of the graph $G$. However, since the flow of the graph $G$ is unique (if it exists), we need not take flow as a parameter of the function $C_G$. We have 
\AR{
C_G: \{ 1, \cdots, (m-n)\} \times A \times A \times \{0,1\} \times \{0,1\}\times \{0,1\}^{m-n}  \rightarrow A \\ \\
(i, \phi_i, \theta_i, r_i, x_i, \mathbf{s}) \mapsto (-1)^{x_i+s_{f^{-1}(i)}} \phi_i + \sum_{j: \, i\in N_G(f(j))} s_j \pi + \theta_i + r_i \pi  \\
}
where $x_k$ for $n+1 \leq k \leq m$ and also $s_k$ for any non-defined value of $k$ is set to zero.

\item \textbf{Initial Step}

-- \textbf{Alice's move:} Alice sends Bob the graph $G$ and sets all the values in $\mathbf{s}$ to be $0$. Next she sends $m$ qubits in the order of the labeling of the vertices of the graph, as follows: first, Alice encodes the $n$-qubit input state as
\AR{
\ket e = X_1^{x_1} Z_1(\theta_1) \otimes \ldots \otimes  X_n^{x_n} Z_n(\theta_n) \ket I
}
and sends them as the first $n$ qubits to Bob. She then prepares $m-n$ single qubits in the state $\ket {+_{\theta_i}}$ ($n+1 \leq i \leq  m$) and sends them to Bob as the remaining qubits. \\
-- \textbf{Bob's move:} Bob receives $m$ single qubits and entangles them according to $G$.

\item \textbf{Step $i: \; 1 \leq i \leq (m-n)$}

-- \textbf{Alice's move:} Alice computes the angle $\delta_i=C_G(i, \phi_i, \theta_i, r_i, x_i, \mathbf{s})$ and sends it to Bob.\\ 
-- \textbf{Bob's move:} Bob measures qubit $i$ with angle $\delta_i$ and sends Alice the result $b_i$. \\
-- \textbf{Alice's move:} Alice sets the value of $s_i$ in $\mathbf{s}$ to be $b_i \oplus r_i$. \\

\item \textbf{Step $i: \; m-n+1 \leq i \leq m$}

-- \textbf{Bob's move:} Bob sends qubit $i$ to Alice. \\
-- \textbf{Alice's move:} Alice applies $X^{s_{f^{-1}(i)}}Z^{\sum_{j: \, i\in N_G(f(j))} s_j} Z(\theta_i)$ over qubit $i$. \\
\end{itemize}
\end{algorithm}

\begin{algorithm}[h!]
\caption{Generic Hiding Protocol with Classical Input and Output}
 \label{prot:GBQCC}

\hskip 0.2cm 

\begin{itemize}
\item \textbf{Alice's resources} \\

\noindent -- Graph $G$ over $m$ vertices where labeling of vertices are in such a way that the first $n$ qubits are input and the last $n$ qubits are output. \\  
\noindent -- An $n$-bit input string $c_1,\ldots,c_n$.\\
\noindent -- A sequence of non-output measurement angles, $\phi=(\phi_i)_{1\leq i \leq (m-n)}$ with $\phi_i \in A$. \\
\noindent -- $m$ random variables $\theta_i$ with values taken uniformly at random from $A$.\\
\noindent -- $m$ random variables $r_i$ with values taken uniformly at random from $\{0,1\}$. \\
\noindent -- A fixed function $C_G$ that for each non output qubit $i$ ($1 \leq i \leq m$) computes the angle of the measurement of qubit $i$ to be sent to Bob:  
\AR{
C_G: \{ 1, \cdots, m\} \times A \times A \times \{0,1\} \times \{0,1\}^{m}  \rightarrow A \\ \\
(i, \phi_i, \theta_i, r_i, \mathbf{s}) \mapsto (-1)^{s_{f^{-1}(i)}} \phi_i + \sum_{j: \, i\in N_G(f(j))} s_j \pi + \theta_i + r_i \pi  \\
}
where $s_k$ for any non-defined value of $k$ is set to zero, also $\phi_i=0$ for $m-n+1 \leq i \leq m$.
\item \textbf{Initial Step}

-- \textbf{Alice's move:} Alice sends Bob the graph $G$ and sets all the value in $\mathbf{s}$ to be $0$. Next she sends $m$ qubits in the order of the labeling of the vertices of the graph, as follows: first, Alice encodes the $n$-bit string classical input $c_1,\ldots,c_n$ as state
\AR{
\ket e = Z_1^{c_1} \otimes \ldots \otimes  Z_n^{c_n} ( \ket {+_{\theta_1}} \otimes \cdots \otimes \ket {+_{\theta_n}})~= \ket {+_{\theta_1 + i_1 \pi}} \otimes \cdots \otimes \ket {+_{\theta_n+i_n \pi}}}
and sends them as the first $n$ qubits to Bob. She then prepares $m-n$ single qubits in the state $\ket {+_{\theta_i}}$ ($n+1 \leq i \leq  m$) and sends them to Bob as the remaining qubits. \\
-- \textbf{Bob's move:} Bob receives $m$ single qubits and entangles them according to $G$.
\item \textbf{Step $i: \; 1 \leq i \leq m$}
-- \textbf{Alice's move:} Alice computes the angle $\delta_i=C_G(i, \phi_i, \theta_i, r_i, \mathbf{s})$ and sends it to Bob.\\ 
-- \textbf{Bob's move:} Bob measures qubit $i$ with angle $\delta_i$ and sends Alice the result $b_i$. \\
-- \textbf{Alice's move:} Alice sets the value of $s_i$ in $s$ to be $b_i\oplus r_i$. \\

\end{itemize}
\end{algorithm}

\section{Dummy Qubits}\label{s-dummy}

In order to obtain an intuitive method for achieving verification, we construct an extension of Protocol \ref{prot:GBQCQ} where Alice can also prepare qubits in the state $\ket z$ where $z$ is chosen uniformly at random from $\{0,1\}$. These qubits are called \emph{dummy qubits}, as they will not be part of actual computation. A dummy qubit remains disentangled from the rest of the qubits of the graph state and, as we prove later, the addition of these dummy qubits does not affect the correctness or blindness of the hiding protocol. These dummy qubits are measured with random angles which again will not affect the actual computation due to the fact that they are disentangled from the rest of the qubits. However, as we demonstrate in the next section, these dummy qubits allow Alice to easily create isolated trap qubits within the resource state to enable verification of the computation. Note that Alice must keep the position of the dummy qubits hidden from Bob (i.e. part of the secret) in order to keep the position of any trap qubits hidden. The addition of the dummy qubits can also be viewed as a method for the blind implementation of the Pauli $Z$ basis measurements. This is due to the fact that their position is hidden from Bob and from his point of view they are measured in the $(X,Y)$ plane as well. However due to their preparation state ($\ket{0}$ or $\ket 1$) through the entangling step, they have the same effect of measuring the corresponding qubit in the Pauli $Z$ basis. Therefore, we use the term blind Pauli $Z$ measurement interchangeably with dummy qubits in the rest of the paper. Due to the addition of dummy qubits, we will assume from now on that $n$ is an upper bound over the number of the input or output qubits. This is required to allow the possibility of having hidden trap or dummy qubits as part of the input or output system. Therefore in the design of the measurement pattern, auxiliary qubits are added to the input and output space in such a way that the actual computation remains intact.

\begin{algorithm}
\caption{Generic Hiding Protocol with Quantum Input and Output and Dummy Qubits}
 \label{prot:DBQC}

\begin{itemize}
\item \textbf{Alice's resources} \\
{\noindent -- Graph $G$ over $m$ vertices where labeling of vertices are in such a way that all the $l$ input qubits are located among the first $n \geq l$ qubits and all the $l$ output qubits are located among the last $n$ qubits. \\
\noindent -- An $l$-qubit input state $\ket I$. \\
\noindent -- The dummy qubits positions, set $D$, chosen among all possible vertices except the $l$ input and $l$ output qubits.\\}
\noindent -- A sequence of non-output measurement angles, $\phi=(\phi_i)_{1\leq i \leq (m-n)}$ with $\phi_i \in A$ where $\phi_i = 0$ for all $i\in D$.\\
\noindent -- $m$ random variables $\theta_i$ with values taken uniformly at random from $A$.\\
\noindent -- $l$ random variables $x_i$, $m-n$ random variables $r_i$ and $|D|$ random variables $d_i$ with values taken uniformly at random from $\{0,1\}$. \\
\noindent -- A fixed function $C_G$ that for each non output qubit $i$ ($1 \leq i \leq m-n$) computes the angle of the measurement of qubit $i$ to be sent to Bob:
\AR{
C_G: \{ 1, \cdots, (m-n)\} \times A \times A \times \{0,1\} \times \{0,1\} \times \{0,1\}^{m-n}  \rightarrow A \\ \\
(i, \phi_i, \theta_i, r_i, x_i, \mathbf{s}) \mapsto (-1)^{x_i+s_{f^{-1}(i)}} \phi_i + \sum_{j: \, i\in N_G(f(j))} s_j \pi + \theta_i + r_i \pi  \\
}
where $x_k$ for $n+1 \leq k \leq m$ and $s_k$ for any non-defined value of $k$ are set to zero. 
\item \textbf{Initial Step}

-- \textbf{Alice's move:} Alice sends Bob the graph $G$ and sets all the value in $\mathbf{s}$ to be $0$. Alice encodes the $l$-qubit input state as
\AR{
\ket e = X_1^{x_1} Z_1(\theta_1) \otimes \ldots \otimes  X_n^{x_l} Z_n(\theta_l) \ket I
}
and positions them among the first $n$ qubits. She then prepares the remaining qubits in the following form
\AR{
\forall i\in D &\;\;\;& \ket {d_i} \\ 
\forall i \not \in D &\;\;\;& \prod_{j\in N_G(i) \cap D} Z^{d_j}\ket {+_{\theta_i}}~=\ket {+_{\theta_i + \sum_{j\in N_G(i) \cap D} d_j \pi}}}

Then Alice sends Bob all $m$ qubits in the order of the labeling of the vertices of the graph.

-- \textbf{Bob's move:} Bob receives $m$ single qubits and entangles them according to $G$.

\item \textbf{Step $i: \; 1 \leq i \leq (m-n)$}

-- \textbf{Alice's move:} Alice computes the angle $\delta_i=C_G(i, \phi_i, \theta_i, r_i, \mathbf{s})$ and sends it to Bob.\\ 
-- \textbf{Bob's move:} Bob measures qubit $i$ with angle $\delta_i$ and sends Alice the result $b_i$. \\
-- \textbf{Alice's move:} Alice sets the value of $s_i$ in $\mathbf{s}$ to be $b_i \oplus r_i$. \\

\item \textbf{Step $i: \; m-n+1 \leq i \leq m$}

-- \textbf{Bob's move:} Bob sends qubit $i$ to Alice. \\
-- \textbf{Alice's move:} Alice applies $X^{s_{f^{-1}(i)}}Z^{\sum_{j: \, i\in N_G(f(j))} s_j} Z(\theta_i)$ to qubit $i$. \\
\end{itemize}
\end{algorithm}

\begin{theorem}
\label{thm:CorrectnessDummy}Assume Alice and Bob follow the steps of
Protocol~\ref{prot:DBQC}. Then the outcome obtained is the same as if the computation took place over the graph $G$ after removal of the dummy vertices in $D$, the set of positions of dummy qubits in $G$. 
\end{theorem}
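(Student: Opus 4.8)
The plan is to reduce the statement to the correctness already proved in Theorem~\ref{thm:Correctness}, by showing that after Bob's entangling step the dummy qubits have completely factored out of the global state, leaving the non-dummy register in exactly the state that a run of Protocol~\ref{prot:GBQCQ} on $G\setminus D$ would produce. Once this factorisation is in hand, the dummy qubits are disentangled product states that are subsequently ``measured'' in the $(X,Y)$ plane without touching the logical register, and the claim follows.

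The single fact that drives the argument is the behaviour of \textsc{ctrl}-$Z$ on a computational-basis control: for $d\in\{0,1\}$ one has $\ket d\otimes\ket\psi\mapsto\ket d\otimes Z^{d}\ket\psi$. Since all the \textsc{ctrl}-$Z$ gates commute, I would apply the dummy-incident gates first. Each dummy qubit $j\in D$ then remains in its prepared product state $\ket{d_j}$, and its only effect on the rest of the system is to deposit a byproduct $Z^{d_j}$ on every neighbour $i\in N_G(j)$. Thus, immediately after entangling, the state is $\big(\bigotimes_{j\in D}\ket{d_j}\big)\otimes\ket{\Psi}$, where $\ket{\Psi}$ is the encoded graph state built from the edges of $G$ not incident to any dummy vertex, but carrying an additional factor $\prod_{j\in N_G(i)\cap D} Z^{d_j}$ on each non-dummy qubit $i$.

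The second step is to notice that these byproducts are exactly what Alice has anticipated in her preparation: each non-input, non-dummy qubit $i$ is sent in the state $\prod_{j\in N_G(i)\cap D}Z^{d_j}\ket{+_{\theta_i}}$, and because $Z$ commutes with $Z(\theta_i)$ and squares to the identity, the prepared $Z^{d_j}$ and the entangling-induced $Z^{d_j}$ annihilate in pairs. After this cancellation $\ket{\Psi}$ is precisely the graph state associated with the open graph $(G\setminus D)$, carrying the same one-time-pad encoding $\ket e$ and the same rotations $\theta_i$ that Protocol~\ref{prot:GBQCQ} uses on $G\setminus D$. Because the dummy qubits are now disentangled and in a fixed product state, the later measurement rounds, angle computations $\delta_i$ and corrections applied to the non-dummy qubits coincide with a genuine run of Protocol~\ref{prot:GBQCQ} on $G\setminus D$, so Theorem~\ref{thm:Correctness} for that graph delivers the stated outcome.

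The step I expect to be the main obstacle is the bookkeeping of the byproduct operators, rather than the core state calculation. Two points need care: first, one must verify that the cancellation remains exact once the $Z(\theta_i)$ rotations and, for any dummy adjacent to the input block, the one-time-pad factors $X^{x_i}Z(\theta_i)$ are taken into account, since the preparation only pre-compensates the non-input qubits; and second, one must confirm that the dependency structure used on $G$ restricts to the correct flow on $G\setminus D$, i.e.\ that the uniformly random outcomes of the disentangled dummy measurements do not propagate into the adaptive angles or the final Pauli corrections of the logical qubits. Establishing this invariance of the feed-forward is where the argument must be made carefully.
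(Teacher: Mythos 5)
Your proposal follows essentially the same argument as the paper's proof: \textsc{ctrl}-$Z$ with a computational-basis control deposits a $Z^{d_j}$ byproduct on each neighbour of a dummy, these byproducts are cancelled by Alice's pre-compensation since $Z$ is self-inverse and commutes with $Z(\theta_i)$, and the dummies remain disentangled product states whose random measurement outcomes have no consequence for the computation on $G\setminus D$, reducing the claim to Theorem~\ref{thm:Correctness}. The two bookkeeping issues you flag --- dummies adjacent to the (non-pre-compensated) input block, and the restriction of the flow and feed-forward to $G\setminus D$ --- are glossed over by the paper's own proof as well, so your plan is, if anything, slightly more careful than the published argument.
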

\begin{proof}
The proof is similar to the proof of Theorem \ref{thm:Correctness}, the only new element is the effect of the dummy qubits. If a dummy qubit is in the state $\ket 0$ then in the entangling step this qubit does not affect the state of the other qubits. However, if the dummy qubit is in the state $\ket 1$ then the entangling operation will introduce a Pauli $Z$ rotation on all the neighboring qubits in $G$. Hence a qubit $i\not \in D$ will be affected by the operator $\prod_{j\in N_G(i) \cap D} Z^{d_j}$. In the initial step, Alice already applied the operation $\prod_{j\in N_G(i) \cap D} Z^{d_j}$ over the prepared qubits and therefore all qubits $i\not \in D$ are in the desired state $\ket {+_{\theta_i}}$, since $Z$ operator is self-inverse. Moreover all the dummy qubits are unentangled with the rest of qubits and are measured in a random basis with no consequences for the part of the computation taking place over the graph $G$ after removing vertices $D$. 
\end{proof}

\begin{theorem}\label{t-DBQC}
The hiding protocol with dummy qubits, Protocol \ref{prot:DBQC}, is blind while leaking at most $G$.
\end{theorem}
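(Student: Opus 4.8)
The plan is to reproduce the structure of the proof of Theorem~\ref{t-blind}, checking the two requirements of Definition~\ref{defn:blind} (that the distribution of Bob's classical data depends only on $G$, and that, conditioned on this, his quantum state is fixed), while isolating the one genuinely new ingredient of Protocol~\ref{prot:DBQC}: the dummy qubits and their associated random bits $\{d_i\}_{i\in D}$. The guiding principle is that dummy qubits should be invisible to the blindness analysis in exactly the way that Theorem~\ref{thm:CorrectnessDummy} shows they are invisible to the correctness analysis.

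First I would record the structural observation that drives everything: the variables $d_i$ never appear as arguments of $C_G$, so the classical messages $\delta=\{\delta_i\}$ are computed by the same formula as in Protocols~\ref{prot:GBQCQ} and~\ref{prot:GBQCC}. Consequently the verification of the first condition is verbatim that of Theorem~\ref{t-blind}: setting $\theta'_i=\theta_i+r_i\pi$, which is uniform on $A$ because $\theta_i$ is, each $\delta_i=\phi'_i+\theta'_i$ is uniformly distributed and independent of $\phi$, where for the dummy indices $i\in D$ one additionally invokes $\phi_i=0$. Hence the distribution of $\delta$ depends only on $G$.

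For the second condition I would fix $\delta$ and examine the three families of qubits Alice sends. For a dummy qubit $i\in D$ the sent state is $\ket{d_i}$ with $d_i$ uniform and, crucially, independent of $\delta$ (since $d_i\notin C_G$), so averaging over $d_i$ gives $\mathbb{I}/2$. For a non-dummy, non-input qubit $i$, the sent state is $\ket{+_{\theta_i + c_i\pi}}$ with $c_i=\sum_{j\in N_G(i)\cap D} d_j$; here I would apply the original $r_i$-tracing argument, noting that for fixed $\delta_i$ the two branches $r_i=0,1$ produce $\ket{+_{\alpha}}$ and $\ket{-_{\alpha}}$ with $\alpha=\delta_i-\phi'_i+c_i\pi$, which are orthogonal, so their average is $\mathbb{I}/2$. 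The key point is that the dummy-induced offset $c_i\pi$ is a fixed shift appearing identically in both branches and therefore does not disturb the orthogonality. The input qubits are handled exactly as in Theorem~\ref{t-blind} via the one-time pad (tracing over $x_i$ and $r_i$).

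The step I expect to be the main obstacle is establishing that the \emph{joint} state is $\mathbb{I}/2^{m}$, rather than merely that each marginal is maximally mixed, because a dummy qubit $j$ and each of its neighbours $i$ are classically correlated through the shared bit $d_j$. I would resolve this by ordering the partial traces: for \emph{every} fixed value of the dummy register $d$, averaging over $\{r_i\}$ (legitimate, since conditioning on $\delta$ leaves $d$ uniform and leaves two equiprobable $r_i$-branches for each non-dummy qubit) already sends the entire non-dummy register to $\mathbb{I}$, independently of $d$; the residual dummy register $\bigotimes_{i\in D}\ket{d_i}$ then averages to $\mathbb{I}$ over the uniform, unconstrained $d$. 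This decoupling shows the full state Bob holds equals $\mathbb{I}/2^{m}$, fixed and independent of $\phi$ and of the input, completing the second condition and hence the proof.
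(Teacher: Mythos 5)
Your proposal is correct and follows essentially the same route as the paper's proof: both mirror Theorem~\ref{t-blind}, showing the $\delta_i$ are uniform (since the dummy bits never enter $C_G$) and then, for fixed $\delta$, tracing over $r_i$ (and over $d_i$ for dummies) to leave every qubit maximally mixed. The only differences are cosmetic and in your favour --- the paper absorbs the dummy offset into a redefined $\theta'_i = \theta_i + \pi r_i + \pi\sum_{j\in N_G(i)\cap D} d_j$ while you keep it as an explicit shift $c_i\pi$, and your ordering-of-traces argument (averaging over $\{r_i\}$ for each fixed $d$ before averaging over $d$) makes rigorous the joint-state decorrelation that the paper asserts in a single closing sentence (``the qubits obtained by Bob \ldots are not correlated with each other'').
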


\begin{proof}
Proof follows along similar lines of Theorem \ref{t-blind}. We define $\theta'_{i} = \theta_{i} + \pi r_{i} + \pi \sum_{j\in N_G(i) \cap D} d_i$. Alice's total communication to Bob consists of the initial quantum states, which we can rewrite as $|+_{\theta'_{i}-\pi r_{i}}\rangle$ if the qubit is not a dummy qubit or $\in_R \{\ket{0},\ket{1}\}$ if it is a dummy qubit, and the measurement angles which are set to be $\delta_{i} = \phi'_{i} + \theta'_{i} - \pi \sum_{j\in N_G(i) \cap D} d_i$. As before, the values of $\delta_i$ are uniformly random since $\theta_i'$ are uniformly random, and for any fixed values of $\delta_i$ 
tracing over all $r_i$, we obtain the initial quantum state for each qubit as either 
\begin{align*}
\frac{1}{2} | +_{\theta'_{i}} \rangle \langle +_{\theta'_{i}} | + \frac{1}{2} | -_{\theta'_{i}} \rangle \langle -_{\theta'_{i}}| = \frac{\mathbb{I}}{2}
\end{align*}
if the qubit was not dummy, and 
\begin{align*}
\frac{1}{2}\ket{0}\bra{0} + \frac{1}{2}\ket{1}\bra{1} = \frac{\mathbb{I}}{2}
\end{align*}
if the qubit was a dummy. Hence the qubits obtained by Bob are always in the maximally mixed state and are not correlated with each other. 
\end{proof}

\section{Universal Resource States} \label{s-resource}

During a hiding protocol Bob learns the graph of entanglement, $G$, however it was shown in \cite{BFK09} that it is possible for Alice to choose a family of graphs corresponding to what were termed \emph{brickwork states} such that blindness of the angles, as defined before, will permit Alice to hide the unitary operator that the protocol is implementing, revealing only an upper bound on the dimensions of the circuit required to implement it. The key element to achieve this is the use of those universal resources for MBQC \cite{NDMB07} that are generic, hence revealing no information about the structure of the underlying computation, except the bounds on the size of input and the depth of the computation. Moreover to make the protocol practical from Alice's point, it is desirable to restrict the class of measurement angles, so that the required class of random qubits prepared by Alice is also restricted. Note that exact universal blind quantum computing could be achieved if Alice could prepare separable single qubit states $\ket {+_{\theta}}$ with $\theta$ chosen randomly in $[0,2\pi)$ and if Bob could make any measurement with angles in $[0,2\pi)$. Such a model requires Alice to communicate random real angles to Bob, and hence such a setting is unattractive from a communications resources point of view. Similar to the quantum circuit scenario, by the Solovay-Kitaev theorem, a finite set of angles (for instance a set that corresponds to Hadamard and $\frac{\pi}{8}$-Phase gates) can be used to efficiently approximate any single qubit unitary operator.\footnote{More precisely, the Solovay-Kitaev theorem states that if the subgroup generated by some subset of $SU(2)$ operators is dense in $SU(2)$, then the approximation converges exponentially quickly to any element of $SU(2)$ in the number of these operators from a smaller set one uses to approximate.} For the rest of this paper we will restrict our attention to approximate universality and we use the fact that a large family of graph states are approximately universal if one restricts the set of angles to be in the set $\{0, \pm \pi/4,  \pm \pi/2\}$ \cite{generator04}. We give two such examples below.

\begin{definition} \label{defn:brick} A brickwork state $\mathcal{G}_{n \times m}$, where $m \equiv 5 \pmod 8$, is an entangled state of~$n\times m$ qubits constructed as follows:
\begin{enumerate}
\item Prepare all qubits in state $\ket +$ and assign to each qubit an index $(i,j)$, $i$ being a row ($i \in [n]$) and $j$ being a column ($j \in [m]$).
\item For each row, apply the operator \textsc{ctrl}-$Z$ on qubits $(i,j)$ and $(i,j+1)$ where $1\leq j \leq m-1$.
\item For each column $j \equiv 3 \pmod 8$ and each odd row $i$, apply the operator \textsc{ctrl}-$Z$ on qubits $(i,j)$ and $(i+1,j)$ and also on qubits $(i,j+2)$ and $(i+1,j+2)$.
\item For each column $j \equiv 7 \pmod 8$ and each even row $i$, apply the operator \textsc{ctrl}-$Z$ on qubits $(i,j)$ and $(i+1,j)$ and also on qubits $(i,j+2)$ and $(i+1,j+2)$.
\end{enumerate}
We will refer to the underlying graph of a brickwork state as the brickwork graph and denote it with the same notation as $\mathcal{G}_{n \times m}$, see Figure \ref{fig:brick}. 
\end{definition}
\begin{figure}[h]
\begin{center}
\includegraphics[width=0.75\columnwidth]{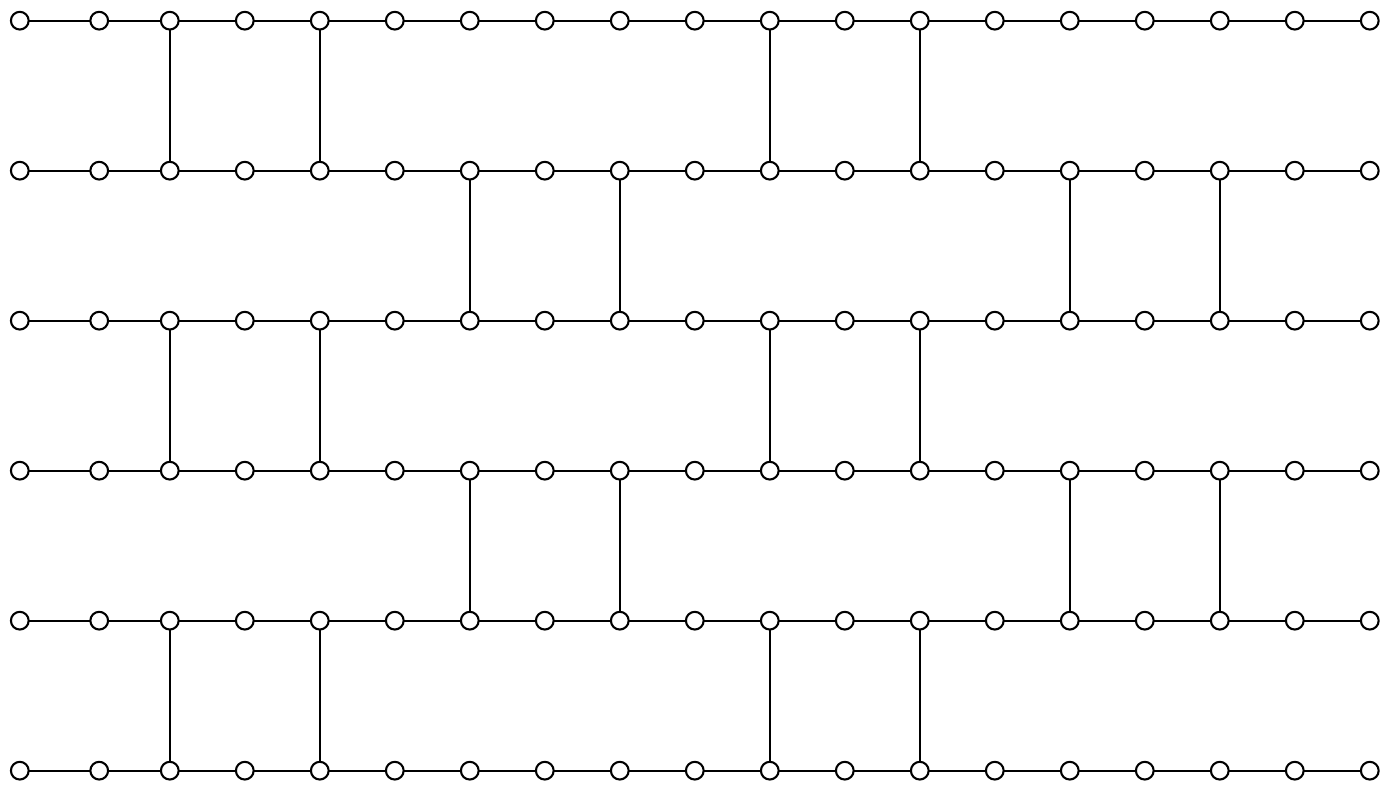}
\caption{The brickwork state, $\mathcal{G}_{6 \times 19}$. Qubits are arranged according to layer $x$ and row $y$, corresponding
to the vertices in the above graph, and are originally in state $\ket{+}$. \textsc{ctrl}-$Z$ gates are then performed between qubits which are joined by an edge. A similar resource state was proposed in \cite{CLN05}.}
\label{fig:brick}
\end{center}
\end{figure}

\begin{theorem}[Universality \cite{BFK09}]
\label{thm:universal} The \graphstate $\mathcal{G}_{n \times m}$ is
universal for quantum computation. Furthermore, we only require
single-qubit measurements under the angles $\{0, \pm \pi/4,  \pm \pi/2\}$ to achieve approximate universality,
and measurements can be done layer-by-layer.
\end{theorem}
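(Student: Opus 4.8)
The plan is to reduce universality of $\mathcal{G}_{n\times m}$ to the two standard ingredients of any universal gate set: an arbitrary single-qubit rotation together with one fixed two-qubit entangling gate. The geometry of the brickwork graph is tailored for exactly this decomposition. Each of the $n$ rows is a linear chain of qubits joined by \textsc{ctrl}-$Z$, which in MBQC acts as a ``wire'' carrying one logical qubit; the vertical \textsc{ctrl}-$Z$ edges that appear periodically (at columns $j\equiv 3$ and $j\equiv 7 \pmod 8$) form the ``bricks'' that couple a row to its neighbour. I would therefore treat each row as transporting one logical qubit from left to right, and each brick as the site where two adjacent logical qubits interact.

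For the single-qubit part, I would invoke the standard measurement-calculus fact that measuring a qubit of a linear cluster in the basis $\{\ket{+_\phi},\ket{-_\phi}\}$ implements, up to a Pauli byproduct, the gate $J(\phi)=H\,Z(\phi)$ on the logical qubit propagated to the next site. Since any single-qubit unitary admits an Euler-type decomposition $U = e^{i\alpha}J(0)J(\beta)J(\gamma)J(\delta)$, a short segment of a row realises an arbitrary single-qubit unitary when arbitrary angles are allowed, which together with the entangling gate below gives exact universality. To obtain the claimed \emph{approximate} universality from the finite angle set $\{0,\pm\pi/4,\pm\pi/2\}$, I would observe that the corresponding gates $J(0)=H$, $J(\pm\pi/4)$ and $J(\pm\pi/2)$ generate a group containing $H$, $S$ and $T$ (the $\frac{\pi}{2}$- and $\frac{\pi}{4}$-phase gates), which is dense in the single-qubit unitary group modulo global phase; the Solovay--Kitaev theorem quoted in the footnote then yields efficient approximation of any single-qubit unitary.

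For the entangling part, I would analyse one brick in isolation, namely the width-bounded subgraph spanning two adjacent rows and joined by the vertical \textsc{ctrl}-$Z$ edges. Tracking the propagation of logical operators through these edges, this pattern implements a fixed entangling gate (a \textsc{ctrl}-$Z$ up to local Cliffords) between the two logical qubits, while the width of the brick leaves room to absorb arbitrary single-qubit rotations on each wire. Composing bricks across rows and columns then produces an arbitrary circuit of single-qubit and entangling gates, establishing universality. Finally, for the layer-by-layer claim I would exhibit the flow of $\mathcal{G}_{n\times m}$ --- the map $f$ sending each non-output qubit to its right neighbour in the same row, with partial order increasing from left to right --- and note that all qubits in a fixed column are mutually independent under this order, so each column may be measured simultaneously.

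The main obstacle is the explicit brick analysis: verifying that the brickwork graph genuinely admits a flow and that the measurement pattern over one brick period realises the intended (arbitrary single-qubit)-plus-entangling logical gate with the correct byproduct structure. This amounts to careful bookkeeping of the Pauli-correction propagation through the offset vertical edges, and is precisely where the particular column spacing ($3,7 \pmod 8$) is used, to guarantee that neighbouring wires interact in the desired pattern without leaving unwanted residual entanglement between logical qubits.
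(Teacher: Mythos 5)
The paper itself offers no proof of this theorem---it is imported wholesale from \cite{BFK09}---and your sketch reconstructs essentially the argument given in that reference: rows of the brickwork graph act as logical wires implementing $J(\phi)=HZ(\phi)$ gates (so that the Euler decomposition $U=e^{i\alpha}J(0)J(\beta)J(\gamma)J(\delta)$ plus Solovay--Kitaev handles arbitrary single-qubit unitaries from the finite angle set), individual bricks implement a fixed entangling gate between adjacent wires, and the flow $f((i,j))=(i,j+1)$ with its column-wise partial order gives the layer-by-layer claim. Your proposal is correct and takes essentially the same approach as the cited proof.
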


\begin{proof}
The proof is straightforward (see details in \cite{BFK09}) based on constructing measurement patterns for elements of an approximate universal gates set that could be tiled together as a brickwork states as depicted in Figures \ref{Fig:GeneralRotation}.
\end{proof}

\begin{figure} \centering
   \includegraphics[width=\columnwidth]{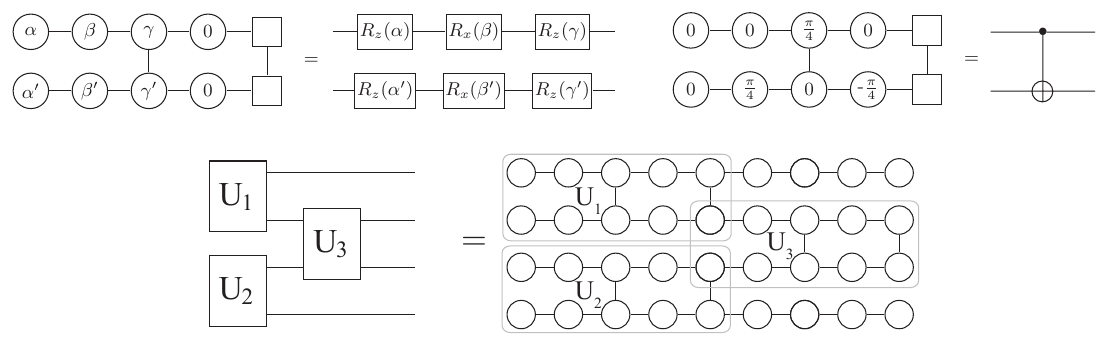}
\caption{Measurement patterns implementing arbitrary single qubit rotations and the CNOT operator. These patterns can be composed within the brickwork state, as shown in the lower portion of the figure.}
    \label{Fig:GeneralRotation}
\end{figure}

Let us denote vertices of a brickwork graph $\mathcal{G}_{n\times m}$ by $(i,j)$ (where $1\leq i \leq n, 1 \leq j \leq m$), then it is easy to verify that the unique flow function of $\mathcal{G}$ is defined by: 
\begin{align*}
f_{\mathcal{G}}( (i,j) ) =  (i,j+1) 
\end{align*}
That is to say, the flow of each vertex in the graph is from its immediate left neighbor in the same row. The corresponding partial order $\prec_{\mathcal{G}}$ is defined as the collection of sets $L_j$ of all vertices in the $j$th column of the brickwork graph
\begin{align*}
L_j = \{(x,y)| 1\leq x \leq n, y = j\}.
\end{align*}
Now suppose Alice has in mind a unitary operator $U$ of size $2^n \times 2^n$ and the $n$-qubit input state $\ket I$. Due to Theorem \ref{thm:universal} there exist an integer $m$ and angles $\{ \phi_{i,j} \}_{1\leq i \leq n, 1 \leq j \leq m} \in A$ such that the measurement pattern with angles $\{\phi_{i,j}\}$ over the brickwork state $\mathcal{G}_{n \times m}$, where the first $n$ qubit are set to be in the state $\ket I$, approximates ~$U\ket I$. Therefore the last $n$ qubits after the measurements of the first $m-n$ qubits and application of the corresponding corrections induced by flow are in a state which can be made arbitrarily close to $U\ket I$. We can simply adapt the generic hiding protocol to implement this measurement pattern blindly as presented in the \cite{BFK09}. 

As mentioned in Section \ref{s-dummy}, in order to construct a verification scheme we make use of dummy qubits. While this presents a simple mechanism to achieve isolated trap qubits, the presence of trap and dummy qubits disrupts the computation. However this can be fixed through a simple modification of the brickwork state.

\begin{definition} \label{defn:brickcylinder} A cylinder brickwork state $\mathcal{G}^C_{n \times m}$ is a modification of the brickwork state of size ${n \times m}$, for even $n$, where the first and the last rows are connected such that the regular brickwork structure is preserved, while introducing rotational symmetry. We will refer to the underlying graph of a cylinder brickwork state as the cylinder brickwork graph and denote it with the same notation as $\mathcal{G}^C_{n \times m}$ (see Figure \ref{fig:cylinder}). A \emph{tape}, $\mathcal{T}_i$ in a cylinder brickwork graph is the subgraph induced by all the nodes of $i$th and $(i+1)$th rows.
\end{definition}

\begin{figure}[h]
\begin{center}
\scalebox{0.5}[0.5]{\includegraphics{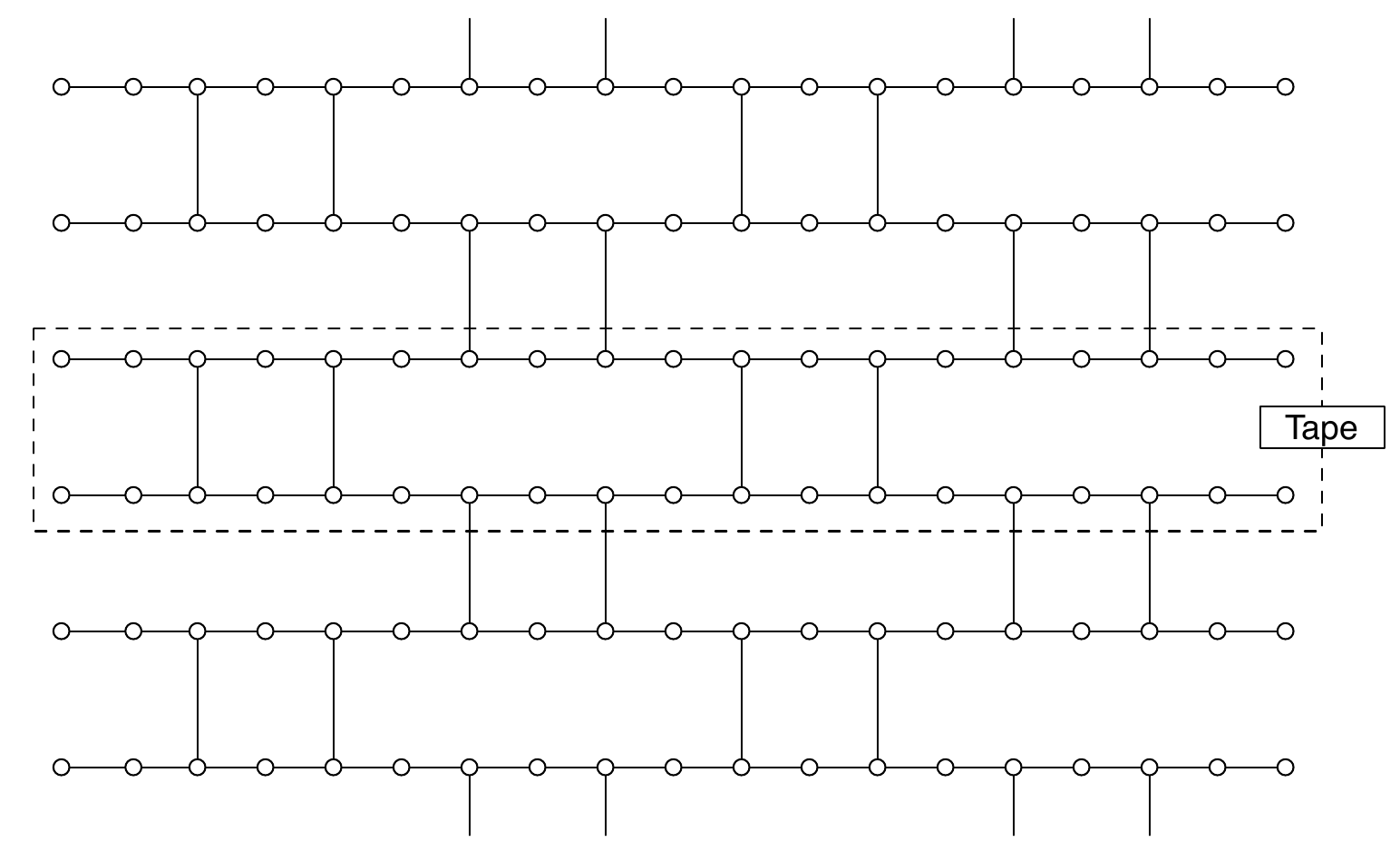}}
\caption{Th cylinder brickwork state $\mathcal{G}^C_{6 \times 19}$.}
\label{fig:cylinder}
\end{center}
\end{figure}

The cylinder brickwork state allows for a simple construction for trap-based verification, as discussed in Section \ref{s-verification}. Next we introduce another generic family called \emph{dotted-complete graph states} which enables significant amplification of the probability of detecting deviations from the computation, particularly in the case of quantum output, as discussed in Section \ref{s-amplification}. The basic idea behind this new universal resource state is that it can be partitioned blindly into smaller universal resource states, one of which will be used for the computation, while the others will be used as traps for verification purposes (see later). To begin with, we need to introduce the graphs which we will use, and prove that they have some special properties.

\begin{definition} 
We define the operator $\sim(G)$ on graph $G$ to be the operator which transforms a graph $G$ to a new graph denoted as $\tilde{G}$ by replacing every edge in $G$ with a new vertex connected to the two vertices originally joined by that edge. Let $K_N$ denote the complete graph of $N$ vertices, we call the quantum state corresponding to the graph $\tilde{K}_N$ the \emph{dotted-complete graph state} denoted with $\tilde{\mathcal{K}}_N$. We denote the set of vertices of $\tilde{K}_N$ previously inherited from $K_N$ as $P(\tilde{K}_N)$, and denote the vertices added by the $\sim()$ operation by $A(\tilde{K}_N)$. The number of the vertices in the $\tilde{K}_{N}$ graph is then equal to $N (N+1)/2$.
\end{definition}

\begin{figure}[h!]
\begin{center}
\includegraphics{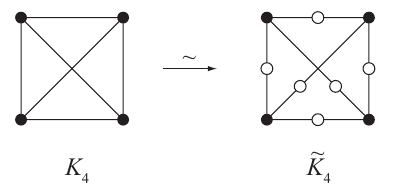}
\end{center}
\caption{An example of the relationship between a complete graph $K_4$ and the corresponding dotted-complete graph $\tilde{K}_4$. The vertices in black in $\tilde{K}_4$ denote the set $P(\tilde{K}_4)$, while the white vertices correspond to $A(\tilde{K}_4)$.}
\end{figure}
The next definition and lemmas will be used in manipulation of dotted-complete graph states.  

\begin{definition} 
We define the \emph{bridge} operator on a vertex $v$ of degree 2 on graph $G$ to be the operator which connects the two neighbors of $v$ and then removes vertex $v$ and any associated edges from $G$. We define the \emph{break} operator on a vertex $v$ of graph $G$ to be the operator which removes vertex $v$ and any associated edges from $G$. Let $G$ be a graph on $m$ vertices. Then we say that $G$ is \emph{$n$-universal}, for $n\leq m$, if and only if any graph of $n$ vertices can be obtained from $G$ through a sequence of bridges and breaks. 
\end{definition}

\begin{lemma}\label{lem:m-universal}
$\tilde{K}_N$ is \emph{$N$-universal}, and the bridge and break operations used to obtain a target graph need only be performed on vertices in $A(\tilde{K}_N)$.
\end{lemma}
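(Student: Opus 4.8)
The plan is to exploit the very simple local structure of $\tilde{K}_N$. The operator $\sim(\cdot)$ turns each edge $\{u,v\}$ of $K_N$ into a length-two path $u - w_{uv} - v$, so in $\tilde{K}_N$ the primary vertices in $P(\tilde{K}_N)$ are pairwise non-adjacent, and for every unordered pair $\{u,v\}$ of primary vertices there is exactly one added vertex $w_{uv} \in A(\tilde{K}_N)$ whose neighbourhood is precisely $N_{\tilde{K}_N}(w_{uv}) = \{u,v\}$. In particular every added vertex has degree exactly $2$, and $A(\tilde{K}_N)$ is an independent set (added vertices are joined only to primary vertices). These two observations are what make the bridge operation applicable to each dot and prevent the operations from interfering with one another.

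Given an arbitrary target graph $H$ on $N$ vertices, I would first identify $V(H)$ with $P(\tilde{K}_N)$ via any bijection, and then process the added vertices one at a time according to the rule: for each pair $\{u,v\}$, bridge $w_{uv}$ if $\{u,v\} \in E(H)$ and break $w_{uv}$ otherwise. A bridge on $w_{uv}$ deletes the dot and inserts the single edge $\{u,v\}$ between its two primary neighbours, whereas a break merely deletes the dot and leaves $u,v$ with no edge between them (they had no direct edge in the first place). Since $K_N$ is simple there is exactly one dot per pair, so no multi-edges can ever arise.

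Correctness rests on a degree invariant, which is really the only point that needs checking: I would argue that throughout the entire sequence every not-yet-processed added vertex still has degree exactly $2$, so the precondition of the bridge operation is met each time it is invoked. This holds because every operation acts on an added vertex, and distinct added vertices are non-adjacent, so deleting $w_{uv}$ or inserting an edge between the primaries $u,v$ changes no other dot's neighbourhood, leaving each remaining dot adjacent to exactly its own pair of primaries. Because no vertex of $P(\tilde{K}_N)$ is ever touched, all $N$ primary vertices survive, and after all $\binom{N}{2}$ dots have been processed the resulting graph has vertex set $P(\tilde{K}_N)$ with an edge between $u$ and $v$ precisely when $w_{uv}$ was bridged, i.e.\ precisely when $\{u,v\} \in E(H)$. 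Hence the final graph equals $H$. Since every bridge and break in the construction was performed on a vertex of $A(\tilde{K}_N)$, both assertions of the lemma follow. I do not expect a serious obstacle; the one place to be careful is establishing the degree-$2$ invariant together with the absence of multi-edges, as that is exactly what guarantees the bridge operation is legal at every step.
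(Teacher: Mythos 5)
Your proposal is correct and takes essentially the same approach as the paper's proof: associate the target graph's vertices with $P(\tilde{K}_N)$, then bridge the intermediate vertex $w_{uv}$ when $\{u,v\}$ is an edge of the target and break it otherwise. Your explicit verification of the degree-$2$ invariant (ensuring each bridge is legal throughout the sequence) is a detail the paper leaves implicit, but it does not change the argument.
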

\begin{proof}
Given any graph $G$  on $N$ vertices, associate each vertex $u_i$ in $G$ with a vertex $v_i$ in $P(\tilde{K}_N)$. Each pair of vertices $(v_i,v_j)$ in $P(\tilde{K}_N)$ is connected through an intermediate vertex of degree 2 in $A(\tilde{K}_N)$. Thus by bridging over the intermediate vertex if $u_i$ and $u_j$ are joined by an edge and breaking the intermediate vertex otherwise, $\tilde{K}_N$ reduces to $G$. As this is true for all graphs $G$ on $N$ vertices, $\tilde{K}_N$ is $N$-universal.
\end{proof}

\begin{lemma}\label{lem:partition}
Given a partitioning of the vertices $P(\tilde{K}_N)$ into $n$ sets $\{P_i\}$ containing $N_i$ vertices respectively, by applying a sequence of break operations only, it is possible to transform $\tilde{K}_N$ into $n$ disconnected graphs $\tilde{k}_i$ such that each one of them are of the form  $\tilde{K}_{N_i}$ and $P(\tilde{k}_i) = P_i$. 
\end{lemma}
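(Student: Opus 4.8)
The plan is to use break operations on precisely those added vertices whose two neighbours lie in different parts, leaving every other vertex of $\tilde{K}_N$ untouched; note that, in contrast to Lemma~\ref{lem:m-universal}, no bridges are needed here. Recall that each vertex $w\in A(\tilde{K}_N)$ has degree two and corresponds to a unique unordered pair of primary vertices $\{v_a,v_b\}\subseteq P(\tilde{K}_N)$, namely the edge of $K_N$ it was introduced to subdivide. I would classify each such $w$ by the parts of its neighbours: if $v_a$ and $v_b$ lie in a common part $P_i$, I keep $w$; if they lie in distinct parts, I apply the break operation to $w$. Since no primary vertex is ever broken, the partition $\{P_i\}$ survives intact on the primary vertices, so the identity $P(\tilde{k}_i)=P_i$ will hold by construction.

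Next I would identify the graph that remains. After the breaks, the only surviving added vertices are those joining two primary vertices within a single part. Fixing a part $P_i$, the retained added vertices among its $N_i$ primary vertices are in bijection with the unordered pairs drawn from $P_i$, that is, with the edges of the complete graph $K_{N_i}$ on $P_i$. Hence the subgraph spanned by $P_i$ together with these retained added vertices is exactly the graph obtained by subdividing every edge of $K_{N_i}$ with a fresh degree-two vertex, which is by definition the dotted-complete graph $\tilde{K}_{N_i}$ with primary set $P_i$. This gives each $\tilde{k}_i$ the required form.

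The one point I expect to require genuine care is disconnectedness across parts, and I would argue it via the bipartite structure of $\tilde{K}_N$. Every edge of $\tilde{K}_N$ joins a vertex of $P(\tilde{K}_N)$ to a vertex of $A(\tilde{K}_N)$, so primary vertices are never adjacent to one another; after the breaks, every surviving added vertex has, by the selection rule, both of its neighbours inside one part. Consequently a primary vertex of $P_i$ has all of its surviving neighbours among the within-$P_i$ added vertices, and each of those connects only back into $P_i$; a short induction on path length then shows that no path in the surviving graph can leave the vertices of a single part together with their retained added vertices. Therefore the $n$ pieces $\tilde{k}_i$ are pairwise disconnected, and the prescribed sequence of break operations transforms $\tilde{K}_N$ into the disjoint union of the $\tilde{K}_{N_i}$, as claimed.
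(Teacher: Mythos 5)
Your proof is correct, and at the level of operations it performs exactly what the paper does: both proofs break precisely the added vertices whose two neighbours lie in different parts of the partition, and nothing else. The route to the conclusion, however, differs. The paper reasons \emph{upstream}, at the level of $K_N$: it partitions the vertices of $K_N$, removes the inter-part edges to obtain disjoint cliques $\{k_i = K_{N_i}\}$, and then invokes the key observation that removing an edge of $K_N$ before applying the $\sim()$ operator is equivalent to applying a break operation to the corresponding added vertex afterwards; all structural claims ($\tilde{k}_i = \tilde{K}_{N_i}$, $P(\tilde{k}_i) = P_i$, and disconnectedness) are then inherited for free from the obvious picture in $K_N$. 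You instead work entirely \emph{downstream}, on $\tilde{K}_N$ itself: you identify the surviving added vertices within a part $P_i$ with the unordered pairs from $P_i$ (i.e.\ the edges of $K_{N_i}$), and you prove disconnectedness explicitly through the bipartite structure of $\tilde{K}_N$ and an induction on path length. Your version is more elementary and self-contained --- it needs no commutation statement relating edge removal to breaks --- and it makes explicit the disconnectedness argument that the paper leaves implicit; the paper's version is shorter precisely because the commutation observation transfers the whole problem to $K_N$, where every claim is immediate. The same commutation idea is also what the paper reuses implicitly elsewhere (e.g.\ in Lemma~\ref{lem:m-universal}), so it is worth knowing both phrasings.
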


\begin{proof}
As the vertices $P(\tilde{K}_N)$ are associated with a corresponding vertex in $K_N$, the vertices of $K_N$ can by partitioned into the sets $\{P_i\}$. As $K_N$ is the complete graph the vertices within each partition $P_i$ form a clique. Thus by removing edges between the partitions the resulting graph is composed of $n$ disconnected graphs $\{k_i = K_{N_i}\}$ such that the vertices in $k_i$ are the vertices in $P_i$. As removing an edge before applying the $\sim()$ operator is equivalent to applying a break operation after the $\sim()$ operator there exists a corresponding sequence of break operations, such that the resulting graph is $\sim(\{k_i\})= \{\tilde{k}_i\}$. As $\tilde{k}_i=\,\sim(k_i)$, it follows that $P(\tilde{k}_i)=P_i$ and since $k_i = K_{N_i}$ then $\tilde{k}_i = \tilde{K}_{N_i}$ as required. 
\end{proof}

\begin{lemma}\label{lem:p-removal}
Given a graph $\tilde{K}_N$, by applying break operators to every vertex in $P(\tilde{K}_N)$ or $A(\tilde{K}_N)$ the resulting graph is composed of the vertices of $A(\tilde{K}_N)$ or $P(\tilde{K}_N)$ respectively and contains no edges.
\end{lemma}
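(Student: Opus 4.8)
The plan is to reduce the entire lemma to a single structural observation about $\tilde{K}_N$: namely that the pair $(P(\tilde{K}_N), A(\tilde{K}_N))$ is a bipartition, with every edge joining a $P$-vertex to an $A$-vertex. First I would unpack the definition of the $\sim()$ operator. Since $\sim()$ replaces each edge of $K_N$ by a fresh vertex adjacent to exactly the two endpoints of that edge, every vertex in $A(\tilde{K}_N)$ has degree exactly $2$ and its two neighbours lie in $P(\tilde{K}_N)$; in particular no two $A$-vertices are adjacent. Simultaneously, the $\sim()$ operation deletes every original edge of $K_N$, so no two vertices of $P(\tilde{K}_N)$ are adjacent either. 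Consequently every edge of $\tilde{K}_N$ has one endpoint in $P(\tilde{K}_N)$ and the other in $A(\tilde{K}_N)$, which is precisely the bipartiteness claim.

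Granting this, both halves of the statement follow immediately from the definition of the break operator, which deletes a vertex together with all of its incident edges. For the first case, breaking every vertex of $P(\tilde{K}_N)$: because each edge of $\tilde{K}_N$ is incident to some $P$-vertex, deleting all of them removes every edge, and the surviving vertex set is exactly $A(\tilde{K}_N)$ carrying no edges. The second case is identical by symmetry: breaking every vertex of $A(\tilde{K}_N)$ also deletes all edges, since each edge is likewise incident to an $A$-vertex, leaving precisely the vertices of $P(\tilde{K}_N)$ with no edges.

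I do not expect any genuine obstacle here; the whole content of the lemma is the bipartite observation, which is a direct rewriting of the definition of $\sim()$. The only point meriting a line of care is that the break operations may be carried out in any order and can never create a new edge, which is trivial since break only ever removes vertices and edges. I would therefore keep the write-up short, isolate the bipartition as the single lemma-driving fact, dispatch the $P$-removal case explicitly, and invoke symmetry between $P(\tilde{K}_N)$ and $A(\tilde{K}_N)$ for the $A$-removal case.
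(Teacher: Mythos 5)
Your proposal is correct and follows essentially the same route as the paper's proof: both rest on the observation that $\tilde{K}_N$ is bipartite between $P(\tilde{K}_N)$ and $A(\tilde{K}_N)$ (since $\sim()$ removes all original edges of $K_N$ and only adds vertices adjacent to $P$-vertices), so breaking every vertex on either side of the bipartition removes all edges. Your write-up merely makes the bipartition explicit as a named intermediate fact, which is a presentational difference, not a mathematical one.
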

\begin{proof}
As the $\sim()$ operation only introduces vertices connected to vertices in $P(\tilde{K}_N)$, every vertex in $A(\tilde{K}_N)$ shares edges only with vertices in $P(\tilde{K}_N)$. Thus when the vertices in $P(\tilde{K}_N)$ and their associated edges are removed by the break operators, the vertices in $A(\tilde{K}_N)$ become disconnected. Similarly, since $\sim()$ removes all edges between vertices in $P(\tilde{K}_N)$, hence every vertex in $P(\tilde{K}_N)$ shares edges only with vertices in $A(\tilde{K}_N)$. Thus when the vertices in $A(\tilde{K}_N)$ and their associated edges are removed by the break operators, the vertices in $P(\tilde{K}_N)$ become disconnected.
\end{proof}

We now extend these results to graph states.

\begin{lemma}\label{lem:state-graph}
Given two graph states $\ket{\psi_{G_1}}$ and $\ket{\psi_{G_2}}$ corresponding to graphs $G_1$ and $G_2$ respectively, if it is possible to obtain $G_2$ from $G_1$ through a sequence of bridge and break operations, then it is possible to obtain $\ket{\psi_{G_2}}$ from $\ket{\psi_{G_1}}$ through a sequence of Pauli measurements and local rotations about the $Z$ axis through angles from the set $\{0,\frac{\pi}{2},\pi,\frac{3\pi}{2}\}$.
\end{lemma}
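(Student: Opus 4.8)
The plan is to realize each graph operation by a single Pauli measurement on the corresponding qubit and to show that the only discrepancy between the post-measurement state and the target graph state is a product of local $Z$-rotations with angles in $\{0,\pi/2,\pi,3\pi/2\}$, which are exactly the permitted corrections. Since a sequence of bridges and breaks acts on distinct vertices (each operated-on vertex is deleted and hence never touched again), it suffices to treat a single break and a single bridge as base cases and then argue that accumulating corrections can always be removed before the next measurement.

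For a break on a vertex $v$ I would invoke the standard graph-state measurement rule: measuring qubit $v$ of $\ket{\psi_{G_1}}$ in the Pauli $Z$ basis disentangles $v$ and leaves the remaining qubits in the graph state of $G_1$ with $v$ and all its incident edges deleted -- precisely the break of $v$ -- up to the byproduct $\prod_{b\in N_{G_1}(v)} Z_b$, applied only when the outcome is $1$. Each factor $Z_b = Z(\pi)$ is a local $Z$-rotation in the allowed set. For a bridge on a degree-$2$ vertex $v$ with non-adjacent neighbours $a,b$, I would use the rule that a Pauli $Y$ measurement of $v$ implements a local complementation at $v$ followed by deletion of $v$. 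Because $N_{G_1}(v)=\{a,b\}$, local complementation toggles only the edge $ab$; as $a$ and $b$ were not joined, this creates the edge $ab$, and deleting $v$ removes its two incident edges -- exactly the bridge of $v$. The associated byproduct is a local $Z$-rotation through $\pm\pi/2$ on each of $a$ and $b$ (up to an irrelevant global phase), together with a possible $Z(\pi)$ fixed by the outcome; all of these again lie in $\{0,\pi/2,\pi,3\pi/2\}$.

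It then remains to compose the steps. After any operation the state is a product of local $Z$-rotations with angles in $\{0,\pi/2,\pi,3\pi/2\}$ applied to the current graph state, and these angles form a group under addition modulo $2\pi$, so corrections never leave the allowed set. The only care needed is when a qubit carrying a pending $Z$-rotation is itself measured at a later step. A $Z$-rotation commutes with a subsequent $Z$-measurement (being diagonal in that basis, it only contributes a global phase), so breaks are unaffected; and before a bridge I would simply apply the inverse $Z$-rotation -- itself an allowed operation -- to clear the correction on $v$ and then measure in the genuine $Y$ basis. Hence every bridge and break can be realized by a Pauli ($Y$ or $Z$) measurement together with allowed $Z$-rotations, and at the end the accumulated local $Z$-rotations are undone to obtain $\ket{\psi_{G_2}}$ exactly.

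I expect the main obstacle to be the careful bookkeeping of the byproduct operators -- in particular, verifying that the $Y$-measurement correction really is a $Z$-rotation by a multiple of $\pi/2$ and that no $X$- or $Y$-type byproducts arise. A spurious $X$ would both fall outside the permitted correction set and, if commuted through a later measurement, change which Pauli basis is required and hence which graph operation is realized; confirming the corrections are purely of $Z$-type (and thus absorbable) is the crux that makes the whole reduction go through.
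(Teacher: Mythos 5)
Your proposal is correct and takes essentially the same route as the paper: a Pauli $Z$ measurement realizes the break (with $Z(\pi)$ byproducts on the neighbours) and a Pauli $Y$ measurement on the degree-$2$ vertex realizes the bridge (with $Z(\pm\pi/2)$ byproducts), exactly as in the paper's proof. The only differences are cosmetic: you invoke the local-complementation rule for $Y$ measurements where the paper derives the bridge effect directly, showing the induced operation is $e^{\pm i\frac{\pi}{4}Z_1\otimes Z_2}$ and hence a \textsc{ctrl}-$Z$ up to local $Z$-rotations, and you are somewhat more explicit than the paper about clearing pending corrections before later measurements.
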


\begin{proof}
By measuring any qubit in a graph state with Pauli $Z$ operator, we obtain a state equivalent up to local Pauli $Z$ corrections to the graph state obtained from the graph when that vertex and its associated edges are removed. To see this, we consider the operations this qubit undergoes: It is first prepared in a state $\ket{+}$, then interacted with its neighbors via \textsc{ctrl}-$Z$ gates, and then measured in the $Z$ basis. As the measurement commutes with the entangling operation, this result is identical to the case where the \textsc{ctrl}-$Z$ gates are applied to the measured eigenstate of $Z$. Thus when the complete sequence of events is taken into account, this operation is equivalent to the identity when the measurement outcome is $0$, and equivalent to local Pauli $Z$ operators applied to the neighbors of the measured site when the measurement outcome is $1$. This is then the graph state equivalent of the break operation defined on the associated graph. 

If a vertex is of degree 2, then measuring the associated qubit with the Pauli $Y$ operator yields the graph state corresponding to the graph obtained by applying a bridge operation to that vertex, up to local $Z$-rotations through an angle $\pm \frac{\pi}{2}$. To see this, we again consider the sequence of operations the qubit undergoes: It is prepared in the state $\ket{+}$, interacted with its neighbors and then measured in the $Y$ basis. Immediately prior to measurement, the net operator applied is $\frac{1}{\sqrt{2}}\ket{0}\otimes\mathbb{I} + \frac{1}{\sqrt{2}}\ket{1}\otimes{Z_1 \otimes Z_2}$, where the subscripts 1 and 2 denote the neighbors of the measured qubit. Thus if the measurement result is $0$ then this is equivalent to directly applying the operator $e^{i\frac{\pi}{4}Z_1\otimes Z_2}$ to the neighboring qubits, whereas if the measurement result is 1 this is equivalent to applying the operator $e^{-i\frac{\pi}{4}Z_1\otimes Z_2}$ to these qubits. Since the \textsc{ctrl}-$Z$ gate can be written either as $e^{i\frac{\pi}{4}(\mathbb{I} - Z\otimes \mathbb{I} - \mathbb{I}\otimes Z + Z\otimes Z)}$ or $e^{-i\frac{\pi}{4}(\mathbb{I} - Z\otimes \mathbb{I} - \mathbb{I}\otimes Z + Z\otimes Z)}$, the effect on the neighboring qubits is equivalent to a \textsc{ctrl}-$Z$, up to local $Z$-rotations by $\frac{\pi}{2}$ (for a measurement result of 0) or $-\frac{\pi}{2}$ (for a measurement result of 1). This could also be derived via the stabilizer formalism. For a more detailed discussion of the effect of Pauli measurements in the measurement based model, the reader is referred to \cite{HEB04}.
\end{proof}

\begin{theorem}[Universality]
\label{thm:universal2} The dotted-complete graph state $\tilde{\mathcal{K}}_N$ is
universal for quantum computation. Furthermore, we only require
single-qubit measurements under the angles $\{0, \pm \pi/4,  \pm \pi/2\}$ and in the Pauli $Z$ basis to achieve approximate universality,
and measurements can be done layer-by-layer.
\end{theorem}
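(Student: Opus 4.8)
The plan is to reduce the dotted-complete graph state $\tilde{\mathcal{K}}_N$ to an already-established universal resource---the brickwork state---using only the measurements permitted by Lemma~\ref{lem:state-graph}, and then to invoke the brickwork universality result, Theorem~\ref{thm:universal}. The bridge/break machinery supplies exactly the link between the two: Lemma~\ref{lem:m-universal} guarantees that any graph on $N$ vertices, in particular a brickwork graph, can be obtained from $\tilde{K}_N$ by bridges and breaks acting only on $A(\tilde{K}_N)$, and Lemma~\ref{lem:state-graph} translates these purely graph-theoretic operations into Pauli measurements together with local $Z$-rotations through multiples of $\pi/2$ at the level of states.

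Concretely, I would first fix dimensions $a,b$ with $ab \le N$ so that the brickwork graph $\mathcal{G}_{a\times b}$ suffices to implement the desired circuit by Theorem~\ref{thm:universal}, and take as target graph on the $N$ vertices of $P(\tilde{K}_N)$ the disjoint union of $\mathcal{G}_{a\times b}$ with $N-ab$ isolated vertices, associating the brickwork's input and output qubits with appropriate $P$-vertices. By Lemma~\ref{lem:m-universal} this graph is reached from $\tilde{K}_N$ by bridging or breaking each vertex of $A(\tilde{K}_N)$, with any leftover $P$-vertices left isolated. By Lemma~\ref{lem:state-graph}, a break is a Pauli $Z$ measurement and a bridge (every $A$-vertex has degree exactly $2$) is a Pauli $Y$ measurement, that is, a measurement at angle $\pi/2$; both merely introduce local Pauli-$Z$ corrections or local $Z$-rotations by $\pm\pi/2$ on the neighbouring $P$-vertices. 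After these measurements the post-measurement state is, up to local Clifford corrections, the brickwork state $\mathcal{G}_{a\times b}$, on which Theorem~\ref{thm:universal} yields approximate universality using angles in $\{0,\pm\pi/4,\pm\pi/2\}$.

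It then remains to check that the entire procedure stays inside the claimed gate set. Bridges contribute measurement angle $\pi/2$ and breaks contribute Pauli $Z$ measurements, both of which are explicitly allowed. The only genuine bookkeeping is the residual local $Z$-rotation by a multiple of $\pi/2$ left behind by each bridge or break: such a rotation sits in front of the later brickwork measurement of the corresponding $P$-vertex and simply shifts its measurement angle by a multiple of $\pi/2$. Since the brickwork angles lie in $\{0,\pm\pi/4,\pm\pi/2\}$, hence are multiples of $\pi/4$, and shifts by $\pi/2$ permute the multiples of $\pi/4$ modulo $\pi$, the effective angle remains in $\{0,\pm\pi/4,\pm\pi/2\}$; the residual multiples of $\pi$ are absorbed into the ordinary flow-induced outcome flips. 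Thus the full measurement alphabet needed is $\{0,\pm\pi/4,\pm\pi/2\}$ together with Pauli $Z$, as claimed.

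Finally, for the layer-by-layer statement I would observe that the target brickwork graph is fixed \emph{before} any measurement, so the choice of $Y$ versus $Z$ at each $A$-vertex is non-adaptive, and since every vertex of $A(\tilde{K}_N)$ has degree exactly $2$ in $\tilde{K}_N$, all of these Pauli measurements can be carried out simultaneously in a single initial layer; the remaining $P$-vertex measurements then follow the column-by-column order of the brickwork guaranteed by Theorem~\ref{thm:universal}. I expect the principal obstacle to be precisely the correction bookkeeping of the preceding paragraph---tracking the $\pm\pi/2$ rotations and the $\pi$ outcome flips through the flow so that the effective brickwork angles provably stay within the stated set---rather than the graph-theoretic reduction itself, which is already handed to us by Lemmas~\ref{lem:m-universal} and~\ref{lem:state-graph}.
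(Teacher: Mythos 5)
Your proposal is correct and follows essentially the same route as the paper's own proof: reduce $\tilde{\mathcal{K}}_N$ to a brickwork state using Lemmas~\ref{lem:m-universal} and~\ref{lem:state-graph}, then invoke Theorem~\ref{thm:universal}, noting that bridges need only Pauli $Y$ (angle $\pi/2$) and breaks need only Pauli $Z$ measurements. The paper's proof is a condensed version of yours; your extra bookkeeping (padding the target graph with isolated vertices, closure of the angle set $\{0,\pm\pi/4,\pm\pi/2\}$ modulo $\pi$ under the residual $\pm\pi/2$ rotations, and the non-adaptive first layer of Pauli measurements on $A(\tilde{K}_N)$) simply makes explicit what the paper leaves implicit.
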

\begin{proof}
Due to lemmas \ref{lem:m-universal} and \ref{lem:state-graph}, by choosing $N$ big enough, we could construct the brickwork state $\mathcal{G}_{n \times m}$ from $\tilde{\mathcal{K}}_N$ using only Pauli measurements. Hence from Theorem \ref{thm:universal} we obtain the universality of dotted-complete graph states and approximate universality with only single qubits measurements under the angles $\{0, \pm \pi/4,  \pm \pi/2\}$ (which includes the Pauli $Y$ measurements required to implement bridge operations), and the Pauli $Z$ basis measurements required to implement break operations.
\end{proof}

From this result we can construct a new universal hiding protocol based on dotted-complete graph states, as given in Protocol \ref{prot:UBQC2}. Interestingly, in the case of classical input and output this new protocol does not even reveal the circuit dimensions, but instead a single integer which is an upper bound on the number of qubits required to implement the computation in the measurement-based model.

\begin{algorithm}
\caption{Dotted-Complete Graph State Universal Hiding Protocol with Quantum Input/Output}
 \label{prot:UBQC2}
\begin{itemize}
\item \textbf{Alice's resources} \\
\noindent -- Parameter $N$ such that the desired computation could be obtained from the state $\tilde{\mathcal{K}}_N$ after a sequence of break and bridge operators (Theorem \ref{thm:universal2}). The labeling of vertices are in such a way that the first $n$ qubits are input and the last $n$ qubits are output. \\  
\noindent -- The dummy qubits position, set $D$, is set to be the position of all the qubits that are required to be Pauli $Z$ measured for performing the break operators. \\
\noindent -- A sequence of non-output measurement angles, $\phi=(\phi_i)_{1\leq i \leq (m-n)}$ with $\phi_i \in A$ where $\phi_i = \frac \pi 2$ for all $i\in D$ and also for all the qubits that are required to be Pauli $Y$ measured to perform the bridge operators.\\
\noindent -- The rest of the resources are the same as Protocol \ref{prot:DBQC}.
\end{itemize}
\vskip 0.2cm
Follow the steps of Protocol \ref{prot:DBQC} where $G$ is replaced with $\tilde{K}_{N}$.
\vskip 0.2cm
\end{algorithm}

\begin{theorem}\label{t-blindinputdotted} Protocol \ref{prot:UBQC2} is blind, while leaking at most $n$ and $N$. 
\end{theorem}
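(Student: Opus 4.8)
The plan is to reduce directly to the blindness of Protocol~\ref{prot:DBQC}. By construction Protocol~\ref{prot:UBQC2} is nothing more than Protocol~\ref{prot:DBQC} executed with the graph fixed to $G=\tilde{K}_N$, so Theorem~\ref{t-DBQC} applies verbatim and already guarantees that the protocol is blind while leaking at most the graph $\tilde{K}_N$. The only thing left to establish is that this leakage is no more than the two integers $n$ and $N$.

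First I would note that $\tilde{K}_N$ is a canonical graph, completely specified by the single parameter $N$, being the result of applying $\sim(\cdot)$ to the complete graph $K_N$; hence learning $\tilde{K}_N$ is information-theoretically the same as learning $N$ (and the derived quantity $m=N(N+1)/2$). The convention that the input qubits sit among the first $n$ vertices and the output qubits among the last $n$ --- which Bob must know in order to return the output register --- accounts for the leakage of $n$. Together these give exactly the claimed leakage.

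The more substantive step is to argue that revealing $N$ betrays nothing about the computation beyond this upper bound. Here I would invoke Theorem~\ref{thm:universal2}: any computation Alice wishes to run is obtained by reducing $\tilde{K}_N$ to a target graph through a sequence of break and bridge operations and then measuring the result. The crucial observation is that the fixed, fully symmetric graph $\tilde{K}_N$ carries no imprint of the break/bridge pattern Alice has chosen. Breaks are realised by dummy qubits, whose positions $D$ are part of Alice's secret and whose preparations $\ket{d_i}$ are uniformly random; bridges and the genuine computational measurements are all performed as $(X,Y)$-plane measurements whose transmitted angles $\delta_i$ are, by the argument of Theorem~\ref{t-DBQC}, uniformly random and independent of Alice's choices. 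Consequently Bob cannot statistically distinguish break vertices, bridge vertices, and computational vertices, so $\tilde{K}_N$ reveals only $N$ and $n$.

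The main point requiring care is this final indistinguishability: one must check that fixing $\phi_i=\pi/2$ on the dummy and bridge qubits in Protocol~\ref{prot:UBQC2} does not spoil the uniformity of the $\delta_i$. This is immediate from the form of $C_G$, since the uniform secret $\theta_i$ (further randomised by $r_i$ and the dummy corrections) is added modulo $2\pi$ and therefore renders $\delta_i$ uniform over $A$ whatever the value of $\phi_i$; the reduction to Theorem~\ref{t-DBQC} thus goes through without modification.
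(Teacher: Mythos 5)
Your proposal is correct, and it is organized differently from the paper's proof in a way worth noting. The paper's proof, after dispatching the leakage of $N$ (from the entangling step) and $n$ (from the number of returned output qubits) exactly as you do, proceeds by splitting the measurements into three classes and citing a separate earlier result for each: Theorem~\ref{t-blind} for the bridge (Pauli $Y$) measurements, Theorem~\ref{t-DBQC} for the break (dummy) measurements, and Theorem~\ref{t-blindinput} for the residual computation on the prepared brickwork state. You instead make a single, monolithic reduction: Protocol~\ref{prot:UBQC2} \emph{is} Protocol~\ref{prot:DBQC} instantiated with $G=\tilde{K}_N$, so Theorem~\ref{t-DBQC} applies once to the entire run and yields leakage at most $\tilde{K}_N$, after which the only remaining work is the observation that $\tilde{K}_N$ is a canonical graph determined by $N$ alone, together with the universality statement (Theorem~\ref{thm:universal2}) ensuring the fixed graph carries no imprint of Alice's break/bridge/computation assignment. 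Your route is arguably tighter: the paper's staged citation implicitly requires that the blindness guarantees of three theorems compose over a single joint run (the stages are not separate protocol executions), whereas your single application of Theorem~\ref{t-DBQC} sidesteps any composition question. What the paper's decomposition buys in exchange is an explicit conceptual map between the measurement classes and the earlier protocols, mirroring the structure of the universality proof. Your final check --- that fixing $\phi_i=\pi/2$ on dummy and bridge qubits cannot disturb the uniformity of the $\delta_i$, since the uniformly random $\theta_i$ additively randomizes $\delta_i$ whatever $\phi_i$ is --- is a detail the paper leaves implicit, and including it strengthens rather than weakens the argument.
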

\begin{proof}
As Bob entangles according to $\tilde{\mathcal{K}}_N$, clearly the parameter $N$ is leaked. Additionally, in the case of quantum output, Bob must be instructed how many qubits to return to Alice, and hence knows $n$. However, fixing these parameters, due to Theorem \ref{t-blind} all the measurement angles including the measurements for the bridge operators are blind to Bob. Similarly, from Theorem \ref{t-DBQC} we have blindness for the measurement corresponding to the break operators. Together these guarantee the blindness of the operations required to prepare a brickwork state from $\tilde{\mathcal{K}}_N$. Finally Theorem \ref{t-blind} proved the blindness of the remaining measurements performed on the prepared brickwork state.
\end{proof}

\section{Verification}\label{s-verification}

This section deals with another property of the hiding protocol called verification. This property requires that Alice can verify with high probability whether Bob has followed the instructions of the protocol and hence if the quantum or classical output state is indeed in the correct form, or whether there has been a deviation and she should therefore reject the output state. The main idea is to exploit blindness so that Alice can expand the protocol to include \emph{trap qubits} where Alice knows in advance the classical outcome of these specific measurements (i.e. the correct message from Bob for these measurements), where the blindness ensures that the position of these traps remains hidden from Bob. At the end Alice will accept the quantum or classical output only if Bob has produced all of the \emph{expected} outcomes for these trap qubits measurements. The subtlety in verification is to prove that the accepted quantum or classical output is indeed correct.

It is essential that Alice keeps the position of these trap qubits unknown to Bob, so that he cannot attempt to interfere with the actual computation of $U$ while keeping the trap qubits untouched. We will present a protocol where every qubit of the underlying graph could potentially be an isolated (unentangled) trap qubit in an unknown state $\ket {+_{\theta}}$ for $\theta \in A$. In order to do so, it is enough to prepare all the neighboring vertices of the trap qubit as dummy qubits, hence these dummy qubits together with the trap qubits remain disentangled from the rest of the graph during the preparation stage. Building on this simple construction, by adding more traps and adding error detection elements, we will present a final protocol in which the probability of not detecting an incorrect outcome is exponentially small. 

In order to first demonstrate the main idea of this method of verification, we ignore the universality property and only later will we present a concrete universal blind quantum computing protocol with the verification property. Hence to obtain a generic hiding protocol with a random unknown trap it is sufficient to use Protocol \ref{prot:DBQC}, where Alice chooses a random position $t$ to be an isolated trap qubit (Protocol \ref{prot:DTBQC}). 

\begin{algorithm}[h]
\caption{Generic Hiding QC For Unitary with Dummy, Trap, Quantum Input and Output}
 \label{prot:DTBQC}
 \vskip 0.2 cm
\begin{itemize}
\item \textbf{Alice's resources} \\
\noindent -- Graph $G$ over $m$ vertices and a random position $t$ among the vertices of $G$. \\  
\noindent -- The rest of the resources are the same as Protocol \ref{prot:DBQC} where $\phi_{i} = 0$ for $i=t$ and $i\in D$ where $D$ contains the set of all neighbors of position $t$ in the original graph to create an isolated trap qubit at position $t$. 

\item \textbf{Follow the steps of Protocol \ref{prot:DBQC}.}

\item \textbf{Accept/Reject} \\
\noindent --  After obtaining all the output qubits from Bob, if the trap qubit, $t$, is an output qubit, Alice measures it with angle $\delta_{t} = \theta_t + r_t \pi$ to obtain $b_t$. \\
\noindent -- Alice accepts if $b_{t} = r_{t}$. 

\end{itemize}
\end{algorithm}

Theorem \ref{t-DBQC} directly implies that Protocol \ref{prot:DTBQC} is blind and the position of the trap qubits $t$ remains unknown to Bob. Recall that at each stage $i$ only qubit $i$ is measured. We present some intermediate definitions before formalizing the definition of verification. All the protocols presented so far describe the expected behavior of Alice and Bob in a hiding protocol. Since we are concerned with the secrecy of Alice's resources we can assume that Alice always follows the steps of the protocol. In fact after the initial step when Alice draws all the random variables $\theta_i$ and $r_i$ her behavior, for a fixed run of the protocol, is deterministic. This means that at each step the next move of Alice is determined completely by the past, however a malicious Bob might deviate in any way he desires. We will define a run of protocol to be \emph{honest} (Bob has behaved as expected) or \emph{correct} (the output is correct despite Bob's deviations) based on the outcome of all measurements and the quantum output state if it exists.

Recall that in a generic hiding protocol with quantum input and output the messages sent by Bob to Alice depend on a collection of outcome measurements, $s_i \in \{0,1\}$. In fact Bob will send the outcome value $b_i$ and then Alice, depending on $r_i$, will reset them to their corrected values $s_i$. In what follows we will deal with the corrected outcome measurement that is $s_i$. Similarly at the end of the protocol Bob will send Alice some quantum output state in the output Hilbert space $\mathcal{H}_O$ that needs to be corrected depending on all the measurements outcomes. In what follows we consider the corrected quantum output state $\rho$. Note that the values of $s_i$ and $\rho$ depends on Alice's specific random choices and also Bob's general strategy of deviation. We treat this information as a single density operator to deal uniformly with both classical and quantum output. Finally in order to consider the most general deviation that Bob can perform during a run of protocol we consider a collection of unitary operators acting each at a stage of the protocol on the private qubits of Bob and all the other qubits and classical bits sent by Alice to Bob. 

\begin{definition}\label{d-outcome}
Consider a particular run of a generic hiding protocol, where all the following parameters are fixed: Alice's angles of measurements $\phi=(\phi_i)_{1\leq i \leq (m-n)}$; Alice's random variables $x=(x_i)_{1\leq i \leq n}$, $r=(r_i)_{1\leq i \leq (m-n)}$, $\theta=(\theta_i)_{1\leq i \leq m}$ and $d = (d_i)_{i \in D}$; Alice's input state $\ket I$; The number of Bob's private qubits $B$; Bob's deviation unitaries at each stage of the protocol $\mathcal{U}=\{U_i\}_{0\leq i \leq m+1}$ acting on all quantum and classical messages. We denote the \emph{outcome density operator} (of all classical and quantum messages sent by Bob to Alice) as follows:   
\begin{align*}
\mathcal{B}_j(\nu) = \sum_{\vec{s}\in\{0,1\}^{|O|^c}} \; p_{\nu,j}(\vec{s}) \; \ket{\vec{s}} \bra{\vec{s}} \otimes \rho_{\nu,j}^{\vec{s}}
\end{align*}
where $\nu$ collectively denotes Alice's choice of variables $t,x,r,\theta, d$; $j$ ranges over Bob's choices: $B$ and $\mathcal{U}$; $\vec s$ ranges over all possible values of the corrected values $\{s_i\}$ of the measurement outcomes $\{b_i\}$ sent by Bob to Alice; and $\rho_{\nu,j}^{\vec{s}}$ is the reduced density operator for the non-measured qubits with the corresponding correction operators for the measurement outcomes $\vec s$ has been applied. We call the outcome density operator $\mathcal{B}_{0}(\nu)$, obtained from a run of the protocol where all $U_i$ are set to be the identity operator, the \emph{exact outcome density operator}. That is the outcome density operator obtained from a run where Bob exactly follows the step of the protocol. 
\end{definition}

Note that if we were dealing only with a deterministic pattern over a connected graph state then the outcome density operator could have been simplified to a fixed pure state of the output qubits, independent of the measurement outcomes. Moreover in such a scenario the probability of each branch of the computation would have been the same. However the above definition aims to capture any general deviation by Bob, that could affect the determinism and probability of the branches. Also since we will have dummy and trap qubits then not all the possible branches will be equally probable. The outcome density operator, depending on all the random choices of Alice and Bob, can be classified as follows below. Although not all mentioned categories will be used in the remainder of the paper, we give them here for completeness and to highlight the subtle differences between possible outcomes.

\begin{definition}
We say the outcome density operator $\mathcal{B}_j(\nu)$ is \emph{honest} if it is indistinguishable from the exact outcome density operator:
\begin{align*}
\lVert {\mathcal{B}_j(\nu) - \mathcal{B}_{0}(\nu)} \rVert_{tr}= 0,
\end{align*}
where $\lVert \cdot \rVert_{tr}$ denotes the trace norm.
It is called \emph{correct} if the quantum output state and the trap outcome measurement is indistinguishable from the corresponding value of the exact outcome density operator:
\begin{align*}
\lVert \mbox{\em Tr\em}_{i\not \in O, i\not = t} (B_j(\nu)) - \mbox{\em Tr\em}_{i\not \in O, i\not = t} (B_{0}(\nu)) \rVert_{tr} = 0.
\end{align*}
It is called \emph{lucky} if $b_t = r_t$ and finally it is called \emph{incorrect} if it is lucky but the quantum output state, $Tr_{i \not \in \{O \setminus \{t\}\}} (B_j(\nu))$, is orthogonal to the corresponding subsystem of the exact outcome density operator. Note that for the classical output scenario, any bit-flip implies orthogonality.
\end{definition}

Alice should not care if Bob's deviation leads to a correct outcome density operator, as the final quantum or classical output is in the correct state. Therefore, in the definition of a verifiable blind quantum computation we aim to bound the probability of Alice being fooled, i.e the probability of Alice accepting an incorrect outcome density operator. Any outcome density operator either results in $s_t \neq r_t$ or is contained within the subspace of correct and incorrect outcome states. Hence intuitively, a protocol is defined to be verifiable if the corresponding outcome state is \emph{far from} any incorrect outcome states. Following the approach of \cite{BCGST02}, we first define the notion of correctness. Recall that for simplicity we have assumed that the computation is deterministic and the input is in a pure state, and hence the ideal output will necessarily be a pure state. This restriction to pure states mirrors the approach of \cite{BCGST02}.

\begin{definition}\label{d-auth}
Let $P_\text{incorrect}^\nu$ be the projection onto the subspace of all the possible incorrect outcome density operator for the fixed choice of Alice's random variables $\nu$. It will be convenient to divide $\nu$ into two subsets depending on whether the secret variables correspond to the trap setting or the remainder of the computation. Thus we define $\nu_T = \{t,r_t,\theta_t\}$ and $\nu_C = \nu/\nu_T$. When the output state is a pure state, $P^{\nu}_\text{incorrect}$ is given by
\begin{align*}
 ( \mathbb{I} - \ket {\Psi_{ideal}} \bra{\Psi_{ideal}} ) \otimes \ket {\eta_t^{\nu_T}} \bra {\eta_t^{\nu_T}}
\end{align*}
where $\ket{\Psi_{ideal}} \bra{\Psi_{ideal}}  =  Tr_{i \not \in \{O \setminus \{t\}\}} (B_0(\nu))$, and where $\ket{\eta_t^{\nu_T}} = \ket{+_{\theta_t}}$ when $t\in O$ and $\ket{\eta_t^{\nu_T}} = \ket{r_t}$ otherwise. Let $p(\nu)$ be the probability of Alice choosing random variables parameterized by $\nu$, that is the probability of choosing a particular vertex, among all possible vertices of the graph, to be the trap position (denoted as a random variable $t$) and the probability of choosing random variables $r,x,\theta$ and $d$ (as defined in Definition \ref{d-outcome}). Given $0 \leq \epsilon < 1 $, we define a protocol to be \emph{$\epsilon$-verifiable}, if for any choice of Bob's strategy (defined as in Definition \ref{d-outcome} and denoted by index $j$) the probability of Alice accepting an incorrect outcome density operator is bounded by $\epsilon$:
\begin{align*}
\text{\em Tr\em} (\sum _\nu \; p(\nu)\; P_\text{incorrect}^\nu \;  B_j(\nu) ) \leq \epsilon. 
\end{align*}
\end{definition}
Recall that $B_0(\nu)$ is the output density operator of an honest run after the corrections have been performed. Hence, in the above definition $\ket{\Psi_{ideal}}$ is independent of $\nu$, since for an honest run of the protocol, the output state is independent of Alice's secret parameters, via the correctness theorem.

\begin{theorem} \label{t-authen1}
Protocol \ref{prot:DTBQC} is $(1 - \frac{1}{2m})$-verifiable in general, and in the special case of purely classical output the protocol is also $(1 - \frac{1}{m})$-verifiable, where $m$ is the total number of qubits in the protocol.
\end{theorem}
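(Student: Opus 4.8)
The plan is to reduce an arbitrary coherent deviation by Bob to a convex combination of tensor-product Pauli attacks that are independent of the trap position, and then argue combinatorially that any such attack capable of corrupting the output is caught by the single random trap with probability at least $\frac{1}{2m}$ (resp.\ $\frac{1}{m}$ for classical output).

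First I would invoke the blindness theorem (Theorem~\ref{t-DBQC}): since Bob's entire view---his classical messages and the quantum states he holds---is independent of Alice's secret $\nu$, and in particular of the trap position $t$, his deviation unitaries $\mathcal{U}=\{U_i\}$ cannot depend on $t$. This lets me fix Bob's strategy $j$ and analyse $\text{Tr}(\sum_\nu p(\nu)\,P_\text{incorrect}^\nu\,B_j(\nu))$ by averaging over Alice's choices with Bob's operations held fixed, recalling that his freedom also includes reporting arbitrary outcomes $b_i$ (absorbed into the $U_i$ acting on all messages).

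The key technical step is to show that averaging over Alice's one-time-pad keys---the bit flips $r_i$ (which enter through $s_i=b_i\oplus r_i$), the rotations $\theta_i$, the input pad $x_i$, and the dummy bits $d_i$---twirls Bob's coherent deviation into an incoherent mixture $\{q_E\}$ of Pauli operators $E=\bigotimes_i E_i$ applied to the honest post-correction state just before the trap check and output, with $\{q_E\}$ independent of $t$ (Bob's private register being traced out into the coefficients). I would establish this by commuting each round's $U_i$ through the subsequent flow corrections and using the uniform randomness of $r_i$ together with $\theta_i$ to kill the off-diagonal cross-terms between distinct Pauli errors, exactly as in the blindness proof. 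With this reduction in hand the combinatorial core is short. For $E=\mathbb{I}$ the output is correct, so $\text{Tr}(P_\text{incorrect}^\nu B_j(\nu))=0$ and only nontrivial $E$ (support $S\neq\emptyset$) can contribute. Conditioned on such an $E$, I bound the acceptance probability over the uniformly random $t$: for $t\notin S$ the isolated trap $\ket{+_{\theta_t}}$ is undisturbed and passes with certainty, while for $t\in S$ the single-qubit deviation $E_t\in\{X,Y,Z\}$ passes with probability $|\langle +_{\theta_t}|E_t|+_{\theta_t}\rangle|^2$, which averaged over $\theta_t\in A$ is at most $\frac12$ (and exactly $0$ for $E_t=Z$). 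Hence $\Pr[\text{accept}\mid E]\le 1-\tfrac{|S|}{2m}\le 1-\tfrac{1}{2m}$, and since $\Pr[\text{accept and incorrect}\mid E]\le\Pr[\text{accept}\mid E]$, averaging over $\{q_E\}$ gives the general bound. For purely classical output I would use that an incorrect outcome is a bit flip, caused only by the phase ($Z$-type) part of the attack---the $X$-type part commutes with the final Pauli $X$ measurements and is irrelevant---so the incorrect-causing errors act as $Z$ on the trap and are caught deterministically (overlap $0$), improving the per-qubit detection to $1$ and yielding $1-\tfrac{1}{m}$.

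The step I expect to be the main obstacle is the twirling argument in the presence of sequential, adaptive deviations and outcome-dependent corrections: one must verify carefully that commuting Bob's round-by-round unitaries past the flow corrections and averaging over the keys genuinely diagonalises the effective attack in the Pauli basis with a strictly $t$-independent distribution, leaving no residual coherences or hidden $t$-dependence. A secondary subtlety is the classical-output improvement, where I must justify rigorously that the $X$-type component can be discarded consistently both for the output (it leaves the $X$-basis outcomes invariant) and in the trap analysis.
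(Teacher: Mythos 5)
Your high-level strategy matches the paper's (reduce Bob's deviation to a mixture of Pauli attacks by averaging over Alice's keys, then count per-trap detection probabilities), but there are two genuine problems. The first is a frame inconsistency that becomes fatal in the classical-output case. The twirl you defer can only be carried out (and is carried out in the paper) in the frame where the protocol's rotations and Hadamards have been absorbed, i.e.\ where every measurement Bob performs becomes a computational-basis measurement: there the Pauli decomposition of Bob's deviation has coefficients independent of the secret, whereas in your ``physical'' frame the decomposition of $\mathcal{P}^\dagger \Omega \mathcal{P}$ has coefficients depending on the $\delta_i$, which breaks the claim that the error distribution is independent of $\nu$. In that correct frame, a trap that Bob measures during the protocol is a computational-basis check of $\ket{r_t}$, not an overlap with $\ket{+_{\theta_t}}$: it catches $X$ and $Y$ with certainty and \emph{never} catches $Z$ (which is harmless on measured qubits), while only output traps behave as you describe ($Z$ caught with certainty, $X,Y$ with average probability $\tfrac12$). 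Your accounting coincidentally reproduces the general bound $1-\tfrac{1}{2m}$, but it is not the accounting that the twirl supports.

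The second problem is that your classical-output improvement fails as stated, even on its own terms. You claim incorrect outcomes are caused only by $Z$-type errors and that these have zero overlap on the trap, but this breaks for $Y$ errors: $Y \propto XZ$ flips measurement outcomes (so it is harmful), yet $\bra{+_{\theta_t}} Y \ket{+_{\theta_t}} = \sin\theta_t$, so in your accounting it is caught only with average probability $\tfrac12$, giving $1-\tfrac{1}{2m}$ rather than $1-\tfrac1m$. The step of ``discarding the $X$-part'' of an error is invalid when the trap check is an overlap with $\ket{+_{\theta_t}}$, since that state is not an $X$ eigenstate; it is valid only against computational-basis checks. The paper obtains $1-\tfrac1m$ precisely via the other frame: for classical output all traps are computational-basis checks $\ket{r_t}$, every harmful error has an $X$ or $Y$ component on some measured qubit, and such a component is detected with probability $1$ whenever the uniformly random trap lands on it, while the undetected $Z$ components are harmless. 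Finally, note that the twirling lemma you flag as ``the main obstacle'' is not a technicality but the bulk of the paper's proof (repositioning the adaptive deviations to the end of the circuit by a causality argument, tracing out Bob's register into Kraus operators, Pauli-expanding, and showing that averaging over $\nu_C$ and then over $(t,r_t,\theta_t)$ kills all cross terms); without it, and with the classical-output step broken, the proposal does not constitute a proof.
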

\begin{proof}
At the beginning of the protocol, Alice chooses the independent and uniform random variables for $\nu$. Next Alice prepares the input qubits in the following form:
\begin{align*}
\ket {e^\nu} = X_1^{x_1} Z_1(\theta_1) \otimes \ldots \otimes  X_n^{x_l} Z_n(\theta_l) \ket I
\end{align*}
and positions them among the first $n$ qubits. Recall that $n > |I|$ and hence the trap qubit might be among this set of qubits. She then prepares the remaining qubits in the following form (where $D$ is the index of the dummy qubits)
\AR{
\forall i\in D &\;\;\;& \ket {d_i} \\ 
\forall i \not \in D &\;\;\;& \prod_{j\in N_G(i) \cap D} Z^{d_j}\ket {+_{\theta_i}}~=\ket {+_{\theta_i + \sum_{j\in N_G(i) \cap D} d_j \pi}}
}
and sends all $m$ qubits in the order of the labeling of the vertices of the graph, we represent the whole $m$ qubit state as $\ket {M^\nu}$. We can treat all the measurement angles $\delta_i$ as orthogonal quantum states $\ket {\delta_i}$. For a fixed choice of Alice's random variables ($\nu$) and Bob's strategy ($j$), Bob's output from the computation can be written in the form of the output of a circuit computation as depicted in Figure \ref{fig:devfig1}. Note this is the state of the system before the relevant corrections for Alice's secret key have been applied to yield the outcome density operator $B_j(\nu)$.

\begin{figure}[h!]
\begin{center}
\includegraphics[width=\columnwidth]{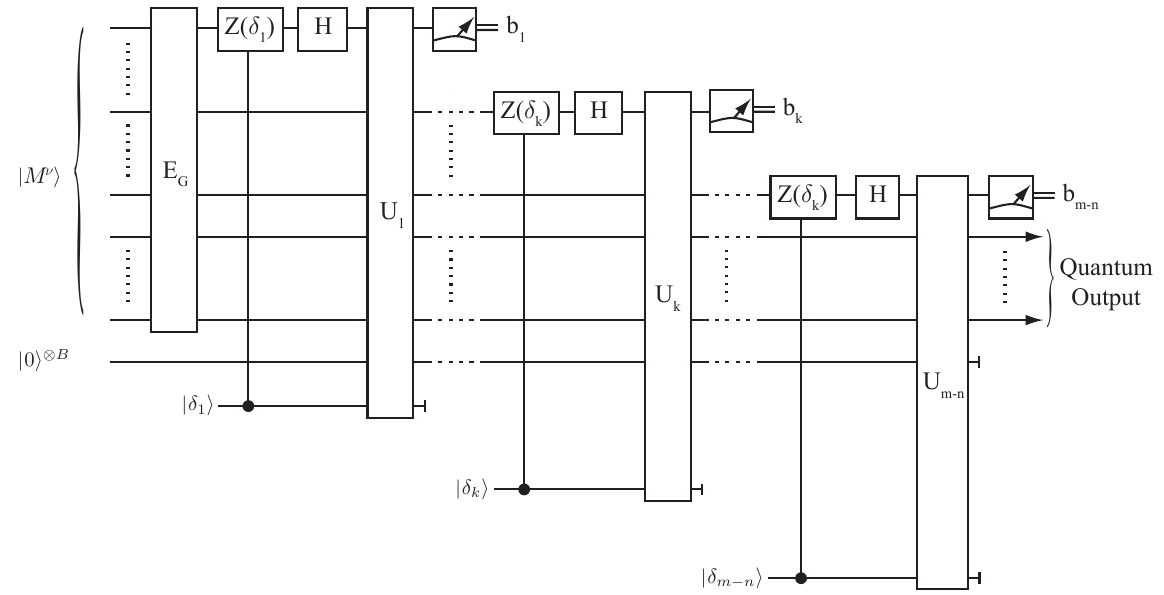}
\caption{A run of protocol together with Bob's deviation represented as $U_i$ operators. The entangling operator, $E_G$, is the collection of all the required \textsc{ctrl}-$Z$ operators corresponding to the graph edges. Note that in Definition \ref{d-outcome} we also considered an operator $U_0$ representing Bob's initial deviation. In the figure, for simplicity, we have commuted $U_0$ and combined it with $U_1$. Trivially, if all the $U_i$ operators are set to be identity the above circuit converges to the exact run of the protocol, where a measurement in the basis $\ket {\pm_{\delta_i}}$ is implemented using the controlled $Z$-rotation followed by a Hadamard gate and finally a Pauli $Z$ basis (computation basis) measurement on the corresponding qubits.}
\label{fig:devfig1}
\end{center}
\end{figure}

While in the actual protocol, at step $i$, Alice computes $\delta_i$ as a function of $s_{<i}$ which in turn is calculated from $b_{<i}$ and $r_{<i}$, we can rewrite the circuit from Figure \ref{fig:devfig1} in such a way that the values $\delta_i$ are part of the initial state, without affecting causality as they do not interact with anything until after the corresponding $b_i$ has been generated. This intuition is made rigorous in Equation \ref{e-consist} via the inclusion of projections to ensure consistency. This will allow us to reorder all the operators $U_i$ to the end to obtain the new circuit shown in Figure \ref{fig:devfig2}. Note that Figure \ref{fig:devfig2} is not an actual run of the protocol, it is a mathematical equivalent of Figure \ref{fig:devfig1} where the values of $b_i$ have been fixed to permit us to commute the operators as depicted. However in the following proof we have considered any general deviation performed by Bob, that is to say we consider any arbitrary $U_i$ operators. 

\begin{figure}[h!]
\begin{center}
\includegraphics[width=\columnwidth]{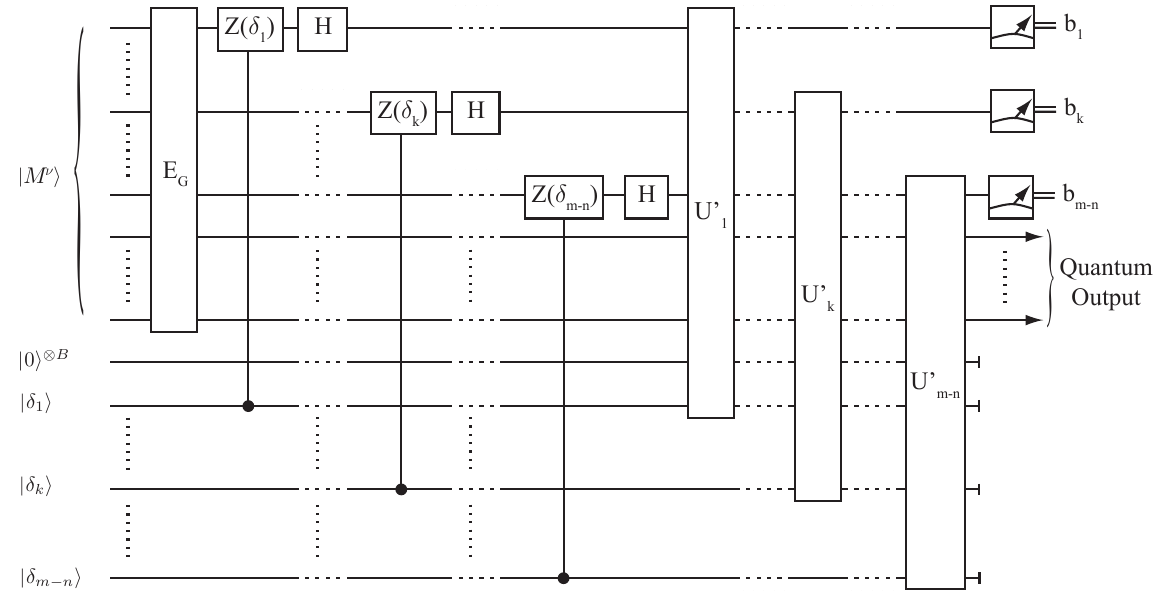}
\caption{The fact that any $U_j$ in Figure \ref{fig:devfig1} is independent of all $\delta_{i>j}$, allows us to reposition the deviation to the end of the circuit as shown above. Hence we can rewrite Bob's deviation as $U'_i = \mathcal{P}_i U_i \mathcal{P}_i^\dagger$, where $\mathcal{P}_i = \bigotimes_{i+1\leq j\leq m-n} H_j Z_j(\delta_j)$.}
\label{fig:devfig2}
\end{center}
\end{figure}

In the rest of this proof we will use $t$ to represent both the random variable and also the position of the trap qubit. We denote by $\Omega = U'_{m-n} U'_{m-n-1} ... U'_1$ the overall action of Bob's deviation and by $\mathcal{P} = \big(\bigotimes_{1\leq i\leq m-n} H_i Z_i(\delta_i)\big) E_G$ the action of the exact protocol prior to measurement. Here, and in Figure \ref{fig:devfig2}, we have taken $U'_i = \mathcal{P}_i U_i \mathcal{P}_i^\dagger$, where $\mathcal{P}_i = \bigotimes_{i+1\leq j\leq m-n} H_j Z_j(\delta_j)$. Further we denote by  
\AR{
\ket {\Psi^{\nu,b}} = \bigotimes_{1\leq i\leq m} \ket {M^{\nu}} \bigotimes_{1\leq j\leq m-n} |\delta_j^b
\rangle}
the joint state of the initial (input, dummy and prepared) qubits sent by Alice to Bob and the classical angles $\delta_i^{b}$, where $b$ represents a possible branch of the computation as parameterized by the measurement results $\{b_i\}$ sent by Bob to Alice. Finally, {in line with} Definition \ref{d-auth}, we define $C_{\nu_C,b}$ to be the Pauli operator which maps the final quantum output state to the correct one depending on the random variable $\nu_C$ and computation branch $b$. Hence we have 
\EQ{\label{e-consist}
B_j(\nu)={\mbox{Tr}_B}\left(\sum_{b} \ket{b+c_r} \bra{b} C_{\nu_C,b} \Omega \mathcal {P} ((\otimes^B \ket 0 \bra 0) \otimes |\Psi^{\nu,b}\rangle\langle \Psi^{\nu,b}|) \mathcal{P}^\dagger \Omega^\dagger C_{\nu_C,b}^\dagger \ket {b} \bra {b+c_r}\right).
}
where $(c_r)_i = r_i$ for all $i\neq t$ and $(c_r)_t = 0$, and the subscript $B$ denotes that the partial trace is taken over Bob's private register. Here $c_r$ is used to compactly deal with the fact that in the protocol all measured qubits are decrypted by XORing them with $r$, except for the trap qubit which remains uncorrected. Note that in the above the operator $\bra b \cdots \ket b$ acts upon the subspace of all measured qubits and $\ket {b+c_r} \cdots \bra {b+c_r}$ store the corrected outcome of the measurement. The above equation includes the dependence of $\delta_i$ on previous measurement results via the inclusion of the parameter $b$ in the initial state $|\Psi^{\nu,b}\rangle$. The projectors $\ket{b+c_r} \bra{b}$ and $\ket {b} \bra {b+c_r}$ then enforce consistency, by ensuring that measurement results match the values used in the computation of subsequent $\delta_i$.

We take $P_{\bot}$ to be the projection onto the subspace of incorrect states for the non-trap qubits, after Alice's final corrections have been applied to any quantum output. Hence 
\begin{align*}
P_\text{incorrect}^\nu = P_{\bot} \otimes \ket {\eta_t^{\nu_T}} \bra {\eta_t^{\nu_T}}
\end{align*}
where $\ket{\eta_t^{\nu_T}} = \ket{r_t}_t$ for $1 \leq t\leq m-n$ and $\ket{\eta_t^{\nu_T}} = \ket{+_{\theta_t}}_t$ for $m-n+1 \leq t \leq m$. Here we use the subscript on the ket to identify the relevant qubit.  Thus we have
\begin{align*}
\mbox{Tr}(P^{\nu}_\text{incorrect} \; B_j(\nu) ) =&\mbox{Tr} \Bigg( P_{\bot} \otimes \ket {\eta_t^{\nu_T}} \bra {\eta_t^{\nu_T}} \Bigg(\sum_{b} \ket{b+{c_r}}\bra{b} C_{\nu_C,b}\Omega \mathcal {P} \\
&~~~~~~~~\left(\left(\otimes^B \ket 0 \bra 0\right) \otimes |\Psi^{\nu,b}\rangle\langle \Psi^{\nu,b}|\right) \mathcal {P}^\dagger \Omega^\dagger C_{\nu_C,b}^\dagger \ket{b}\bra{b+c_r}\Bigg)\Bigg).
\end{align*}

As Bob's private register is traced out, the net result of $\Omega$ is to apply a completely positive trace preserving map of the other qubits. {Taking the Kraus operators associated with this operator to be $\{\chi_k\}$, with $\sum_k \chi_k \chi_k^\dagger = \mathbb{I}$, we have}
\begin{align*}
\mbox{Tr}(P^{\nu}_\text{incorrect} \; B_j(\nu) ) = {\sum_k}\sum_b \mbox{Tr} \Bigg(& \left(P_{\bot} \otimes \ket {\eta_t^{\nu_T}} \bra {\eta_t^{\nu_T}}\right) \ket{b+c_r}\bra{b} C_{\nu_C,b} \chi_k \mathcal {P} \\ & |\Psi^{\nu,b}\rangle\langle \Psi^{\nu,b}| \mathcal {P}^\dagger {\chi_k}^\dagger C_{\nu_C,b}^\dagger \ket{b}\bra{b+c_r}\Bigg).
\end{align*}
Since any {Kraus} operator can be written as a linear combination of Pauli operators with complex coefficients, we have ${\chi_k = \sum_{i} \alpha_{ki} \sigma_i}$, where ${\sum_k} \sum_i\alpha_{ki}\alpha_{ki}^* = 1$ and $\sigma_i$ is a Pauli operator acting on the joint quantum state of the system. Therefore the above equation can be written as 
\begin{align*}
\mbox{Tr} (P^{\nu}_\text{incorrect} B_j(\nu) ) &= {\sum_k} \sum_b \mbox{Tr} \Bigg( \left(P_{\bot} \otimes \ket {\eta_t^{\nu_T}} \bra {\eta_t^{\nu_T}}\right) \ket{b+c_r}\bra{b} C_{\nu_C,b} \\
&~~~~~~~~~~~~~~~~~~~~~~~~\left(\displaystyle\sum _{i,j} \alpha_{ki} \alpha_{kj}^*   \sigma_i \mathcal {P} \ket {\Psi^\nu} \bra {\Psi^\nu}  \mathcal {P}^\dagger \sigma_j\right) C_{\nu_C,b}^\dagger \ket{b}\bra{b+c_r} \Bigg) \\
&= {\sum_k} \sum_b \mbox{Tr} \Bigg(\displaystyle\sum _{i,j} \alpha_{ki} \alpha_{kj}^* \left( P_{\bot}  \otimes \ket {\eta_t^{\nu_T}} \bra {\eta_t^{\nu_T}} \right) \ket{b+c_r}\bra{b} C_{\nu_C,b}\\&~~~~~~~~~~~~~~~~~~~~~~~~ \sigma_i \mathcal {P} | \Psi^{\nu,b}\rangle \langle\Psi^{\nu,b} |\mathcal {P}^\dagger \sigma_j C_{\nu_C,b}^\dagger\ket{b}\bra{b+c_r}\Bigg).
\end{align*}

In order to determine which $\sigma_i$ terms have a non-zero contribution in the above sum after the projection operator is taken into account, it will be necessary to look at the structure of each such Pauli operator. To this end, we will denote by $\sigma_{i|\gamma}$ the action of $\sigma_i$ on qubit $\gamma$, and hence $\sigma_{i|\gamma} \in \{I,X,Y,Z\}$. For simplicity we assume each $\delta_i$ is encoded across 3 qubits (since there are only 8 possible angles). Thus, we have $1\leq \gamma\leq (m+ 3(m-n))$, where $1\leq \gamma \leq m$ identifies qubits received from Alice and the remaining $\gamma$ values identify the qubits containing $\delta_i$. Without loss of generality, we can assume that the qubits representing the values of $\delta$ remain unchanged by Bob's deviation, and hence we can take $\sigma_{i|\gamma} \in \{I,Z\}$ for all $m < \gamma$.

The probability of Alice accepting an incorrect outcome density operator is given by 
\AR{
p_\text{incorrect} = \mbox{Tr}(\sum _\nu \; p(\nu)\; P^{\nu}_\text{incorrect} \;  B_j(\nu) ) \, .
}
This can be calculated via the expression for $\mbox{Tr} ( P^{\nu}_\text{incorrect} \;  B_j(\nu))$ obtained earlier
\begin{align*}
p_\text{incorrect} &= \sum _\nu p(\nu)\mbox{Tr} (P^{\nu}_\text{incorrect} \;  B_j(\nu)) \\
&= \sum_{{k},b} \mbox{Tr} \bigg(\sum_\nu p(\nu) \displaystyle\sum _{i,j} \alpha_{ki} \alpha_{kj}^* \left( P_{\bot}  \otimes \ket {\eta_t^{\nu_T}} \bra {\eta_t^{\nu_T}} \right) \ket{b+c_r}\bra{b} \\
&~~~~~~~~~~~~~~~~~~~~~~~~~~~~~~~~~~~~~C_{\nu_C,b}\sigma_i \mathcal {P}  | \Psi^{\nu,b}\rangle \langle\Psi^{\nu,b} |\mathcal {P}^\dagger \sigma_j C_{\nu_C,b}^\dagger \ket{b}\bra{b+c_r} \bigg)\\
&= \sum_{b,i,j,{k}} \mbox{Tr} \bigg(\sum_\nu p(\nu) \alpha_{ki} \alpha_{kj}^* \left( P_{\bot}  \otimes \ket {\eta_t^{\nu_T}} \bra {\eta_t^{\nu_T}} \right) \ket{b+c_r}\bra{b}\\
&~~~~~~~~~~~~~~~~~~~~~~~~~~~~~~~~~~~~~ C_{\nu_C,b}\sigma_i \mathcal {P}  | \Psi^{\nu,b}\rangle \langle\Psi^{\nu,b} |\mathcal {P}^\dagger \sigma_j C_{\nu_C,b}^\dagger \ket{b}\bra{b+c_r}\bigg)
\end{align*}
By noting that $|b_j+c_{r_j}\rangle$ commutes with $| \Psi^{\nu,b}\rangle \langle\Psi^{\nu,b} |$ for all $j\neq t$, the above expression can be rewritten as
\begin{align*}
p_\text{incorrect} &= \sum_{b,i,j,{k}} \mbox{Tr} \bigg(\sum_\nu p(\nu) \alpha_{ki} \alpha_{kj}^* \left( P_{\bot}  \otimes \ket {\eta_t^{\nu_T}} \bra {\eta_t^{\nu_T}} \right) \\
&~~~~~~~~~~~~~~~~~~~~~~~~~~~~~~~~~~~~~ \ket{b_t}\bra{b} C_{\nu_C,b}\sigma_i \mathcal {P}  | \Psi^{\nu,b}\rangle \langle\Psi^{\nu,b} |\mathcal {P}^\dagger \sigma_j C_{\nu_C,b}^\dagger \ket{b}\bra{b_t}\bigg).
\end{align*}

In order to obtain an upper bound for the above expression we make use of sets of indices $\gamma$ of qubits such that the action of $\sigma_i$ at that position, $\sigma_{i|\gamma}$, is a particular Pauli operator, which we denote as follows:
\begin{align*}
A_i &= \{\gamma \;\;\; \mbox{s.t.} \;\;\;  \sigma_{i|\gamma} = I \mbox{ and } 1\leq \gamma \leq m\}\\
B_i &= \{\gamma \;\;\; \mbox{s.t.} \;\;\;  \sigma_{i|\gamma} = X \mbox{ and } 1\leq \gamma \leq m\}\\
C_i &= \{\gamma \;\;\; \mbox{s.t.} \;\;\;  \sigma_{i|\gamma} = Y \mbox{ and } 1\leq \gamma \leq m\}\\
D_i &= \{\gamma \;\;\; \mbox{s.t.} \;\;\;  \sigma_{i|\gamma} = Z \mbox{ and } 1\leq \gamma \leq m\}.
\end{align*}
Note that in the above we restrict attention to the set of qubits originally sent from Alice to Bob (which is why $1 \leq \gamma \leq m$), and disregard the action on Bob's private qubits. Additionally, we will make use of a superscript $O$ to denote subsets of the above sets subject to the constraint that $\gamma$ is an output qubit ($m-n<\gamma$). Thus, for example, $D^O_i = \{\gamma \;\;\; \mbox{s.t.} \;\;\;  \sigma_{i|\gamma} = Z \mbox{ and } m-n+1\leq \gamma \leq m\}$. 
We note that only $\sigma_i$ and $\sigma_j$ operators for which $\mbox{Tr}(P_{\bot}\sigma_i \mathcal{P}| \Psi^{\nu,b}\rangle \langle\Psi^{\nu,b} | \mathcal{P}^\dagger \sigma_j) \neq 0$ contribute to $p_\text{incorrect}$. With the above definitions in place, we can express succinctly a necessary condition for this to hold as $|B_i| + |C_i| + |D^O_i| \geq 1$ (denoted as $i\in E_i$) and $|B_j| + |C_j| + |D^O_j| \geq 1$ (denoted as $j\in E_j$). That is to say, one or both of the following has happened: $\sigma_i$ ($\sigma_j$) has produced an incorrect outcome for one or more of the measurement results and hence $|B_i\setminus B_i^O| + |C_i \setminus C_i^O| \geq 1$ ($|B_j\setminus B_j^O| + |C_j\setminus C_j^O| \geq 1$) or $\sigma_i$ ($\sigma_j$) acts non-trivially on the quantum output and hence $|B_i^O| + |C_i^O| + |D_i^O| \geq 1$ ($|B_i^O| + |C_j^O| + |D_j^O| \geq 1$). Using this set notion and by taking the trace over the subspace of the measurement results except for the trap qubit we obtain
\begin{align*}
p_\text{incorrect} =& \sum_{{k},b} \sum_{i\in E_i} \sum_{j \in E_j} \mbox{Tr} \Bigg(\sum _{\nu} p(\nu)  \alpha_{ki} \alpha_{kj}^* \left( P_{\bot}  \otimes \ket {\eta_t^{\nu_T}} \bra{\eta_t^{\nu_T}}\right)\\
&~~~~~~~~~~~~~~~~~~~~~~~~~~~~~~~~~~\ket{b_t}\bra{b}C_{\nu_{C},b}\sigma_i \mathcal {P}  | \Psi^{\nu,b}\rangle \langle\Psi^{\nu,b} |\mathcal {P}^\dagger \sigma_j C_{\nu_{C},b}^\dagger\ket{b}\bra{b_t}\Bigg),
\end{align*}
where we take $\ket{b_t}$ to have have unit dimension if $t\in O$. The reason for doing this is to allow a uniform treatment of trap qubits independent of whether or not the trap occurs on a measured qubit. Taking $b' = \{b_i\}_{i\neq t}$, a substring of $b$ which excludes the value for the trap measurement, the above equation can be written as
\begin{align*}
p_\text{incorrect} =&  \sum_{{k},b} \sum_{i\in E_i} \sum_{j \in E_j} \mbox{Tr} \Bigg(\sum _{\nu} p(\nu)  \alpha_{ki} \alpha_{kj}^* \left( P_{\bot}  \otimes \left(\ket {\eta_t^{\nu_T}} \bra{\eta_t^{\nu_T}}b_t\rangle\bra{b_t}\right)\right)\\
&~~~~~~~~~~~~~~~~~~~~~~~~~~~~~\bra{b'}C_{\nu_{C},b}\sigma_i \mathcal {P}  | \Psi^{\nu,b}\rangle \langle\Psi^{\nu,b} | \mathcal {P}^\dagger \sigma_j C_{\nu_{C},b}^\dagger \ket{b'}\Bigg)\\
\end{align*}
Note in the above that if the trap is measured we have $\bra{\eta_t^{\nu_T}} {b_t}\rangle = \delta_{\eta_t^{\nu_T},b_t}$, otherwise $\ket{b_t}\bra{b_t} = 1$. Hence we have

\begin{align*}
p_\text{incorrect} =& \sum_{{k},b'} \sum_{i\in E_i} \sum_{j \in E_j} \mbox{Tr} \Bigg(\sum _{\nu} p(\nu)  \alpha_{ki} \alpha_{kj}^* \left( P_{\bot}  \otimes \ket {\eta_t^{\nu_T}} \bra {\eta_t^{\nu_T}}\right)\ket{b'}\bra{b'}\\
&~~~~~~~~~~~~~~~~~~~~~~~~~~~~~C_{\nu_{C},b'}\sigma_i \mathcal {P}  | \Psi^{\nu,b'}\rangle \langle\Psi^{\nu,b'} | \mathcal {P}^\dagger \sigma_j C_{\nu_{C},b'}^\dagger \Bigg)\\
=& \sum_{{k},b'} \sum _{\nu} p(\nu) \mbox{Tr} \Bigg(\left( P_{\bot}  \otimes \ket {\eta_t^{\nu_T}} \bra {\eta_t^{\nu_T}}\right) \ket{b'}\bra{b'}\\
&~~~~~~~~~~~~~~~~~~~~~~~~~~~~~C_{\nu_{C},b'}\left( \sum_{i\in E_i} \alpha_{ki} \sigma_i\right) \mathcal {P}  | \Psi^{\nu,b'}\rangle \langle\Psi^{\nu,b'} | \mathcal {P}^\dagger \left( \sum_{i\in E_i} \alpha_{ki} \sigma_i\right)^\dagger C_{\nu_{C},b'}^\dagger \Bigg)\\
\leq& \sum_{{k},b'} \sum _{\nu} p(\nu) \mbox{Tr} \Bigg(\left(\ket {\eta_t^{\nu_T}} \bra {\eta_t^{\nu_T}}\otimes \ket{b'}\bra{b'}\right) \\ 
&~~~~~~~~~~~~~~~~~~~~~~~~~~~~~C_{\nu_{C},b'}\left( \sum_{i\in E_i} \alpha_{ki} \sigma_i\right) \mathcal {P}  |\Psi^{\nu,b'}\rangle\langle\Psi^{\nu,b'}|\mathcal {P}^\dagger \left( \sum_{i\in E_i} \alpha_{ki} \sigma_i\right)^\dagger C_{\nu_{C},b'}^\dagger \Bigg)\\
=& \sum_{{k},b'} \sum _{\nu} p(\nu) \mbox{Tr} \Bigg(\left(\ket {\eta_t^{\nu_T}} \bra {\eta_t^{\nu_T}}\otimes \ket{b'}\bra{b'}\right) \left( \sum_{i\in E_i} \alpha_{ki} \sigma_i\right) \mathcal {P}  |\Psi^{\nu,b'}\rangle\langle \Psi^{\nu,b'}| \mathcal {P}^\dagger \left( \sum_{i\in E_i} \alpha_{ki} \sigma_i\right)^\dagger \Bigg),
\end{align*}
where the inequality follows from the fact that the projector, $P_\bot$, acts on a positive semi-definite matrix, and the last equality follows from the fact that both remaining projectors act as the identity on qubits in $O$.

Next, we attempt to show that a necessary requirement for a term in the above summation over $i$ and $j$ to be non-zero is that $i=j$. As per the proof of blindness, summing over $\nu_C$ yields the maximally mixed state of the system received from Alice. Hence we have
\begin{align*}
p_\text{incorrect} \leq& \sum_{k,b',\nu_T} \sum_{i \in E_i} \sum_{j \in E_j} \alpha_{ik} \alpha_{jk}^* p(\nu_T) \mbox{Tr} \Bigg(\left(\ket {\eta_t^{\nu_T}} \bra {\eta_t^{\nu_T}}\otimes \ket{b'}\bra{b'}\right) \\
&~~~~~~~~~~~~~~~~~~~~~~~~~~~~~\sigma_i \left(\ket {\eta_t^{\nu_T}} \bra {\eta_t^{\nu_T}} \otimes \ket{\delta_t}\bra{\delta_t} \otimes {\frac{I}{\text{Tr}(I)}} \right) \sigma_j \Bigg)\\
=& \sum_{k,\nu_T} \sum_{i \in E_i} \sum_{j \in E_j} \alpha_{ik} \alpha_{jk}^* p(\nu_T) \mbox{Tr} \Bigg(\ket {\eta_t^{\nu_T}} \bra {\eta_t^{\nu_T}}\sigma_i \left(\ket {\eta_t^{\nu_T}} \bra {\eta_t^{\nu_T}} \otimes \ket{\delta_t}\bra{\delta_t} \otimes {\frac{I}{\text{Tr}(I)}} \right) \sigma_j \Bigg)\\
=& \sum_{k,\nu_T} \sum_{i \in E_i} \sum_{j \in E_j} \alpha_{ik} \alpha_{jk}^* p(\nu_T) \mbox{Tr} \Bigg(\bra {\eta_t^{\nu_T}}\sigma_i \left(\ket {\eta_t^{\nu_T}} \bra {\eta_t^{\nu_T}} \otimes \ket{\delta_t}\bra{\delta_t} \otimes {\frac{I}{\text{Tr}(I)}} \right) \sigma_j  \ket {\eta_t^{\nu_T}}\Bigg).
\end{align*}
As all Pauli matrices other than the identity are traceless, any terms in the sum which are non-zero necessarily have $\sigma_{i|\gamma} = \sigma_{j|\gamma}$ everywhere except for $\gamma=t$ and the corresponding delta register. We then consider the two cases corresponding to whether the trap is located in the quantum output or not separately. If $t\in O$ then the delta register does not exist, and using the fact that $\sum_{\theta_t,r_t}\mbox{Tr} \big(\bra {\eta_t^{\nu_T}} \sigma_i \ket {\eta_t^{\nu_T}} \bra {\eta_t^{\nu_T}}  \sigma_j \ket {\eta_t^{\nu_T}}\big)=0$, unless $\sigma_{i|t} = \sigma_{j|t}$, we arrive at the conclusion that the only terms which contribute to $p_\text{incorrect}$ are those where $\sigma_i = \sigma_j$. If, on the other hand, $t\notin O$, then averaging over $r_t$ alone is sufficient to give $\mbox{Tr} \big(\bra {\eta_t^{\nu_T}} \sigma_i \ket {\eta_t^{\nu_T}} \bra {\eta_t^{\nu_T}}  \sigma_j \ket {\eta_t^{\nu_T}}\big) = 0$, and hence $\sigma_{i|t} = \sigma_{j|t}$. In this case, averaging over $\theta$ yields the $\delta_t$ register in the maximally mixed state, and hence as before $\sigma_i$ and $\sigma_j$ must act identically on these qubits too, in order to avoid contributing zero to the value of $p_\text{incorrect}$. Consequently the only terms which contribute are those for which $\sigma_i = \sigma_j$. Using this identity with our previous expression for $p_\text{incorrect}$, we obtain
\begin{align*}
p_\text{incorrect} \leq& \sum_{k,\nu_T} \sum_{i \in E_i} \alpha_{ik} \alpha_{ik}^* p(\nu_T) \mbox{Tr} \Bigg(\bra {\eta_t^{\nu_T}}\sigma_i \left(\ket {\eta_t^{\nu_T}} \bra {\eta_t^{\nu_T}} \otimes \ket{\delta_t}\bra{\delta_t} \otimes {\frac{I}{\text{Tr}(I)}} \right) \sigma_i  \ket {\eta_t^{\nu_T}}\Bigg)\\
=& \sum_{k,\nu_T} \sum_{i \in E_i} |\alpha_{ik}|^2 p(\nu_T) \mbox{Tr} \left(\bra {\eta_t^{\nu_T}}\sigma_{i|t} \ket {\eta_t^{\nu_T}} \bra {\eta_t^{\nu_T}}  \sigma_{i|t}  \ket {\eta_t^{\nu_T}}\right) \\
=& \sum_{k,\nu_T} \sum_{i \in E_i} |\alpha_{ik}|^2 p(\nu_T) \left(\bra {\eta_t^{\nu_T}}  \sigma_{i|t} \ket {\eta_t^{\nu_T}}\right)^2\\
=&  \frac{1}{16m}{\sum_k} \sum_{i\in E_i} |\alpha_{ki}|^2  \sum_{t,r_t,\theta_t}  \left(\bra {\eta_t^{\nu_T}} \sigma_{i|t} \ket{\eta_t^{\nu_T}}\right)^2\\
=& \frac{1}{16m} {\sum_k} \displaystyle\sum _{i\in E_i} |\alpha_{ki}|^2 \Bigg(\displaystyle\sum_{t\leq m-n, \theta_t, r_t}  \big( \bra {\eta_t^{\nu_T}} \sigma_{i|t} \ket {\eta_t^{\nu_T}}\big)^2 + \displaystyle\sum_{m-n< t, \theta_t, r_t}  \big( \bra {\eta_t^{\nu_T}} \sigma_{i|t} \ket {\eta_t^{\nu_T}}\big)^2\Bigg) \\
=& \frac{1}{16m} {\sum_k} \displaystyle\sum _{i\in E_i} |\alpha_{ki}|^2 \left(\displaystyle\sum_{t\leq m-n, \theta_t, r_t}  \big(\bra {r_t} \sigma_{i|t} \ket {r_t} \big)^2  + \displaystyle\sum_{m-n< t, \theta_t, r_t}  \big( \bra {+_{\theta_t}} \sigma_{i|t} \ket {+_{\theta_t}}\big)^2\right)\\
=& \frac{1}{16m} {\sum_k} \displaystyle\sum _{i\in E_i} |\alpha_{ki}|^2 \left(\left(16|A_i\setminus A_i^O| + 16|D_i\setminus D_i^O|\right) +\left(8|B_i^O| + 8|C_i^O| + 16|A_i^O|\right)\right)\\
=& \frac{1}{2m} {\sum_k}\displaystyle\sum _{i\in E_i} |\alpha_{ki}|^2 \left(2|A_i| + 2|D_i\setminus D_i^O| + |B_i^O| + |C_i^O |\right).
\end{align*}

This can be further simplified, since $|A_i|+|B_i|+|C_i|+|D_i| = m$, giving
\begin{align*}
p_\text{incorrect} &\leq \frac{1}{2m} {\sum_k} \displaystyle\sum _{i\in E_i} |\alpha_{ki}|^2 \left( 2m - 2(|B_i| + |C_i| + |D_i^O|) + |B_i^O| + |C_i^O | \right)\\
&\leq \frac{1}{2m} {\sum_k}\displaystyle\sum _{i\in E_i} |\alpha_{ki}|^2 \left(2m - |B_i| - |C_i| - 2|D_i^O|\right)\\
&\leq \frac{1}{2m} {\sum_k} \displaystyle\sum _{i\in E_i} |\alpha_{ki}|^2 \left(2m - 1\right)\\
&\leq 1 - \frac{1}{2m}
\end{align*}
for the general case. However, for the specific case of only classical output, this bound can be made tighter by performing the simplification in a different way, since $|B_i^O|=|C_i^O|=|D_i^O|=0$, and hence
\begin{align*}
p_\text{incorrect} &\leq \frac{1}{2m} {\sum_k}\displaystyle\sum _{i\in E_i} |\alpha_{ki}|^2 \left(2|A_i| + 2|D_i\setminus D_i^O| + |B_i^O| - |C_i^O |\right)\\
&=\frac{1}{m} {\sum_k}\displaystyle\sum _{i:|B_i|+|C_i| \geq 1} |\alpha_{ki}|^2 \left(|A_i| + |D_i|\right) \\
&= \frac{1}{m} {\sum_k}\displaystyle\sum _{i:|B_i|+|C_i| \geq 1} |\alpha_{ki}|^2 \left(m - |B_i|-|C_i|\right) \\
&\leq \frac{1}{m} {\sum_k}\displaystyle\sum _{i:|B_i|+|C_i|\geq 1} |\alpha_{ki}|^2 \left(m - 1\right)\\
&\leq 1 - \frac{1}{m}.
\end{align*}\end{proof}

This single trap construction will be generalised in the next section to allow for exponential supression of the probability of accepting an incorrect outcome even in the case of quantum output. We finish this section by showing that even this simple construction can be used to verify universal quantum computation, using the cylinder brickwork state presented in Section \ref{s-resource}.

It is easy to verify that if Alice chooses a random row of a cylinder graph $\mathcal{G}^C_{n \times m}$ (Figure \ref{fig:cylinder}) and prepares all the qubits of that row in the states $\ket{z_i}$ where $z_i \in_R \{0,1\}$ and the rest of nodes in the state $\ket +$ then after entangling according to the cylinder brickwork graph the obtained state is a $\mathcal{G}_{(n-1) \times m} \bigotimes_{i=1}^m \ket {z_i}$. By choosing a random trap location and a dummy tape which contains its neighbourhood we can construct a single-trap verifiable universal blind quantum computing protocol, given by Protocol \ref{prot:Cylinder-DTBQC} and illustrated in Figure \ref{fig:Cylinder-DTBQC}.

\begin{figure}[h]
\begin{center}
\scalebox{0.5}[0.5]{\includegraphics{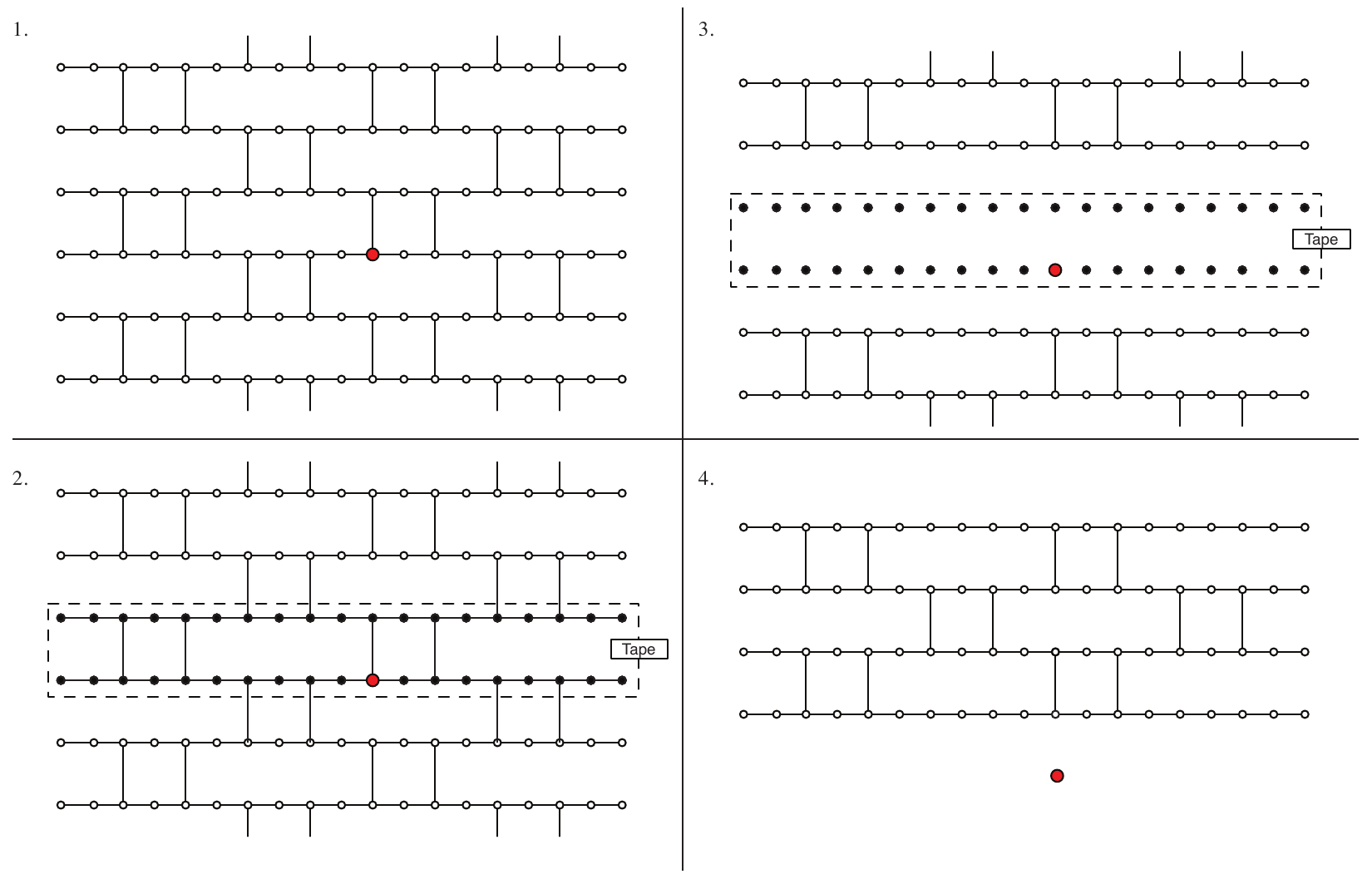}}
\caption{Single-trap verifiable universal blind quantum computation using the cylinder brickwork state: 1. A random qubit is chosen to be the trap qubit, the (red) filled node. 2. All other vertices in the tape containing the trap qubit the solid (black) nodes, are set to be dummy qubits. 3. This results in an isolated trap qubit in the state $\ket{+_{\theta}}$ together with many dummy qubtis after entaglement operations are applied by the server. 4. The net result, after discarding the dummy qubits, is a disentangled trap qubit in a product state with a brickwork state.}
\label{fig:Cylinder-DTBQC}
\end{center}
\end{figure}

\begin{algorithm}[h]
\caption{Single-Trap Verifiable Universal Blind Quantum Computation}
 \label{prot:Cylinder-DTBQC}
 \vskip 0.2 cm
\begin{itemize}
\item \textbf{Alice's resources} \\
\noindent -- A graph $G = \mathcal{G}^C_{n\times m}$ and a randomly chosen vertex $t$ of $G$. \\  
\noindent -- The rest of the resources are the same as Protocol \ref{prot:DBQC} where $\phi_{i} = 0$ for $i=t$ and $i\in D$ where $D$ contains the set of all vertices in a tape $T$ that contains position $t$ and all of its neighbours. 

\item \textbf{Follow the steps of Protocol \ref{prot:DBQC}.}

\item \textbf{Accept/Reject} \\
\noindent --  After obtaining all the output qubits from Bob, if the trap qubit, $t$, is an output qubit, Alice measures it with angle $\delta_{t} = \theta_t + r_t \pi$ to obtain $b_t$. \\
\noindent -- Alice accepts if $b_{t} = r_{t}$. 

\end{itemize}
\end{algorithm}

\begin{corollary}\label{t-Cylinder-DTBQC} Protocol \ref{prot:Cylinder-DTBQC} is universal, blind while leaking at most $m$ and $n$ as well as being is $(1 - \frac{1}{2m})$-verifiable in general and $(1 - \frac{1}{m})$-verifiable in the case of classical output. 
\end{corollary}
\begin{proof} Since the dummy qubits are prepared in eigenstates of Pauli $Z$ operator, they remain in a product state with the rest of the system after the entangling operations are applied by Bob. The result, as depicted in Figure \ref{fig:Cylinder-DTBQC}, is that the trap qubit also remains in a product state, and a brickwork state is prepared in the subsystem excluding $T$. The universality property then follows directly from the universality of the brickwork state from Theorem \ref{thm:universal}. As Protocol \ref{prot:Cylinder-DTBQC} is a special case of Protocol \ref{prot:DBQC}, the blindness property follows directly from Theorem \ref{t-DBQC} and therefore the angles of measurement $\phi_{i}$ remain secret from Bob. Moreover, the universality of the cylinder brickwork state guarantees that Bob's knowledge of $\mathcal{G}^C_{n\times m}$ does not reveal anything about the underlying computation except $n$ and $m$. As Protocol \ref{prot:Cylinder-DTBQC} is also a special case of Protocol \ref{prot:DTBQC}, the verifiability property follows directly from Theorem \ref{t-authen1}.
\end{proof}

\section{Probability Amplification for Universal Verifiable Blind QC}\label{s-amplification}

In the previous section we presented a very simple verifiable protocol where the probability of Bob succeeding in making Alice accept an incorrect outcome density operator was strictly less than 1. Building upon that simple construction, by adding more traps and making the computation fault tolerant, we can make the probability of Alice accepting an incorrect outcome density operator as small as required. The central idea is to design a protocol with $O(N)$ many traps in essentially random locations, where $N$ is the number of qubits in the protocol, to increase the probability of any local error being detected. The fault-tolerance is added to increase the minimum weight of any operator which leads to an incorrect outcome, and hence further increase the probability of detection. Here, and in what follows, the weight of a Pauli operator is defined to be the number of qubits upon which it acts non-trivially. First, given such a protocol we show how it amplifies the verification parameter. We then present the central contribution of this paper, a new universal verifiable blind quantum computing protocol that achieves the probability amplification without any such assumptions.

\begin{theorem} \label{t-authenT}
Let $\mathcal P$ be a blind quantum computing protocol on $N$ qubits with $N_T$ isolated traps in the states $\ket {+_{\theta_t}}$ at a set of positions $T$ chosen uniformly at random. Let $N_T/N$ be a constant $c$ and assume that the computation is encoded in such a way that any Pauli error with weight less than $d$ will be corrected or an error will be detected. Then the protocol is $(1 - \frac{c}{2})^d$-verifiable in general, and $(1 - c)^d$-verifiable in the case of purely classical output.
\end{theorem}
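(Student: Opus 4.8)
The plan is to reuse the density-operator machinery of Theorem~\ref{t-authen1} almost verbatim, with two changes: the single trap becomes a family of $N_T$ traps, and the acceptance projector onto incorrect outcomes becomes $P_\bot \otimes \bigotimes_{t\in T}\ket{\eta_t^{\nu_T}}\bra{\eta_t^{\nu_T}}$, demanding that \emph{every} trap report its expected value. First I would recast $B_j(\nu)$ as in Figure~\ref{fig:devfig2}, commuting Bob's deviation past the measurement layer so that, once his private register is traced out, it acts as a CPTP map with Kraus operators expanded in the Pauli basis, $\chi_k = \sum_i \alpha_{ki}\sigma_i$ with $\sum_{k,i}|\alpha_{ki}|^2 = 1$. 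Substituting this expansion into $\mathrm{Tr}(\sum_\nu p(\nu)\, P_\text{incorrect}^\nu B_j(\nu))$ produces a double sum over Pauli terms $\sigma_i,\sigma_j$.

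The second step is to diagonalise this sum. Averaging over $\nu_C$ leaves the non-trap part of Alice's register maximally mixed (exactly the blindness computation behind Theorem~\ref{t-blind}), so the trace vanishes unless $\sigma_i$ and $\sigma_j$ agree on every non-trap qubit; then averaging over the secret pair $(\theta_t,r_t)$ of each trap independently --- the same step that produced the factor $(\bra{\eta_t^{\nu_T}}\sigma_{i|t}\ket{\eta_t^{\nu_T}})^2$ in Theorem~\ref{t-authen1} --- forces $\sigma_{i|t}=\sigma_{j|t}$ at every trap position (and on the associated $\delta_t$ register). Iterating over all $N_T$ traps collapses the double sum to its diagonal, giving $p_\text{incorrect}\le \sum_{k,i}|\alpha_{ki}|^2 q_i$, where $q_i$ is the probability that the single Pauli $\sigma_i$ simultaneously evades every trap and corrupts the logical output.

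The heart of the argument is bounding $q_i$ with the two hypotheses. By the fault-tolerance assumption, any $\sigma_i$ that produces an incorrect output not already corrected or flagged by the code must act, in the measurement frame of Figure~\ref{fig:devfig2}, on at least $d$ qubits in a trap-visible way (an $X$ or $Y$ on a measured qubit, or any non-identity on an output qubit). Because blindness (Theorem~\ref{t-DBQC}) makes Bob's choice of $\sigma_i$ independent of the trap set $T$, I would then average over the uniformly random placement of the $N_T$ traps: each of these $\ge d$ support qubits is a trap with probability $c=N_T/N$, and conditioned on being a trap it fails to flag the deviation with probability at most $\tfrac12$ in general (the worst case being an $X$ or $Y$ hitting an output trap, where averaging $\cos^2\theta_t$ or $\sin^2\theta_t$ over the eight angles yields $\tfrac12$). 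Hence each support qubit contributes an escape factor of at most $1-\tfrac c2$, so $q_i\le (1-\tfrac c2)^d$, and summing against $\sum_{k,i}|\alpha_{ki}|^2=1$ yields $p_\text{incorrect}\le (1-\tfrac c2)^d$. In the purely classical case there are no output traps, every hit trap flags with certainty, the per-qubit escape factor tightens to $1-c$, and one obtains $(1-c)^d$.

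The step I expect to be the main obstacle is making this product bound rigorous. The $\ge d$ support qubits are not genuinely independent, since $T$ is a fixed-size random subset rather than an i.i.d.\ inclusion, so I would recover the clean factor either by sequential conditioning --- each successive support position is a trap with conditional probability at least $c$, and the fixed-size constraint only lowers the escape probability relative to the i.i.d.\ estimate --- or by bounding the hypergeometric factor $\prod_{j=0}^{d-1}\frac{N-N_T-j}{N-j}\le (1-c)^d$ directly and folding in the $\tfrac12$ detection weight of output traps. The delicate bookkeeping is to ensure that the weight-$\ge d$ condition supplied by the code and the trap-detection probabilities both refer to the \emph{same} support of $\sigma_i$, so that the code-distance and trap-placement estimates compose into a single exponent $d$.
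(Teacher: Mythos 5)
Your proposal is correct in substance and follows the paper's strategy for the first two stages (recasting Bob's deviation as in Figure~\ref{fig:devfig2}, Pauli-expanding the Kraus operators, and using the blindness averaging over $\nu_C$ and over $(\theta_t,r_t)$ to kill all off-diagonal $\sigma_i\neq\sigma_j$ terms), but it diverges genuinely at the key step of handling the randomness of the trap placement. The paper sidesteps the dependence between trap positions by a conditioning trick: it \emph{notionally partitions} the $N$ qubits into $N_T$ blocks of size $1/c$ with exactly one trap uniformly placed in each block, argues that revealing this partition is extra information that can only help Bob, and then runs the single-trap analysis of Theorem~\ref{t-authen1} independently within each block, obtaining per-block factors $\big(1-\tfrac{c\,w_{i\gamma}}{2}\big)$ that are combined via the recursive inequality $1-\tfrac{ac}{2}\leq\big(1-(a-1)\tfrac{c}{2}\big)\big(1-\tfrac{c}{2}\big)$ into $(1-\tfrac{c}{2})^{w_i}\leq(1-\tfrac{c}{2})^d$. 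You instead work directly with the uniform random subset $T$ and bound the escape probability per support qubit, invoking sequential conditioning or the hypergeometric product $\prod_{j=0}^{d-1}\tfrac{N-N_T-j}{N-j}\leq(1-c)^d$. Your route is sound: the hypergeometric product is fully rigorous and settles the classical-output case (where a trap hit detects with certainty), and your claim that the fixed-size constraint only helps is correct in general, though making it rigorous when the $\tfrac12$ detection weight of output traps must be folded in requires a negative-association (or convex-ordering) lemma --- for sampling without replacement the indicators $Z_w$ are negatively associated, so $\mathbb{E}\big[\prod_w f_w(Z_w)\big]\leq\prod_w\mathbb{E}\big[f_w(Z_w)\big]$ for the non-increasing escape functions $f_w$, which yields exactly $(1-\tfrac{c}{2})^d$. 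What each approach buys: yours stays faithful to the theorem's literal hypothesis (uniformly random $T$) and gives a more transparent per-qubit escape-factor picture, at the price of importing a probabilistic lemma you only gesture at; the paper's partition argument eliminates all dependence issues by fiat (extra information can only increase Bob's success probability), at the price of the per-block set bookkeeping and the slightly less intuitive recursion. Your bookkeeping concern at the end is well-placed but resolves exactly as you suspect: the trap-visible support (the $X/Y$ entries on measured qubits plus non-identity entries on output qubits, i.e.\ $|B_i|+|C_i|+|D_i^O|$) is the same set to which both the code-distance hypothesis and the trap-detection averaging refer, which is precisely how the paper composes the two into the single exponent $d$.
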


\begin{proof}
In order to exploit Theorem \ref{t-authen1}, we notionally partition the qubits into independent sets with one single trap qubit in each set. These partitions amount to extra information about the location of the trap qubits, and hence their inclusion can only serve to increase the probability of Bob convincing Alice to accept an incorrect state. Thus the bound we obtain with this additional information is still an upper bound on the probability of Alice accepting an incorrect output when these partitions are unknown. There are $N_T$ many such sets $S_\gamma$ with $1/c$ many qubits in each set. We adopt a similar proof strategy to that used to prove Theorem \ref{t-authen1}, taking
\begin{equation*}
P^{\nu}_\text{incorrect} = P_{\bot} \bigotimes_{t\in T} \ket {\eta_t^{\nu_T}} \bra {\eta_t^{\nu_T}}
\end{equation*}
as the projection onto the subspace of incorrect outcomes. As in the proof of Theorem \ref{t-authen1}, only those Pauli operators contribute to $p_\text{incorrect}$ where one or both of the following has happened: $\sigma_i$ has produced an incorrect outcome for some of the measurement results $b_i$ or $\sigma_i$ acts non-trivially on the quantum output. Now due to the error-detection property of the encoding assumed in the statement of the theorem we need to consider only those $\sigma_i$ where $|B_{i}|+|C_{i}|+|D_{i}^O| \geq d$. Following the steps of the proof of Theorem \ref{t-authen1} we obtain
\begin{align*}
p_\text{incorrect} &= \sum _\nu p(\nu) \mbox{Tr} (P^{\nu}_\text{incorrect} B_j(\nu) ) \\
&\leq {\sum_k} \displaystyle\sum _{i:|B_{i}|+|C_{i}|+|D_{i}^O| \geq d} |\alpha_{ki}|^2 \displaystyle\sum_{T}  p(T) \prod_{t\in T} \left( \sum_{\theta_t, r_t}p(\theta_t) p(r_t) \left(\bra {\eta_t^{\nu_T}} \sigma_{i|t} \ket{\eta_t^{\nu_T}}\right)^2 \right).
\end{align*}
Here we can exploit the structure we have introduced through the sets $S_\gamma$
\begin{align*}
p_\text{incorrect} \leq {\sum_k} \displaystyle\sum _{i:|B_{i}|+|C_{i}|+|D_{i}^O| \geq d} |\alpha_{ki}|^2 \displaystyle \prod_{\gamma=1}^{N_T}\sum_{t_\gamma, \theta_{t_\gamma}, r_{t_\gamma}}  p({t_\gamma})p(\theta_{t_\gamma})p(r_{t_\gamma}) \langle \eta^\nu_{t_\gamma} |\sigma_{i|{t_\gamma}} |\eta^\nu_{t_\gamma}\rangle^2.
\end{align*}
where $t_\gamma$ is taken to be the location of the trap qubit in set $S_\gamma$. Rearranging the above and substituting in the values of $p({t_\gamma})$, $p(\theta_{t_\gamma})$, and $p(r_{t_\gamma})$ we obtain
\begin{align*}
p_\text{incorrect} \leq {\sum_k} \displaystyle\sum _{i:|B_{i}|+|C_{i}|+|D_{i}^O| \geq d} |\alpha_{ki}|^2 \prod_{\gamma=1}^{N_T} \displaystyle\sum_{t_\gamma, \theta_{t_\gamma}, r_{t_\gamma}} \frac{c}{16} \langle \eta^\nu_{t_\gamma} |\sigma_{i|{t_\gamma}} |\eta^\nu_{t_\gamma}\rangle ^2.
\end{align*}
Note that within each set the position of the trap is chosen uniformly at random and so the probability of detection by that trap corresponds to the bound obtained for Theorem \ref{t-authen1}. Going through the steps of the proof of Theorem \ref{t-authen1} we obtain
\begin{align*}
p_\text{incorrect}  &\leq {\sum_k}  \displaystyle\sum _{i:|B_{i}|+|C_{i}|+|D_{i}^O| \geq d} |\alpha_{ki}|^2 \displaystyle\prod_{\gamma=1}^{N_T} \frac{c}{2}\big(2|A_{i\gamma}| + 2|D_{i\gamma}\setminus D_{i\gamma}^O| + |B_{i\gamma}^O|+|C_{i\gamma}^O|\big)\\
&= {\sum_k}  \displaystyle\sum _{i:|B_{i}|+|C_{i}|+|D_{i}^O| \geq d} |\alpha_{ki}|^2\displaystyle\prod_{\gamma=1}^{N_T} \frac{c}{2}\bigg(\frac{2}{c} - 2|D_{i\gamma}^O| -|B_{i\gamma}|-|C_{i\gamma}|-|B_{i\gamma}\setminus B_{i\gamma}^O| -|C_{i\gamma}\setminus C_{i\gamma}^O|\bigg),
\end{align*}
where we use the additional $\gamma$ subscript on sets $|A_{i\gamma}|, ...,|D_{i\gamma}|$ to indicate subsets of the respective sets, subject to the restriction that the elements are also in $S_\gamma$. For convenience we define $w_{i\gamma} = |B_{i\gamma}| + |C_{i\gamma}| + |D_{i\gamma}^O|$ and $w_i = |B_i| + |C_i| + |D_i^O|$. Thus we obtain
\begin{align*}
p_\text{incorrect} &\leq {\sum_k}  \displaystyle\sum _{i : w_{i} \geq d} |\alpha_{ki}|^2\displaystyle\prod_{\gamma=1}^{N_T} \frac{c}{2} \left(\frac{2}{c} - w_{i\gamma} -|B_{i\gamma}\setminus B_{i\gamma}^O|-|C_{i\gamma}\setminus C_{i\gamma}^O|-|D_{i\gamma}^O|\right)\\
&\leq {\sum_k} \displaystyle\sum _{i : w_{i} \geq d} |\alpha_{ki}|^2\displaystyle\prod_{\gamma=1}^{N_T}  \left(1 - \frac{c w_{i\gamma}}{2}\right).
\end{align*}
We now make use of the fact that, for any positive $a$, $1-\frac{ac}{2} \leq (1-(a-1)\frac{c}{2})(1-\frac{c}{2})$. As $w_{i\gamma}$ is a non-negative integer, we can recursively apply this identity to obtain
\begin{align*}
p_\text{incorrect} &\leq {\sum_k} \displaystyle\sum _{i : w_{i} \geq d} |\alpha_{ki}|^2 \displaystyle\prod_{\gamma=1}^{N_T} \big(1 - \frac{c}{2}\big)^{w_{i\gamma}}\\
&= {\sum_k} \displaystyle\sum _{i : w_{i} \geq d} |\alpha_{ki}|^2 (1 - \frac{c}{2})^{\sum_{\gamma=1}^{N_T} w_{i\gamma}}\\
&= {\sum_k} \displaystyle\sum _{i : w_{i} \geq d} |\alpha_{ki}|^2 (1 - \frac{c}{2})^{w_i}\\
&\leq {\sum_k}  \displaystyle\sum _{i : w_{i} \geq d} |\alpha_{ki}|^2 (1 - \frac{c}{2})^{d}\\
&\leq (1 - \frac{c}{2})^{d}.
\end{align*}
In the case of purely classical output this bound can be improved, since $|B_i^O|=|C_i^O|=|D_i^O|=0$. Going through the same steps with this additional constraint gives
\begin{align*}
p_\text{incorrect} &\leq {\sum_k} \displaystyle\sum _{i : w_{i} \geq d} |\alpha_{ki}|^2\displaystyle\prod_{\gamma=1}^{N_T}  \left(1 - {c w_{i\gamma}}\right)\\
&\leq (1 - c)^{d}.\end{align*}\end{proof}

We can now present the final contribution of this paper, a new scheme for blind quantum computing which has all the previously described properties: correctness, universality, blindness of angles, input, output and computation and more importantly verifiability with exponentially small probability of error. Roughly speaking, universality and correctness  will be obtained by using dotted-complete graph states (similar to Protocol \ref{prot:UBQC2}). In order to achieve verification we exploit the idea of dummy qubits (similar to Protocol \ref{prot:DBQC}) to create, blindly, out of a dotted-complete graph state $\tilde{\mathcal K}_{3N}$ three disconnected smaller dotted-complete graph states $\tilde{\mathcal K}_{N}$. Then we use two of these graph states to create $O(N)$ isolated trap qubits at random positions (similar to Protocol \ref{prot:DTBQC}). The final step is to perform the actual computation over the remaining dotted-complete graph state in such a way that the stated property in Theorem \ref{t-authenT} is also satisfied. That is, to have the measurement pattern encoded in such a way that any Pauli error with weight less than $d$, will be either corrected or detected. Such an encoding exists through the fault tolerant one-way quantum computing scheme of \cite{RHG07}. All that is needed is to create a three dimensional cluster state from the dotted-complete graph state and proceed with the fault tolerant computation scheme of Raussendorf, Harrington and Goyal \cite{RHG06,RHG07}{\footnote{{In its original form, this scheme requires $Z$-basis measurements to be made adaptively, which is not easily implementable using dummy qubits. However, the location of the dummy qubits can be fixed by always including a correction step for each gate teleportation in the logical circuit, where the angle of the correction is adapted based on the outcome of the teleportation. An alternative option is to use a slightly modified version of the scheme due to Morimae and Fujii \cite{FM12}, which requires only measurements in the $X$-$Y$ plane. Although we assume the first scenario here, an almost identical proof applies to the second scenario.}}}. 

We first give a concrete protocol for choosing the required parameters for the Raussendorf, Harrington and Goyal scheme, given the desired security threshold for the verification, see Protocol \ref{prot:Mchoice}. This will fix the size of the dotted-graph state, $N$, required for the actual computation. However as stated above, we will start with a dotted-complete graph state of size $3N$ and will break it into three smaller dotted-complete graph states of size $N$ each, see Figure \ref{fig:prot}. We will refer to these graphs as the \emph{white trap graph}, the \emph{black trap graph} and the \emph{computation graph}. In the white trap graph all the vertices in $P(\tilde{K}_N)$ will become isolated traps (called \emph{white traps}) by choosing all the vertices in $A(\tilde{K}_N)$ to be dummy qubits. Similarly in the black trap graph all the vertices in $A(\tilde{K}_N)$ will become isolated traps (called \emph{black traps}) by choosing all the vertices in $P(\tilde{K}_N)$ to be dummy qubits. We have to choose both type of vertices ($A(\tilde{K}_{3N})$ and $P(\tilde{K}_{3N})$) to be potentially isolated traps otherwise Bob could choose to cheat on one type rather than the other one. In order to make the position of traps random, Alice will choose a random partition of $P(\tilde{K}_{3N})$ into three equal size sets, and will choose appropriate dummy qubits (similar to Lemma \ref{lem:partition}) to obtain the three disconnected graphs. Note that this will lead to random positions for trap qubits, however the positions of trap qubits will be also correlated with each other and we will take care of this issue when we present the proof of the verification. The above procedure is formalized in Protocol \ref{prot:Mchoice} and finally Protocol \ref{prot:AUBQC} presents a hiding protocol that is universal, verifiable and blind. 

As a high level overview of the fault-tolerance scheme, qubits are encoded topologically as chains of defects (qubits to be measured in the $Z$ basis) of finite thickness and separation (referred to as the scale parameter) which trace out a path through the three dimensional structure of the resource state. The encoding forces non-detectable errors to be topologically non-trivial chains, either connecting or encircling defect chains. Certain Clifford group operations are implemented directly by braiding these defect chains. For the remaining operations required for universality it is necessary to implement the gate by first distilling a suitable resource state which is then used to implement the gate via teleportation (all within the topologically encoded computation). While the teleportation can be done with Clifford group operations, the distillation is implemented on a concatenated encoding where at each level of concatenation the corresponding distillation step is topologically encoded with progressively higher defect thicknesses and scale parameters. At the lowest level, however, the operations are performed directly on physical qubits, and so the defect chains are only a single qubit in diameter.
 
\begin{algorithm}
\caption{Measurement Pattern Choice}
\label{prot:Mchoice}
In what follows choosing a measurement pattern means fixing the underlying graph state together with the appropriate angles of computation such that the resulting pattern implements the desired computation due to universality. Similarly choosing a partial measurement pattern means fixing the underlying graph state together with a partial set of angles of computation corresponding to a partial computation, where the rest of angles  will be fixed in Protocol 8 where this protocol is called as a subroutine. Here, we assume that a standard labeling of the vertices of each dotted-complete graph state is known to both Alice and Bob.
\begin{enumerate}
\item Alice chooses security parameter $d$, then transforms the quantum circuit $\mathcal{C}$ corresponding to her desired computation into (or directly designs) a measurement pattern $\mathbb{M}_{Comp}$ on a graph state $\mathcal{G_L}$ which implements her computation using the encoding for topological fault-tolerant measurement-based quantum computation due to Raussendorf, Harrington and Goyal \cite{RHG07}, where $\mathcal{G_L}$ is taken to correspond to the graph state of the 3D lattice $\mathcal{L}$ introduced in \cite{RHG07} with sufficient dimensions $D_x$, $D_y$ and $D_z$ to implement her computation using an encoding with parameters as follows:
\begin{itemize}
\item Defect thickness $d$
\item Lattice scale parameter $\lambda = 5 d$
\item Distillation of resource states $\ket{A}$ and $\ket{Y}$ using $L = \lceil \log_3 (d) \rceil$ levels
\item For each concatenation level $1<\ell<L$ the thickness parameter and scale parameter for that level are chosen as $d_\ell = 3 d_{\ell-1}$ and $\lambda_\ell=\lambda_{\ell-1}$, with $d_1= 1$, $\lambda_1 = 5$, $d_L=d$ and $\lambda_L=\lambda$.
\end{itemize}
\item Alice chooses a partial measurement pattern $\mathbb{M}_{Reduce}$ which reduces the graph state $\tilde{\mathcal K}_N$ to the graph state $\mathcal{G_L}$ through Pauli measurements (Theorem \ref{thm:universal2}), where $N$ is the total number of qubits in $\mathcal{L}$.
\item Alice chooses a partial measurement pattern $\mathbb{M}_{P}$ on the graph state $\tilde{\mathcal K}_{N}$ such that every qubit corresponding to a vertex in $A(\tilde{K}_{N})$ are set to be dummy qubits. Hence all vertices in $P(\tilde{K}_{N})$ are isolated traps.
\item Alice chooses a partial measurement pattern $\mathbb{M}_{A}$ on the graph state $\tilde{\mathcal K}_{N}$ such that every qubit corresponding to a vertex in $P(\tilde{K}_{N})$ is set to be dummy qubits. Hence all vertices in $A(\tilde{K}_{N})$ are isolated traps.
\item For the graph $\tilde{K}_{3N}$, Alice chooses uniformly at random a partitioning $\mathbb{P}$ of the vertices into three equal sized sets of vertices $P_1$, $P_2$ and $P_3$.
\item Alice takes $\mathbb{M}_\mathbb{P}$ to be the partial measurement pattern where the required vertices in $A(\tilde{K}_{3N})$ are set to be dummy qubits such that the resulting state is the tensor product of three graph states of the three disconnected graphs $\tilde{k}_1= \tilde{K}_{N}$, $\tilde{k}_2= \tilde{K}_{N}$ and $\tilde{k}_3= \tilde{K}_{N}$, such that $P(\tilde{k}_i) =P_i$. 
\item Alice calculates $\mathbb{M}$, her overall measurement pattern on a graph state corresponding to $\tilde{K}_{3N}$ by combining the partial pattern $\mathbb{M}_\mathbb{P}$ with $\mathbb{M}_{Comp}$ and $\mathbb{M}_{Reduce}$ applied to subgraph $\tilde{k}_1$ and $\mathbb{M}_{P}$ and $\mathbb{M}_{A}$ applied to subgraphs $\tilde{k}_2$ and $\tilde{k}_3$ respectively, to obtain a full measurement pattern.
\end{enumerate}
\end{algorithm}

\begin{algorithm}
\caption{Verifiable Universal Blind Quantum Computation}
 \label{prot:AUBQC}
\begin{itemize}

\item \textbf{Alice's resources} \\
-- Alice chooses the pattern $\mathbb{M}$ and random partitioning $\mathcal{P}$ according to Protocol \ref{prot:Mchoice}.\\
\noindent -- The dummy qubits position, set $D$ chosen according to Protocol \ref{prot:Mchoice}.\\
\noindent -- A sequence of measurement angles, $\phi=(\phi_i)_{1\leq i \leq 3N(3N+1)/2}$ with $\phi_i \in A$, according to the description of Protocol \ref{prot:Mchoice}, where $\phi_i = 0$ for all the trap and dummy qubits. The ordering of the measurements on $P(\tilde{\mathcal{K}}_{3N})$ is chosen uniformly at random subject to the constraint that the partial ordering of measurements from $\mathbb{M}_{Comp}$ determined by flow is preserved. Such a random ordering is required to hide the position of the trap qubits. The qubits in $A(\tilde{\mathcal{K}}_{3N})$ are measured first in the order that the relevant edge entry appears in the adjacency matrix of $\mathcal{K}_{3N}$ once this random ordering has been taken into account. That is, the site in $A(\tilde{\mathcal{K}}_{3N})$ which is joined by edges to $i$ and $j$ in $P(\tilde{\mathcal{K}}_{3N})$, with $i<j$ in the random ordering imposed on $P(\tilde{\mathcal{K}}_{3N})$, is measured in position $3N(i-1) + j -\frac{i(i+1)}{2}$. Note that the measurement order of the vertices in $A$ should be independent of the computation (and traps), so in the above we prescribe one such suitable sequence. This is followed by the measurements of $P(\tilde{\mathcal{K}}_{3N})$ in the randomly chosen order.\\
\noindent -- $3N (3N+1)/2 $ random variables $\theta_i$ with value taken uniformly at random from $A$.\\
\noindent -- $3N (3N+1)/2$ random variables $r_i$ and $|D|$ random variable $d_i$ with values taken uniformly at random from $\{0,1\}$. \\
\noindent -- A fixed function $C(i, \phi_i, \theta_i, r_i, \mathbf{s})$ that for each non output qubit $i$ computes the angle of the measurement of qubit $i$ to be sent to Bob.

\item \textbf{Initial Step} \\
-- \textbf{Alice's move:} Alice sets all the value in $\mathbf{s}$ to be $0$ and prepares the qubits in the following form
\AR{
\forall i\in D &\;\;\;& \ket {d_i} \\ 
\forall i \not \in D &\;\;\;& \prod_{j\in N_G(i) \cap D} Z^{d_j}\ket {+_{\theta_i}}
}
and sends Bob all the $3N (3N+1)/2$ qubits in the order of the labeling of the vertices of the graph.

-- \textbf{Bob's move:} Bob receives $3N (3N+1)/2$ single qubits and entangles them according to $\tilde{K}_{3N}$.

\item \textbf{Step $i: \; 1 \leq i \leq 3N (3N+1)/2$}

-- \textbf{Alice's move:} Alice computes the angle $\delta_i=C(i, \phi_i, \theta_i, r_i, \mathbf{s})$ and sends it to Bob.\\ 
-- \textbf{Bob's move:} Bob measures qubit $i$ with angle $\delta_i$ and sends Alice the result $b_i$. \\
-- \textbf{Alice's move:} Alice sets the value of $s_i$ in $\mathbf{s}$ to be $s_i+r_i$. \\

\item  \label{step:Alice-prep} \textbf{Verification} \\
Alice accepts if $s_i = r_i$ for all the white and black trap qubits $i$.

\end{itemize}
\end{algorithm}

\begin{figure}[h!]
\includegraphics[width=\columnwidth]{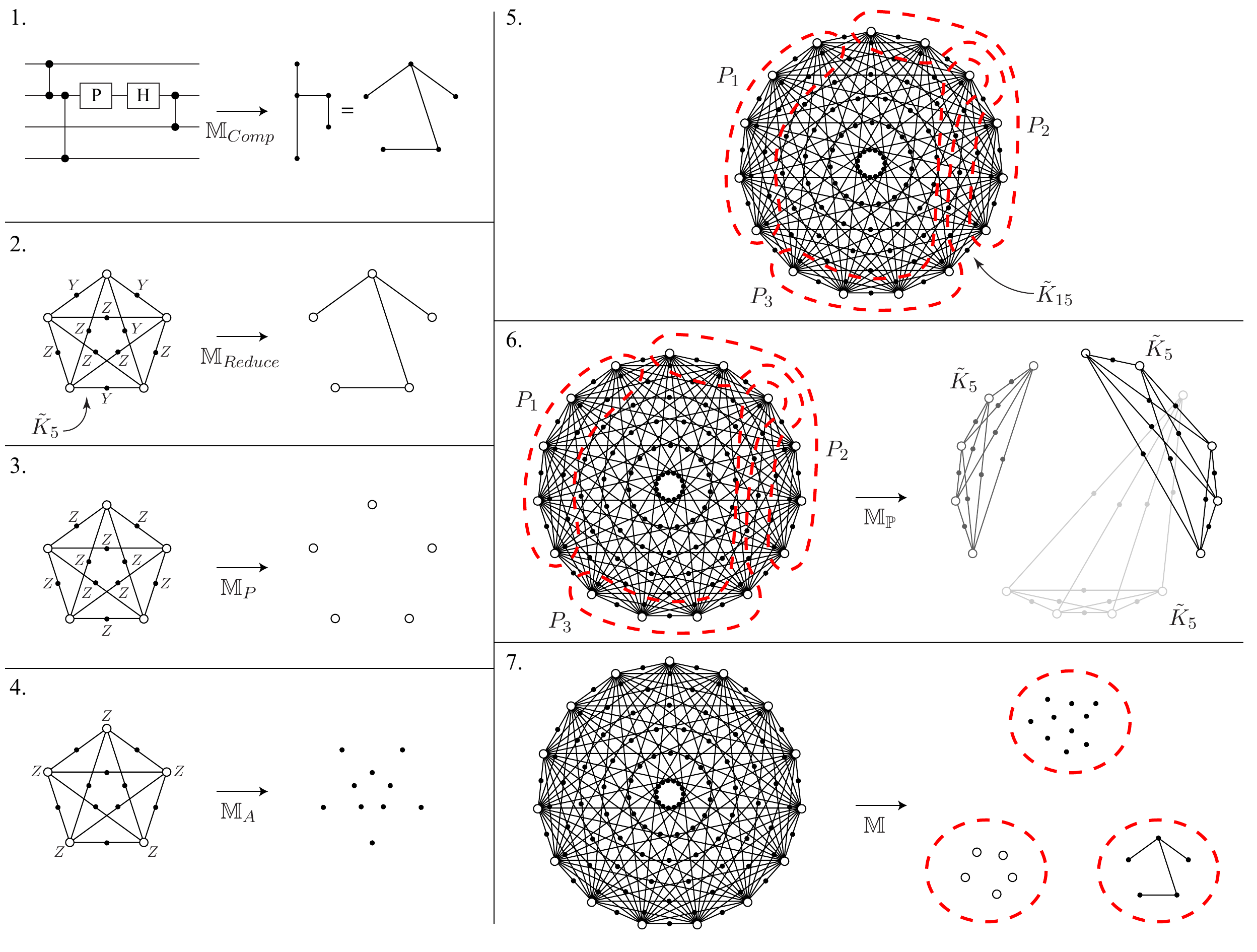}
\caption{A graphical depiction of Protocol \ref{prot:AUBQC}. In this figure we replace the Raussendorf-Harrington-Goyal encoding in the first step with a simpler computation, as to include a full encoding yields graphs too large to reasonably draw. \label{fig:prot}}
\end{figure}

\begin{theorem}
\label{th:correctness}
Assume Alice and Bob follow the steps of Protocol~\ref{prot:AUBQC}, then Alice always accepts the output and the outcome density operator is correct.
\end{theorem}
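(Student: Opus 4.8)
The plan is to reduce the statement to the correctness results already established, exploiting the fact that an honest run is a special case of the general framework of Definition \ref{d-outcome} in which every deviation $U_i$ is the identity, so that the realized outcome density operator is exactly $B_0(\nu)$. It then suffices to verify two things for such a run: that every white and black trap reproduces the outcome Alice predicts, so that the acceptance test of the \textbf{Verification} step passes with certainty, and that the corrected output on the computation graph is the intended (encoded) result. I would begin with the dummy qubits of the pattern $\mathbb{M}_\mathbb{P}$: by Theorem \ref{thm:CorrectnessDummy} these $Z$-basis measurements have no effect on the non-dummy qubits, and by Lemma \ref{lem:partition} they disconnect $\tilde{K}_{3N}$ into the three independent copies $\tilde{k}_1,\tilde{k}_2,\tilde{k}_3$ of $\tilde{K}_N$ with $P(\tilde{k}_i)=P_i$. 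The $Z^{d_j}$ preparation corrections applied by Alice to the neighbours of each dummy restore every remaining qubit to $\ket{+_{\theta_i}}$, exactly as in the proof of Theorem \ref{thm:CorrectnessDummy}, so the three components evolve independently and each as in Protocol \ref{prot:DBQC}.

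For the two trap graphs I would use the remaining dummy choices: the pattern $\mathbb{M}_{A}$ measures every vertex of $P(\tilde{K}_N)$ in the $Z$ basis and $\mathbb{M}_{P}$ measures every vertex of $A(\tilde{K}_N)$, so by Lemma \ref{lem:p-removal} the complementary set of vertices is left mutually disconnected, i.e.\ as a family of isolated trap qubits, each in the state $\ket{+_{\theta_t}}$. Since each such trap has $\phi_t=0$ and no neighbours, the transmitted angle collapses to $\delta_t=\theta_t+r_t\pi$, precisely the trap situation analysed in Protocol \ref{prot:DTBQC}; measuring $\ket{+_{\theta_t}}$ in the basis $\{\ket{+_{\theta_t+r_t\pi}},\ket{-_{\theta_t+r_t\pi}}\}$ is deterministic and yields $b_t=r_t$. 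Hence for every $\nu$ each white and black trap reproduces the value Alice expects, and the acceptance test is satisfied with probability one, which establishes that Alice always accepts.

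For the computation graph $\tilde{k}_1$ I would observe that $\mathbb{M}_{Reduce}$ consists of further Pauli-$Y$ (bridge) and Pauli-$Z$ (break) measurements which, by Lemmas \ref{lem:m-universal} and \ref{lem:state-graph}, transform the graph state of $\tilde{K}_N$ into the three-dimensional lattice graph state $\mathcal{G_L}$, up to the local Pauli and $\pm\pi/2$ $Z$-rotation byproducts recorded in Lemma \ref{lem:state-graph}; these byproducts are absorbed by the flow-induced corrections of the hiding protocol exactly as in Theorem \ref{thm:Correctness}. The pattern $\mathbb{M}_{Comp}$ then realizes the Raussendorf--Harrington--Goyal fault-tolerant computation on $\mathcal{G_L}$, whose honest output is the encoded result of Alice's circuit $\mathcal{C}$. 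Applying Theorem \ref{thm:Correctness} (together with Theorem \ref{thm:CorrectnessDummy} for the dummies) to the assembled pattern $\mathbb{M}$ shows that all random rotations $\theta_i$, the one-time-pad keys, and the randomized measurement ordering (which by construction respects the flow partial order of $\mathbb{M}_{Comp}$, so determinism is preserved) are correctly undone by Alice's final corrections. Combining this with the trap analysis gives that $B_0(\nu)$ is correct in the sense of Definition \ref{d-auth}.

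The main delicacy lies in the computation-graph step: one must check that the byproduct operators generated by the break and bridge measurements of $\mathbb{M}_{Reduce}$, the adaptive $Z$-basis measurements required by the scheme of \cite{RHG07} (handled, as in the footnote, either by always including a correction step so that the dummy locations are fixed, or by passing to the Morimae--Fujii variant \cite{FM12}), and the flow of the combined pattern $\mathbb{M}$ all compose consistently, so that Alice's corrections yield exactly the intended state. Most of this is bookkeeping already licensed by the cited lemmas and by the correctness theorem for a general graph, so the argument is chiefly one of careful assembly rather than new technical content; the one genuinely new point to verify is that measuring the vertices of $A(\tilde{\mathcal{K}}_{3N})$ before the randomly ordered measurements of $P(\tilde{\mathcal{K}}_{3N})$ never violates the flow ordering inherited from $\mathbb{M}_{Comp}$, which the ordering prescribed in Protocol \ref{prot:AUBQC} guarantees by construction.
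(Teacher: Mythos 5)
Your proposal is correct and follows essentially the same route as the paper's own proof: split the honest run via the Pauli (dummy/bridge) measurements into the three disconnected $\tilde{K}_N$ states, observe that each isolated trap prepared in $\ket{+_{\theta_t}}$ and measured with $\delta_t=\theta_t+r_t\pi$ deterministically returns $b_t=r_t$ so Alice always accepts, and conclude correctness of the computation graph from $\mathbb{M}_{Reduce}$ reducing $\tilde{\mathcal{K}}_N$ to $\mathcal{G_L}$ together with the correctness of the Raussendorf--Harrington--Goyal scheme. The paper disposes of the ordering concern slightly more directly---since all of $\mathbb{M}_\mathbb{P}$, $\mathbb{M}_{Reduce}$, $\mathbb{M}_P$, $\mathbb{M}_A$ consist solely of Pauli measurements, they impose no time-ordering constraints and may be regarded as performed first---but this is the same observation you make at the end.
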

\begin{proof}
First we note that it is always possible to choose measurement patterns $\mathbb{M}_\mathcal{P}$ by Lemma \ref{lem:partition} and $\mathbb{M}_{Reduce}$ by Lemma \ref{lem:m-universal}. Further, by the universality of the Raussendorf-Harrington-Goyal encoding, it is always possible to choose $\mathbb{M}_{Comp}$. As the measurements composing $\mathbb{M}_\mathcal{P}$, $\mathbb{M}_{Reduce}$, $\mathbb{M}_{P}$ and $\mathbb{M}_{A}$ are composed entirely of Pauli basis measurements, there is no partial time ordering imposed on the sequence of measurements, and so the times at which these measurements are made have no effect on the outcome of the protocol. Thus for any honest run of the protocol, the result will be the same as if the measurements from $\mathbb{M}_{\mathcal{P}}$ were made first. By construction this measurement pattern splits the graph state into three separate graph states $\tilde{\mathcal K}_N$.

The dummy qubits in $\mathbb{M}_P$ and $\mathbb{M}_A$ correspond to break operations in their respective graphs by Lemma \ref{lem:state-graph} and hence after the initial step all the trap qubits remain unentangled from the rest. Recall that for these trap qubits $\phi_i = 0$, and since the qubit is prepared in the state $\ket{+_{\theta_i}}$ and measured in basis $\{\ket{+_{\theta_{i}}}, \ket{-_{\theta_{i}}} \}$, the measurement result communicated to Alice is $s_i=r_i$ for all such qubits. Thus, Alice always accepts, satisfying the first criterion.

By definition $\mathbb{M}_{Reduce}$ transforms the graph state corresponding to $\tilde{K}_N$ to the resource state necessary to implement $\mathbb{M}_{Comp}$. Lastly, measuring according to $\mathbb{M}_{Comp}$ yields the correct output of $\mathcal{C}$ by the correctness of the Raussendorf-Harrington-Goyal protocol.
\end{proof}

\begin{theorem} 
Protocol \ref{prot:AUBQC} is blind while leaking at most $N$.
\end{theorem}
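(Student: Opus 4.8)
The plan is to reduce to the blindness results already proved for the component protocols and then to isolate and dispatch the single genuinely new ingredient, the randomised measurement order.

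First I would observe that, up to the trap/verification bookkeeping, Protocol~\ref{prot:AUBQC} is an instance of the generic dummy-qubit hiding protocol (Protocol~\ref{prot:DBQC}) run on the fixed graph $\tilde{K}_{3N}$. Since Bob is instructed to entangle the $3N(3N+1)/2$ received qubits according to $\tilde{K}_{3N}$, he necessarily learns this graph; but $\tilde{K}_{3N}$ is completely determined by the single integer $N$ together with the standard labelling assumed known to both parties, so knowledge of the graph is equivalent to knowledge of $N$. This pins the leakage down to $L=N$. Note that, unlike the quantum-output Protocol~\ref{prot:UBQC2}, here every qubit is measured and none are returned, so $n$ is not separately revealed.

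Next I would verify both conditions of Definition~\ref{defn:blind} for fixed $N$, reusing the argument of Theorem~\ref{t-DBQC} verbatim. Setting $\theta'_i=\theta_i+\pi r_i+\pi\sum_{j\in N_G(i)\cap D}d_j$, the angles sent to Bob are $\delta_i=\phi'_i+\theta'_i-\pi\sum_{j\in N_G(i)\cap D}d_j$, which are uniformly distributed because the $\theta_i$ are uniform and independent; hence the distribution of the classical messages depends only on $N$ and not on Alice's angles $\{\phi_i\}$. Tracing over each $r_i$ (and, for dummies, over the $d_i$) leaves every qubit Alice sends in the maximally mixed state, so for any fixed $\delta$ the joint quantum register is $\mathbb{I}/2^{3N(3N+1)/2}$, fixed by $N$ alone. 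This establishes both conditions for the part of Bob's view consisting of the angles $\delta_i$ and the initial quantum register.

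The one new element, and the step I expect to be the main obstacle, is that Bob also observes the \emph{order} in which he is asked to measure, and here this order is randomised rather than determined by a fixed flow. I would argue that its distribution is a function of $N$ alone. The order arises from (i) the uniformly random partition $\mathbb{P}$ of $P(\tilde{K}_{3N})$; (ii) a uniformly random total order on $P(\tilde{K}_{3N})$ extending the partial order that the flow of $\mathbb{M}_{Comp}$ induces on the computation block; and (iii) the $A(\tilde{K}_{3N})$ measurements placed first according to the fixed adjacency-matrix rule, which is a deterministic function of the $P$-ordering. The crucial point is that in the Raussendorf--Harrington--Goyal encoding the flow, and hence this partial order, is fixed by the geometry of the lattice $\mathcal{L}$ and not by the particular circuit $\mathcal{C}$: the computation is carried entirely by the (already-hidden) measurement angles and by the \emph{fixed} dummy/defect placement secured by the correction trick of the footnote, while the measurements realising $\mathbb{M}_{Reduce}$, $\mathbb{M}_{P}$, $\mathbb{M}_{A}$ and $\mathbb{M}_\mathbb{P}$ are all Pauli measurements carrying no time-ordering constraint. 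Thus the combinatorial type of the partial order depends only on $N$, and since the partition is uniform and independent, averaging over it makes the induced distribution on observed orders symmetric in the three blocks and therefore independent of which block is the computation graph (the trap positions) and of the gates performed. Combining this with the previous paragraph, every component of Bob's view---classical messages, quantum register, and measurement order---has a distribution fixed by $N$, so Protocol~\ref{prot:AUBQC} is blind while leaking at most $N$.
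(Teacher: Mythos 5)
Your proposal is correct, and its backbone---reducing Protocol \ref{prot:AUBQC} to the blindness already proven for the dummy-qubit and dotted-complete-graph protocols---is the same as the paper's, whose entire proof is a single sentence asserting that the result follows directly from the blindness theorem for Protocol \ref{prot:UBQC2} (i.e.\ Theorem \ref{t-blindinputdotted}; the paper's citation actually points at the protocol label rather than that theorem, evidently a typo). Where you genuinely go beyond the paper is your final paragraph: you notice that in Protocol \ref{prot:AUBQC} the measurement order is itself randomised rather than fixed by a flow as in every earlier protocol, that Bob observes this order, and that Definition \ref{defn:blind} therefore requires its distribution to be a function of $N$ alone. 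Neither the paper's one-line proof nor the theorems it cites address this, so your argument---that the flow-induced partial order in the Raussendorf--Harrington--Goyal encoding is fixed by the lattice geometry (the footnote's fixed dummy placement is what makes this true), that the Pauli sub-patterns $\mathbb{M}_{Reduce}$, $\mathbb{M}_{P}$, $\mathbb{M}_{A}$, $\mathbb{M}_\mathbb{P}$ impose no ordering constraints, and that averaging over the uniform partition symmetrises the induced distribution on orders---fills a gap the paper silently skips. To make that last step fully rigorous, note that the number of admissible total orders is identical for every partition and, dually, every total order is compatible with the same number of partitions, so the marginal distribution over observed orders is exactly uniform and hence determined by the partial-order structure alone. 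Two caveats, both shared with the paper rather than specific to you: the claim that this structure ``depends only on $N$'' tacitly assumes the lattice dimensions $D_x, D_y, D_z$ are a standardised function of $N$ (otherwise the shape of $\mathcal{L}$ is an additional leak), and your correct observation that only $N$ (not $n$) is revealed relies on the fact that in Protocol \ref{prot:AUBQC} every qubit is measured, so no separate output-size parameter appears in Bob's view.
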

\begin{proof}
The proof is directly obtained from Theorem \ref{prot:UBQC2}.
\end{proof}

In order to prove the verification property, as stated in Theorem \ref{t-authenT}, we require that the measurement pattern is encoded in such a way that any Pauli error of weight less than $d$ will be either corrected or detected. We now show that this is true for the Raussendorf-Harrington-Goyal scheme although this is already implicit in their paper \cite{RHG07}, we make it explicit here for completeness. In what follows, we take $\mathcal{L}$ to be the 3D lattice corresponding to the resource state used in \cite{RHG07}.

\begin{lemma}\label{lem:RHG-threshold}
Let $\mathbb{M}_\mathcal{C}$ be a measurement pattern which implements a computation $\mathcal{C}$ on $\mathcal{G_L}$, the graph state corresponding to the lattice $\mathcal{L}$, using the Raussendorf-Harrington-Goyal fault tolerance scheme with the following parameters
\begin{itemize}
\item Defect thickness $d$
\item Lattice scale parameter $\lambda = 5 d$
\item Distillation of resource states $\ket{A}$ and $\ket{Y}$ using $L = \lceil \log_3 (d) \rceil$ levels
\item For each concatenation level $1<\ell<L$ the thickness parameter and scale parameter for that level are chosen as $d_\ell = 3 d_{\ell-1}$ and $\lambda_\ell=3\lambda_{\ell-1}$, with $d_1= 1$, $\lambda_1 = 5$, $d_L=d$ and $\lambda_L=\lambda$.
\end{itemize}
Take $\sigma = \{\sigma^i\}$ to be a set of Pauli operators, such that each $\sigma^i \in \{I,X,Y,Z\}$ and acts on qubit $i$. Then for any $\sigma$, if $\mathbb{M}_\mathcal{C}$ is implemented on state $\ket{G_\mathcal{L}}$, but the output of each measurement result or unmeasured qubit $i$ is modified by applying $\sigma^i$, then either the computation is correct (corresponding to a run where all $\sigma^i = I$) or an error is detected when the output is decoded, unless $|B_\mathcal{L}| + |C_\mathcal{L}| + |D_\mathcal{L}^O| \geq 2d$, where $B_\mathcal{L} = \{\gamma: \sigma^\gamma = X\}$, $C_\mathcal{L} = \{\gamma: \sigma^\gamma = Y\}$ and $D_\mathcal{L}^O =  \{\gamma: \sigma^\gamma = Z \mbox{ and } \gamma\in O\}$, and where $O$ is the set of output (unmeasured) qubits.
\end{lemma}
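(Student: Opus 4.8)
The plan is to read the statement as a code-distance claim for the Raussendorf-Harrington-Goyal (RHG) scheme: in the error model where a Pauli $\sigma^i$ is applied immediately before each (computational-basis) measurement or to an unmeasured output qubit, the assertion is exactly that the scheme either corrects the error or flags it whenever $|B_\mathcal{L}| + |C_\mathcal{L}| + |D_\mathcal{L}^O| < 2d$. The first step is to translate $\sigma$ into the error-correction data of the lattice $\mathcal{L}$. As in the proof of Theorem \ref{t-authen1}, every non-output measurement is realised as a $Z$-rotation and a Hadamard followed by a computational-basis measurement, so in the frame in which $\sigma^i$ acts an $X$ or $Y$ on a measured qubit flips its outcome while a $Z$ commutes with the measurement and is inert; this is precisely why $Z$ errors on measured qubits are absent from the weight. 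On the unmeasured output qubits there is no such measurement, so every nontrivial Pauli can corrupt the encoded output, and all of $X$, $Y$, $Z$ are counted there. Hence $|B_\mathcal{L}| + |C_\mathcal{L}| + |D_\mathcal{L}^O|$ counts exactly the effective bit-flips entering the RHG syndrome together with the direct corruptions of the output register.

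With this translation in hand, I would invoke the homological structure of the RHG code for the portion of the computation carried out directly by braiding (the Clifford part). The cell parity checks on $\mathcal{L}$ are all satisfied exactly when the induced error chain is a closed cycle, so an \emph{undetected} error is a cycle, and such a cycle acts trivially on the encoded output unless it is homologically nontrivial relative to the defects and boundaries, i.e.\ it encircles a defect or links two defects (or a defect to a boundary). The parameters are chosen so that every such nontrivial cycle is heavy: with separation $\lambda = 5d \geq 2d$ the linking chains cost at least $\lambda$, so the binding constraint is the thickness of a single defect, and by the distance analysis implicit in \cite{RHG07} the shortest cycle encircling a thickness-$d$ defect has weight at least $2d$. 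This yields the bound for the braided (Clifford) region directly.

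The remaining, and I expect hardest, ingredient is the magic-state distillation supplying the non-Clifford gates through injection of $\ket A$ and $\ket Y$. Here I would argue across the concatenation levels: at level $\ell$ the topological code has thickness $d_\ell$, growing by a factor of three per level up to $d_L = d$, and the distillation routine at each level detects a faulty encoded input. An error can reach the distilled output undetected only if it defeats the topological protection at the operative level (weight at least $2d_L = 2d$ at the top) or survives the distillation's own error-detection, which in turn forces faults to accumulate across levels; the choice $L = \lceil \log_3 d \rceil$ is exactly what is needed to make the effective detection distance of the distillation cascade match the bulk value $2d$ rather than becoming the bottleneck. The crux of the whole proof is this bookkeeping — tracking how a physical Pauli maps through the injected-state teleportation and the successive distillation layers while preserving the $2d$ detection distance — whereas the bulk topological argument follows from \cite{RHG07}. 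Finally I would combine the two regions: any undetected logical fault must appear as a nontrivial cycle in some sub-region, and since the minimum nontrivial-cycle weight over both the braided region and the distillation cascade is $2d$, an undetected incorrect output requires $|B_\mathcal{L}| + |C_\mathcal{L}| + |D_\mathcal{L}^O| \geq 2d$, which is the stated conclusion.
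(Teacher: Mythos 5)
Your overall architecture matches the paper's: you translate the Pauli deviation into effective errors (noting that $Z$ on measured qubits is inert while $X$, $Y$ flip outcomes, and all nontrivial Paulis count on output qubits), you handle the braided bulk via the homological structure of the lattice with the defect thickness and separation supplying the weight bound, and you isolate the distillation of $\ket{A}$ and $\ket{Y}$ as the place where topological protection alone does not suffice. The bulk argument is acceptable, if slightly imprecise: the paper gets a cycle bound of $4d$ from the perimeter of a $d\times d$ defect cross-section and a chain bound of $\lambda - d = 4d$ for defect separation (not $\lambda$ as you state), but both exceed the needed $2d$, so nothing breaks there.

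The genuine gap is exactly where you flag it and then stop: the distillation cascade. You assert that ``the choice $L = \lceil \log_3 d \rceil$ is exactly what is needed'' and that surviving the distillation's error detection ``forces faults to accumulate across levels,'' but you never quantify this, and the lemma's conclusion rests entirely on that quantification. The paper's proof supplies it as follows. Both the Reed-Muller code (for $\ket{A}$) and the Steane code (for $\ket{Y}$) have distance $3$, so an undetected error at concatenation level $\ell$ requires either a topological error at that level or at least $3$ errors at level $\ell-1$; combined with the parameters $\lambda_\ell - d_\ell = 4d_\ell$ and $d_\ell = 3d_{\ell-1}$, the minimum weight $w_\ell$ to create an undetected level-$\ell$ error satisfies the recursion
\begin{align*}
w_\ell = \min\left(4d_\ell,\; 8d_{\ell-1},\; 4d_{\ell-1} + w_{\ell-1},\; 3w_{\ell-1}\right),
\end{align*}
whose four terms correspond to purely topological errors at level $\ell$, purely topological errors at level $\ell-1$, a mixture of a topological error at level $\ell-1$ with inherited errors, and purely inherited errors. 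The base case is crucial and absent from your sketch: at the lowest level the distillation acts on unencoded physical qubits measured in non-Pauli bases, so $w_0 = 1$ and the distance-$3$ code gives $w_1 = 3 > 2d_1 = 2$. The induction step then shows $w_i > 2d_i$ implies $w_{i+1} = \min(4d_{i+1}, 6d_i) > 2d_{i+1}$, using $d_{i+1} = 3d_i$ (with $d_L \leq 3d_{L-1}$ at the top), yielding $w_L > 2d$. Without this recursion and induction, your proof establishes the bound only for the braided region, and the distillation regions — which you yourself identify as the crux — could in principle admit undetected logical errors of weight far below $2d$ (indeed, with no encoding at all they admit weight-$1$ errors), so the lemma as stated would not follow.
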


\begin{proof}
In the Raussendorf-Harrington-Goyal scheme, logical qubits are topologically protected against errors. The two lowest weight topological errors are error cycles around defects and error chains running between defects. As defects have thickness $d$, any cross-section forms a rectangle of dimension at least $d \times d$ and thus perimeter at least $4(d+1)$. As an error cycle must fit around the remaining defect, the minimum error cycle is at least $4d$. As the centers of defects are separated by distance $\lambda$, the minimum distance between defects is $\lambda-d$ and hence for our parameters we have $\lambda-d = 4d$. 

The only region where this topological protection breaks down is within the regions used to distill the resource states $\ket{A}$ and $\ket{Y}$. This distillation is performed using a concatenation of $L$ levels of the Reed-Muller ($\ket{A}$) or Steane ($\ket{Y}$) codes. Each level $\ell$ of distillation is topologically protected with parameters $d_\ell$ and $\lambda_\ell$. As the Reed-Muller and Steane codes are both distance 3, an error at level $\ell$ can be caused either by a topological error at that level or not less than 3 errors at the previous level. However, since at each level $\ell<L$ we have $\lambda_\ell-d_\ell = 4d_\ell$ and $d_\ell=3d_{\ell-1}$, the minimum weight $w_\ell$ to create an error at level $\ell$ is $\min(4d_\ell, 8d_{\ell-1},4d_{\ell-1} + w_{\ell-1}, 3 w_{\ell-1})$. The four terms in this last expression account, respectively, for the minimum weight errors in each of the four possible cases: 1) The error is entirely topological at level $\ell$, 2) The error is entirely topological at level $\ell-1$, 3) the error includes both topological errors at level $\ell-1$ (which in the worst case affects two qubits with a single weight $4d_\ell$ error chain) and inherited errors from level $\ell-2$, and 4) the case where all errors are inherited from level $\ell-2$. 

We then prove that $w_\ell > 2 d_\ell$ by induction, as follows. Assume that at level $i$ we have $w_i > 2 d_i$. In that case we have $w_{i+1} = \min(4d_{i+1}, 6d_{i})$, since by assumption $4d_i + w_i>6d_i$ and $3w_i>6d_i$, and clearly $8d_i > 6 d_i$. However, we have $d_{i+1} = 3 d_i$ for all levels except the top level, where $d_L\leq 3 d_{L-1}$. Thus, in general, $2 d_{i+1} \leq 6 d_i$, and hence $w_{i+1} > 2 d_{i+1}$.  At the lowest level the error distillation uses unencoded qubits measured in non-Pauli bases, and so $w_0 = 1$, so $w_1 = 3 > 2 d_1 = 2$ and thus by induction on $i$ we obtain the result that $w_L > 2d$ as required.

Note, however, that any operation on a measured qubit which is diagonal in the computational basis ($\sigma^i \in \{I,Z\}$) does not alter the computation. Hence an undetectable logical error is not created unless the total number of measured sites for which $\sigma^i \in \{X,Y\}$ plus the total number of output qubits for which $\sigma^i \in \{X, Y, Z\}$ is equal to or greater than $2d$. Thus the outcome is either correct or when decoded results in a detected error, unless $|B_\mathcal{L}| + |C_\mathcal{L}| + |D_\mathcal{L}^O| \geq 2d$.
\end{proof}

Now we link the above general property of the Raussendorf-Harrington-Goyal scheme to our specific protocol. To do so, we first introduce the notion of independently detectable errors.
\begin{definition}
Given a dotted-complete graph state $\tilde{K}_N$, a set of output qubits $O$, a measurement pattern $\mathbb{M}_\text{target}$ containing only $X$-$Y$ plane measurements and $Z$ basis measurements, and a set of single qubit Pauli operators $\sigma = \{\sigma^i\}_{i=1}^N$ with $\sigma^i \in \{I,X,Y,Z\}$ which represent errors which modify each measurement result or unmeasured output qubit $i$ by the application of $\sigma^i$, for each location $i$ we define the set $\epsilon_i = \{i\}$ for $i\in P(\tilde{K}_N)$, and $\epsilon_i = N_{\tilde{K}_N}(i)$ for $i\in A(\tilde{K}_N)$. We say that $\sigma$ contains $k$ {\em independently detectable errors} if and only if there exists a set $\mathcal{E}$ of $k$ locations such that 
\begin{itemize}
\item For all $i\in\mathcal{E}$, $\sigma^i \in\{X,Y\}$ if $i \notin O$ or else $\sigma^i \in\{X,Y,Z\}$ if $i \in O$, and
\item $\epsilon_i \cap \epsilon_j = 0$ for all pairs $i,j \in \mathcal{E}$.
\end{itemize}\end{definition}
\noindent The intuition behind this definition is that in Protocol \ref{prot:AUBQC} the qubits in $P(\tilde{K}_{3N})$ are independently randomly distributed between the two trap graphs and the computation graph, and whether or not a qubit in $A(\tilde{K}_{3N})$ coincides with a trap or not depends only on the placement of the neighboring qubits (which are both in $P(\tilde{K}_{3N})$). The first condition ensures that the error anticommutes with some possible measurement of the system, and is hence truly an error, while the second condition ensures that we are considering only qubits associated with disjoint subsets of $P(\tilde{K}_{3N})$, and hence whether or not they coincide with a trap is uncorrelated. With this definition in place, we can proceed with proving a corollary to Lemma \ref{lem:RHG-threshold} which links that result with Protocol \ref{prot:AUBQC}.

\begin{corollary} \label{cor:weight}
Let $\mathbb{M}_\mathcal{C}$ be a measurement pattern which implements a computation $\mathcal{C}$ on graph state $\mathcal{G}_\mathcal{L}$ of $N$ vertices using the Raussendorf-Harrington-Goyal scheme with parameters
\begin{itemize}
\item Defect thickness $d$
\item Lattice scale parameter $\lambda = 5 d$
\item Distillation of resource states $\ket{A}$ and $\ket{Y}$ using $L = \lceil \log_3 (d) \rceil$ levels
\item For each concatenation level $1<\ell<L$ the thickness parameter and scale parameter for that level are chosen as $d_\ell = 3 d_{\ell-1}$ and $\lambda_\ell=\lambda_{\ell-1}$, with $d_1= 1$, $\lambda_1 = 5$, $d_L=d$ and $\lambda_L=\lambda$.
\end{itemize}
Further, let $\mathbb{M}_{Reduce}$ be a partial measurement pattern consisting of Pauli $Z$ and Pauli $Y$ measurements on qubits corresponding to the vertices in $A(\tilde{K}_N)$ which reduces $\tilde{\mathcal{K}}_N$ to $\mathcal{G}_\mathcal{L}$ up to local $Z$-rotations. Let $\mathbb{M}$ be the measurement pattern for graph state $\tilde{\mathcal K}_N$ produced by applying the partial pattern $\mathbb{M}_{Reduce}$ to the qubits corresponding to vertices in $A(\tilde{K}_N)$ and $\mathbb{M}_\mathcal{C}$ (with appropriate local $Z$-rotations applied) to the qubits corresponding to vertices in $P(\tilde{K}_N)$.

Take $\sigma = \{\sigma^i\}$ to be a set of single qubit Pauli operators, such that each $\sigma^i \in \{I,X,Y,Z\}$  acts on qubit $i$. Then for any $\sigma$, if $\mathbb{M}_\mathcal{C}$ is implemented on state $\tilde{K}_N$, but the output of each measurement result or unmeasured qubit is modified by applying $\sigma^i$, then either the computation is correct (corresponding to a run where all $\sigma^i = I$) or an error is detected when the output is decoded, unless $\sigma$ contains at least $\lceil\frac{2d}{5}\rceil$ independently detectable errors.
\end{corollary}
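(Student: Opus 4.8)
The plan is to transport the weight threshold of Lemma \ref{lem:RHG-threshold}, which is stated on the lattice graph state $\mathcal{G}_\mathcal{L}$, back up to the dotted-complete graph state $\tilde{K}_N$ through the reduction pattern $\mathbb{M}_{Reduce}$. Since $\mathbb{M}_{Reduce}$ measures only the qubits of $A(\tilde{K}_N)$ (breaks in the $Z$ basis, bridges in the $Y$ basis) while every qubit of $P(\tilde{K}_N)$ survives to carry $\mathbb{M}_\mathcal{C}$, the lattice $\mathcal{G}_\mathcal{L}$ is supported on $P(\tilde{K}_N)$. First I would work in the effective all-computational-basis frame used in the proof of Theorem \ref{t-authen1} (Figure \ref{fig:devfig2}), in which each $\sigma^i$ is applied immediately before the relevant measurement so that no free propagation of errors occurs. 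In this frame I would track how each $\sigma^i$ on $\tilde{K}_N$ induces an error on the surviving lattice qubits: (i) an error on a break vertex of $A(\tilde{K}_N)$ is harmless, since such a dummy is disentangled after the entangling step and its measurement outcome is discarded (cf. Theorem \ref{thm:CorrectnessDummy}); (ii) a detectable ($X$ or $Y$) error on a bridge vertex $a\in A(\tilde{K}_N)$ joining $p,q\in P(\tilde{K}_N)$ flips the bridge outcome and, by the mechanics of Lemma \ref{lem:state-graph}, injects the correction error $Z_pZ_q$; and (iii) an error on a vertex of $P(\tilde{K}_N)$ maps directly to the same single-qubit error on that lattice qubit. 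This is precisely the information packaged in the sets $\epsilon_i$: every harmful error on a lattice qubit $p$ admits a detectable \emph{witness} location $w$ with $p\in\epsilon_w$, and a $Z_p$ contribution counts only when $p$ is an output qubit (otherwise it commutes with the computational measurement).

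Next I would feed the induced lattice error into Lemma \ref{lem:RHG-threshold}: the outcome is correct or an error is detected unless $|B_\mathcal{L}|+|C_\mathcal{L}|+|D_\mathcal{L}^O|\geq 2d$, where $B_\mathcal{L},C_\mathcal{L},D_\mathcal{L}^O$ collect the lattice qubits carrying induced $X$, $Y$, and output-$Z$ errors. Writing $W=|B_\mathcal{L}|+|C_\mathcal{L}|+|D_\mathcal{L}^O|$, it then remains to show that $W\geq 2d$ forces $\sigma$ to contain at least $\lceil 2d/5\rceil$ independently detectable errors. The natural route is to bound $W\leq 5k^\ast$, where $k^\ast$ is the maximal number of independently detectable errors (the packing number of detectable locations under disjointness of their $\epsilon$-sets). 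Taking a maximal such packing $\mathcal{E}$, maximality forces the $\epsilon$-set of every detectable location to meet $F=\bigcup_{e\in\mathcal{E}}\epsilon_e$; combining this with the witness property from the first paragraph shows that every harmful lattice qubit lies in $F$ or in a lattice-neighbourhood of $F$. Bounding this set using that the Raussendorf--Harrington--Goyal lattice has bounded local degree then yields $W\leq 5k^\ast$, so $W\geq 2d$ gives $k^\ast\geq 2d/5$, and integrality upgrades this to $k^\ast\geq\lceil 2d/5\rceil$.

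The main obstacle is pinning down the counting constant in this last step. The delicate features are that a single bridge-vertex error acts simultaneously on the two lattice qubits of its $\epsilon$-set, that one output lattice qubit may collect induced $Z$ errors from several incident bridges, and that these multiplicities must be reconciled with the disjointness condition $\epsilon_i\cap\epsilon_j=\emptyset$ built into the definition of independent detectability. Obtaining the clean factor $5$ rather than a larger constant requires choosing the packing $\mathcal{E}$ carefully --- preferring $P(\tilde{K}_N)$ witnesses, whose $\epsilon$-sets are singletons, whenever a harmful lattice qubit is caused directly rather than only through bridges --- so that $F$ grows by a single lattice qubit per independent error and the remaining factor matches the lattice degree. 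Verifying that this preference can always be exercised, and that the residual bridge-only cases are absorbed within the same neighbourhood count, is the crux of the argument; everything else is bookkeeping resting on Lemmas \ref{lem:state-graph} and \ref{lem:RHG-threshold}.
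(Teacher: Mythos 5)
Your first two paragraphs are sound and essentially reproduce the paper's own strategy: errors on break vertices are harmless (dummies, outcomes discarded), a detectable error on a bridge vertex $a$ propagates to $Z_pZ_q$ on its two lattice neighbours via the stabiliser $X_a Z_p Z_q$, direct errors on $P(\tilde{K}_N)$ act as themselves, and Lemma \ref{lem:RHG-threshold} then demands at least $2d$ harmful lattice sites for an undetected incorrect outcome. The genuine gap is exactly where you locate it, and your proposed repair does not close it. A maximal packing together with the degree-$4$ bound certifies only $W \leq 10 k^\ast$ (or $W \leq 8 k^\ast$ after discounting the packing edge at each of its endpoints), not $W \leq 5 k^\ast$; and the ``prefer $P(\tilde{K}_N)$-witnesses'' rule cannot help in precisely the worst configurations, which are bridge-only. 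Concretely, consider an error cluster forming a double star in $\mathcal{G}_\mathcal{L}$: two adjacent lattice qubits $c,c'$, each with three further detectable bridges to distinct leaves and no direct errors on $P(\tilde{K}_N)$. The packing consisting of the single central bridge with $\epsilon$-set $\{c,c'\}$ is maximal but has size $1$, while the cluster contains $8$ harmful sites; so maximality plus bounded degree can never yield the factor $5$. Even passing to a maximum packing (here of size $2$), neighbourhood counting alone still only certifies $8$ harmful sites per packing edge, and the bridge-only case is not ``absorbed'' as you hope.

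The missing ingredient is the paper's combinatorial step. Each connected error cluster on $n_\gamma$ lattice sites contains at least $n_\gamma - 1$ edges, each edge being a bridge carrying a detectable error; because $\mathcal{G}_\mathcal{L}$ is $4$-edge-colourable (Figure \ref{fig:colouring}), the pigeonhole principle extracts a single colour class --- hence a matching of bridges whose $\epsilon$-sets are pairwise disjoint --- of size at least $\lceil (n_\gamma-1)/4 \rceil$ inside each cluster, and clusters with $n_\gamma=1$ contribute a singleton $P$-witness. Since $\lceil (n-1)/4 \rceil \geq n/5$ for every $n \geq 2$, summing over clusters gives a packing of size at least $\sum_\gamma n_\gamma / 5 \geq 2d/5$, and integrality finishes the proof. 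An alternative that would rescue your route is to prove that a maximum matching in any connected subgraph of a maximum-degree-$4$ graph has size at least $(n-1)/4$ (e.g.\ via a spanning tree and K\"onig's edge-colouring theorem), but some such structural matching bound --- not maximality plus local degree counting --- is indispensable for obtaining the constant $5$.
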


\begin{proof}
First we note that only qubits in $P(\tilde{K}_{3N})$ are contained in $O$, since all qubits in $A(\tilde{K}_{3N})$ will be measured to make the required resource states. All measurements on qubits associated with vertices $A(\tilde{K}_N)$ are in either the $Y$ or $Z$ basis, allowing any error in the measurement outcome to be associated with an $X$ error on the underlying qubit. As the generators for the stabilizer of $\tilde{\mathcal K}_N$ are simply the operators $X_i \prod_{j\in N_{\tilde{\mathcal{K}}_{N}}(i)} Z_j$, and each vertex in $A(\tilde{K}_N)$ has only two neighbors, both of which lie in $P(\tilde{K}_N)$, an $X$ error on a qubit associated with a vertex in $A(\tilde{K}_N)$ is equivalent to a local error on each of two qubits in $P(\tilde{K}_N)$.  Thus any local Pauli operator in $\sigma^i$ associated with a vertex in $A(\tilde{K}_N)$ can be either replaced by at most two local operators acting on qubits associated with vertices in $P(\tilde{K}_N)$ without altering the outcome of the computation, or has no effect on the computation. Note that since Pauli $Z$ operators always commute with $Z$ basis measurements, and anticommute with any measurement in the $X-Y$ plane, these local operators are always Pauli operators due to the corresponding restriction on $\mathbb{M}_\text{target}$.

The only Pauli terms which can affect the outcome of the computation are those which either flip a measurement outcome ($X$ or $Y$) or those which act non-trivially upon an unmeasured qubit (as either $X$, $Y$ or $Z$). By Lemma \ref{lem:RHG-threshold}, the outcome of the computation is unaltered unless $\sigma$ produces such errors on at least $2d$ sites. To show that this implies the existence of at least $\lceil \frac{2d}{5} \rceil$ independently detectable errors we will consider the effects of errors on $A(\tilde{\mathcal{K}}_{N})$ and $P(\tilde{\mathcal{K}}_{N})$ in relation to the resource state for the Raussendorf-Harrington-Goyal scheme, $\mathcal{G}_\mathcal{L}$. Errors on $A(\tilde{\mathcal{K}}_{N})$ only occur when the qubit in question is measured in the $Y$ basis, since for $Z$ basis measurements dummy qubits are used and the outcome of Bob's measurement is ignored. Thus, as we have shown above, such errors correspond to local Pauli errors at either end of an edge in the $\mathcal{G}_\mathcal{L}$. Errors in $P(\tilde{\mathcal{K}}_{N})$, however, correspond simply to errors on single vertices in $\mathcal{G}_\mathcal{L}$. Therefore, we can consider any error introduced by $\sigma$ as corresponding to a subgraph $g_\sigma$ of $\mathcal{G}_\mathcal{L}$, where $i \in A(\tilde{\mathcal{K}}_{N})$ introduces the vertices in $N_{\tilde{\mathcal{K}}_N}(i)$ together with a connecting edge, while  $i \in P(\tilde{\mathcal{K}}_{N})$ simply introduces the vertex $i$. Such a subgraph contains all of the qubits in $\mathcal{G}_\mathcal{L}$ which can possibly be affected by local errors after the measurement of qubits according to $\mathbb{M}_{Reduce}$ are taken into account (propagating errors from  $A(\tilde{\mathcal{K}}_{N})$ to $P(\tilde{\mathcal{K}}_{N})$). 

\begin{figure}[h!]
\begin{center}
\includegraphics[width=0.4\columnwidth]{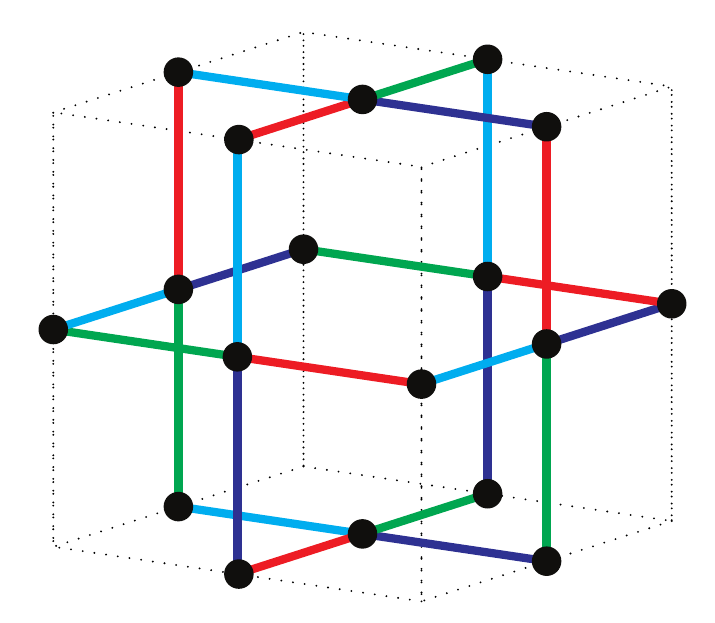}
\caption{The unit cell for the lattice corresponding to the Raussendorf-Harrington-Goyal scheme, $\mathcal{G}_\mathcal{L}$, complete with one choice of 4-edge-colouring.}
\label{fig:colouring}
\end{center}
\end{figure}

We note that any connected subgraph $g_\sigma^\gamma$ of $g_\sigma$ containing $n_\gamma$ vertices necessarily contains at least $n_\gamma-1$ edges. Note also that $\mathcal{G}_\mathcal{L}$ is 4-edge-colorable (see Figure \ref{fig:colouring}). Thus, by the pigeonhole principle, there is at least one color for that subgraph which corresponds to at least $\lceil \frac{n_\gamma-1}{4} \rceil$ edges. As the various subgraphs $g_\sigma^\gamma$ are disconnected, we are free to choose the colouring independently for each, and hence can choose a single 4-edge-colouring for $g_\sigma$ such that it includes at least $\lceil \frac{n_\gamma-1}{4} \rceil$ edges from each subgraph. We then take the set $\mathcal{E}$ to correspond to qubits in $A(\tilde{\mathcal{K}}_{N})$ corresponding to edges of this color, as well as to the single vertex in any $g_\sigma^\gamma$ for which $n_\gamma = 1$, hence $\epsilon_i \cap \epsilon_j = 0$. By Lemma \ref{lem:RHG-threshold}, this insures that the outcome of the computation is either correct or an error is detected upon decoding, or $\sigma$ contains at least $\sum_{\gamma:n_\gamma\geq 2} \lceil \frac{n_\gamma-1}{4} \rceil + \sum_{\gamma:n_\gamma=1} 1$ independently detectable errors, where $\sum_\gamma n_\gamma \geq 2d$. Note that
\begin{align*}
\sum_{\gamma:n_\gamma\geq 2} \lceil\frac{n_\gamma-1}{4}\rceil + \sum_{\gamma:n_\gamma=1} 1 \geq \frac{2d}{5},
\end{align*}
and hence  the computation is either correct or an error is detected upon decoding, or $\sigma$ contains at least $\lceil\frac{2d}{5}\rceil$ independently detectable errors.
\end{proof}

The above corollary guarantees that one of the condition of Theorem \ref{t-authenT} for the verification with the amplified security is satisfied. However we cannot yet directly use that theorem since, as stated before, the position of the traps are not completely random as the position of the black traps are fixed once we choose the random position assignment of qubits in $P(\tilde{\mathcal{K}}_{3N})$ to each of the three subgraphs. This is why we have introduced the notion of independently detectable errors. Here we give a direct proof of verification for Protocol \ref{prot:AUBQC} following the same steps as the proof of Theorem \ref{t-authenT}.

\begin{theorem}
\label{thm:verification}
Protocol \ref{prot:AUBQC} is in general $(5/6)^{\lceil\frac{2d}{5}\rceil}$-verifiable, and in the case of only classical output is $(2/3)^{\lceil\frac{2d}{5}\rceil}$-verifiable, where $d$ is the security parameter as described in Protocol \ref{prot:Mchoice}.
\end{theorem}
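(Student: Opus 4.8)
The plan is to recast the argument behind Theorem~\ref{t-authen1} and its amplified form Theorem~\ref{t-authenT}, replacing their two hypotheses---a guaranteed minimum Pauli weight for undetectable logical errors, and a family of uniformly and independently placed traps---by the substitutes now available: Corollary~\ref{cor:weight} (which supplies the effective distance) and the uniformly random partition $\mathbb P$ of $P(\tilde{\mathcal K}_{3N})$ (which supplies the traps).

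First I would reproduce the opening moves of the proof of Theorem~\ref{t-authen1} verbatim: commute all of Bob's deviation unitaries to the end of the circuit, write the induced completely positive map as a Pauli mixture $\chi_k=\sum_i\alpha_{ki}\sigma_i$ with $\sum_{k,i}|\alpha_{ki}|^2=1$, and project onto the incorrect subspace using
\[
P^\nu_\text{incorrect}=P_{\bot}\bigotimes_{t\in T}\ket{\eta_t^{\nu_T}}\bra{\eta_t^{\nu_T}},
\]
where $T$ now runs over all white and black traps. Averaging over the hidden angles $\theta_t$ and bits $r_t$ kills every cross term, leaving $\sigma_i=\sigma_j$ and a bound of the form $p_\text{incorrect}\le\sum_{k,i}|\alpha_{ki}|^2\,Q(\sigma_i)$, where $Q(\sigma_i)$ is the probability, taken over the partition $\mathbb P$ and over $\{\theta_t,r_t\}$, that $\sigma_i$ triggers no trap.

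Next I would prune the sum over $\sigma_i$ using Corollary~\ref{cor:weight}: any $\sigma_i$ which is not already corrected or flagged by the Raussendorf--Harrington--Goyal code must contain at least $\lceil 2d/5\rceil$ independently detectable errors, so only those terms survive. The role of the disjointness condition $\epsilon_i\cap\epsilon_j=0$ built into that notion is decisive here. Because distinct independently detectable errors are supported on disjoint blocks of $P(\tilde{\mathcal K}_{3N})$ (a single vertex for a location in $P(\tilde{\mathcal K}_{3N})$, a neighbouring pair for a location in $A(\tilde{\mathcal K}_{3N})$), the events that determine whether each coincides with a trap depend on disjoint parts of the random partition and are therefore mutually independent. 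This is exactly the device that circumvents the fact, flagged before the theorem, that the white- and black-trap positions are globally correlated through the single partition $\mathbb P$. Independence lets me factor $Q(\sigma_i)\le\prod_{e\in\mathcal E}q_e$ over any set $\mathcal E$ of $\lceil 2d/5\rceil$ independently detectable errors.

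It then remains to bound each factor $q_e$. A location in $P(\tilde{\mathcal K}_{3N})$ becomes a white trap precisely when it is assigned to the white-trap block, which happens with probability $\tfrac13$ since $\mathbb P$ splits $P(\tilde{\mathcal K}_{3N})$ into three equal parts; a location in $A(\tilde{\mathcal K}_{3N})$ is handled by propagating its error onto its two neighbours through the stabiliser relation $X_e\prod_{j\in N(e)}Z_j$ and tracking when the resulting error meets a black or white trap, which likewise yields detection probability $\tfrac13$. Combined with the $\{\theta_t,r_t\}$ average---which, exactly as in Theorem~\ref{t-authen1}, contributes an extra factor $\tfrac12$ on the $X,Y$ components relevant to a quantum output, and no such factor once the output is purely classical---each factor is at most $1-\tfrac c2=\tfrac56$ in general and $1-c=\tfrac23$ for classical output, with $c=\tfrac13$. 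Multiplying the $\lceil 2d/5\rceil$ independent factors and using $\sum_{k,i}|\alpha_{ki}|^2=1$ delivers the claimed $(5/6)^{\lceil 2d/5\rceil}$ and $(2/3)^{\lceil 2d/5\rceil}$. The step I expect to be most delicate is precisely this last one for locations in $A(\tilde{\mathcal K}_{3N})$: unlike the clean isolated-trap calculation of Theorem~\ref{t-authen1}, the error must first be pushed through the break/bridge structure onto $P(\tilde{\mathcal K}_{3N})$, and its detection probability reconciled against the correlated placement of the black traps---the bookkeeping that the independently-detectable-error definition is designed to absorb.
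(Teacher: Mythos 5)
Your proposal tracks the paper's proof closely through the Pauli expansion, the vanishing of the cross terms, and the pruning via Corollary \ref{cor:weight}, and your treatment of errors at locations in $P(\tilde{K}_{3N})$ (marginal trap probability $1/3$, an extra factor $1/2$ only in the quantum-output case) is sound. The genuine gap is exactly at the step you flag as delicate: the per-error bound for locations in $A(\tilde{K}_{3N})$. The stabiliser-propagation mechanism you invoke cannot trigger any trap. Traps are isolated qubits---every neighbour of a trap is a dummy---so a trap's measurement outcome can be disturbed only by an error acting on the trap itself; and whenever a neighbour $u\in P(\tilde{K}_{3N})$ of a location $a\in A(\tilde{K}_{3N})$ is a white trap, $a$ is necessarily a dummy (either a dummy of the white-trap graph or a disconnecting dummy), whose measurement outcome Alice ignores. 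Hence an error at $a$ is caught only if $a$ itself is a black trap, which over the random partition requires \emph{both} neighbours of $a$ to land in the black-trap block: probability roughly $1/9$, not $1/3$. Your purely marginal factorisation $Q(\sigma_i)\le\prod_e q_e$ therefore yields at best $(8/9)^{\lceil 2d/5\rceil}$ when the independently detectable errors sit in $A(\tilde{K}_{3N})$, which falls short of the claimed bound.

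The paper avoids this by never using the marginal probability for $A$-locations. It introduces notional sets $S_\gamma$ of three locations, one on each of the three subgraphs (with all three in $P(\tilde{K}_{3N})$ or all three in $A(\tilde{K}_{3N})$), so that each set contains exactly one trap, uniformly placed among its three members. The point is that the surviving Pauli terms are those whose independently detectable errors lie \emph{on the three subgraphs}: by Corollary \ref{cor:weight}, an undetected logical error needs $\lceil 2d/5\rceil$ such errors on the computation graph, hence at locations belonging to some $S_\gamma$; conditioned on this set structure, each such error coincides with the trap of its set with probability $1/3$, even for $A$-locations. In other words, the $1/3$ is a \emph{conditional} probability exploiting the joint requirement that the errors both hit the computation graph and miss all traps---the symmetry between the computation block and the two trap blocks---whereas your $Q(\sigma_i)$ bounds only the miss-all-traps event and discards the computation-graph constraint, which is precisely what is needed to lift $1/9$ to $1/3$. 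To repair the argument you would replace the marginal bound on $q_e$ by this conditional one, or equivalently adopt the paper's sets $S_\gamma$.
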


\begin{proof}

The proof of this theorem follows the same strategy as Theorem \ref{t-authen1}, first taking the most general strategy for Bob, expanding this in terms of Pauli operators, and lastly showing that any Pauli term which leads to an incorrect outcome is detected with high-probability. We note that any deviation by Bob from Protocol \ref{prot:AUBQC} can be rewritten in the form shown in Figure \ref{fig:devfig2}. The proof of this is identical to the corresponding step in the proof of Theorem \ref{t-authen1}: Without loss of generality any deviation by Bob from the protocol can be written in the form of Figure \ref{fig:devfig1}. We can treat $\{\delta_i\}$ as inputs to the circuit without violating causality, as they do not interact with any other part of the computation until after $b_j$ has been measured, for all $j<i$. Then simply by reordering the operators via their commutation relations we obtain the form in Figure \ref{fig:devfig2} as required. As a result, any deviation by Bob can be written as a single deviation operator $\Omega$ which acts upon the quantum states Bob receives from Alice as well as $\delta_i$ and some private register held by Bob. Similar to the proof of Theorem \ref{t-authen1} the probability of Alice accepting an incorrect outcome density operator is then
\begin{align*}
p_\text{incorrect} &= \sum_\nu p(\nu)\mbox{Tr}\left( P^\nu_\text{incorrect} B_j(\nu)\right)\\
&= \sum_{b,\nu} p(\nu) \mbox{Tr} \bigg(\displaystyle P_\text{incorrect}\ket{b+c_r}\bra{b} C_{\nu_C,b}\Omega \left( \mathcal {P}  | \Psi^{\nu,b}\rangle \langle\Psi^{\nu,b} | \mathcal {P}^\dagger \right)C_{\nu_C,b}^\dagger \ket{b}\bra{b+c_r} \bigg)\\
&= \sum_{k,b,i,j,\nu} p(\nu) \alpha_{ki} \alpha_{kj}^* \mbox{Tr} \bigg(\displaystyle P_{\bot} \left(\bigotimes_{t\in T} \ket {\eta_t^{\nu_T}} \bra {\eta_t^{\nu_T}} \right) \ket{b+c_r}\bra{b} \\
&~~~~~~~~~~~~~~~~~~~~~~~~~~~~~~~~~~~~~C_{\nu_C,b}\sigma_i \mathcal {P}  | \Psi^{\nu,b}\rangle \langle\Psi^{\nu,b} | \mathcal {P}^\dagger \sigma_j C_{\nu_C,b}^\dagger \ket{b}\bra{b+c_r} \bigg),
\end{align*}
where as in previous proofs, we take the Kraus operators associated with the $\Omega$, once Bob's private system has been removed, to be $\chi_k = \sum_i \alpha_{ki} \sigma_i$, with $\sum_k \sum_i \alpha_{ki} \alpha_{ki}^* = 1$. 

By Corollary \ref{cor:weight}, $P_\bot$ projects out the terms in the above sum where $\sigma_i$ does not contain at least $\lceil\frac{2d}{5}\rceil$ independently detectable errors on the computation graph. This is a somewhat stronger condition than we actually need, and so we will consider terms corresponding to any $\sigma_i$ which produces at least $\lceil\frac{2d}{5}\rceil$ independently detectable errors in total across all three subgraphs (the computation graph and the two trap graphs). We will denote by $\mathcal{I}$ the set of all $i$ for which $\sigma_i$ does not satisfy this condition. Similar to the proof of Theorem \ref{t-authen1}, all terms for which $i\neq j$ average to zero. Thus, as in the proof of Theorem \ref{t-authenT}, we obtain
\begin{align*}
p_\text{incorrect} \leq {\sum_{k}} \sum_{i\notin \mathcal{I}} \sum_T p(T) |\alpha_{ki}|^2 \prod_{t\in T} \left( \sum_{\theta_t,r_t} p(\theta_t)p(r_t) \langle\eta_t^{\nu_T}|\sigma_i |\eta_t^{\nu_T}\rangle^2 \right).
\end{align*}
As before, we introduce notional sets $S_\gamma$ of three qubits each such that exactly one qubit from each set is on each of the three subgraphs (the two trap graphs and the computation graph), and where either all of the qubits are in $P(\tilde{\mathcal{K}}_{3N})$ or all of the qubits are in $A(\tilde{\mathcal{K}}_{3N})$ (ensuring exactly one trap and at least one dummy qubit per set). As every $\sigma_i$ in the above sum corresponds to at least $\lceil \frac{2d}{5}\rceil$ independently detectable (and hence uncorrelated) errors across these sets $S_\gamma$, we have 
\begin{align*}
p_\text{incorrect} &\leq {\sum_{k}} \sum_{i\notin \mathcal{I}} |\alpha_{ki}|^2 \prod_{\gamma} \left( \sum_{t_\gamma,r_{t_\gamma},\theta_{t_\gamma}} p(t_\gamma)p(r_{t_\gamma})p(\theta_{t_\gamma})\langle\eta_t^{\nu_T}|\sigma_i |\eta_t^{\nu_T}\rangle^2 \right)\\
&=  {\sum_k}\sum_{i\notin \mathcal{I}} |\alpha_{ki}|^2 \prod_{\gamma}\left(\sum_{t_\gamma,r_{t_\gamma},\theta_{t_\gamma}}\frac{1}{48}\langle\eta_t^{\nu_T}|\sigma_i |\eta_t^{\nu_T}\rangle^2\right),
\end{align*}
where as before $t_\gamma$ denotes the location of the trap qubit in set $S_\gamma$. Averaging over all values of $t_\gamma$, $r_{t_\gamma}$ and $\theta_{t_\gamma}$, we obtain
\begin{align*}
p_\text{incorrect} &\leq  {\sum_k} \sum_{i\notin \mathcal{I}} |\alpha_{ki}|^2 \prod_{\gamma} \left(1 - \frac{w_\gamma}{6}\right)\\
&\leq  {\sum_k} \sum_{i\notin \mathcal{I}} |\alpha_{ki}|^2 \prod_{\gamma} \left(1 - \frac{1}{6}\right)^{w_\gamma}\\
&=  {\sum_k} \sum_{i\notin \mathcal{I}} |\alpha_{ki}|^2 \left(\frac{5}{6}\right)^{\sum_\gamma w_\gamma}\\
&\leq {\sum_k} \sum_{i\notin \mathcal{I}} |\alpha_{ki}|^2 \left(\frac{5}{6}\right)^{\lceil \frac{2d}{5}\rceil}\\
&\leq \left(\frac{5}{6}\right)^{\lceil \frac{2d}{5}\rceil},
\end{align*}
where $w_\gamma$ denotes the number of independently detectable errors which fall within set $S_\gamma$. In the special case of all classical output, however, the bound can be made tighter, since $|\eta^\nu_{t_\gamma}\rangle=|r^\nu_{t_\gamma}\rangle$, and hence
\begin{align*}
p_\text{incorrect} &\leq {\sum_k} \sum_{i\notin \mathcal{I}} |\alpha_{ki}|^2 \prod_{\gamma} \mbox{Tr}\left(\sum_{t_\gamma,r_{t_\gamma}}\frac{1}{6}\langle r^\nu_{t_\gamma}|\sigma_{i|t}|r^\nu_{t_\gamma}\rangle^2\right)\\
&\leq {\sum_k} \sum_{i\notin \mathcal{I}} |\alpha_{ki}|^2 \prod_{\gamma} \left(1 - \frac{w_\gamma}{3}\right)\\
&\leq {\sum_k}  \sum_{i\notin \mathcal{I}} |\alpha_{ki}|^2 \prod_{\gamma} \left(1 - \frac{1}{3}\right)^{w_\gamma}\\
&=  {\sum_k} \sum_{i\notin \mathcal{I}} |\alpha_{ki}|^2 \left(\frac{2}{3}\right)^{\sum_\gamma w_\gamma}\\
&\leq {\sum_k}  \sum_{i\notin \mathcal{I}} |\alpha_{ki}|^2 \left(\frac{2}{3}\right)^{\lceil \frac{2d}{5}\rceil}\\
&\leq \left(\frac{2}{3}\right)^{\lceil \frac{2d}{5}\rceil}.
\end{align*}\end{proof}

\section{Conclusions and discussion}

We have extended the original universal blind quantum computing (UBQC) protocol presented in \cite{BFK09} with new concepts of blind preparation of isolated dummy qubits (a qubit prepared randomly in the set $\{\ket 0 , \ket 1\}$) and isolated trap qubits (a qubit prepared randomly in the set $\{\ket +_\theta\}$). These two modifications lead to a new construction for unconditionally verifiable blind quantum computation. However, in this way only polynomially bounded security could be achieved. Building upon these ideas, combined with fault-tolerant computation, we presented a new UBQC protocol that achieve exponentially bounded security for the verification scheme using new resource state, the dotted-complete graph state. The new protocol extend the topological fault-tolerant measurement-based quantum computation scheme due to Raussendorf, Harrington and Goyal \cite{RHG07} to a blind setting. We note that while consideration of fault-tolerance in the blind computation itself is beyond the scope of the present work, if Protocol \ref{prot:AUBQC} is modified so as to allow Alice to accept a finite error rate on the trap qubits, the probability of Bob successfully cheating is exponentially suppressed in the gap between the expected error weight inferred from trap measurements and our threshold of $\lceil \frac{2d}{5} \rceil$, and so a fault-tolerant adaptation of this protocol should be possible.

As mentioned before, a verifiable UBQC protocol can be viewed as an interactive proof system where Alice acts as the verifier and Bob as the prover \cite{Dorit,BFK09,RUV13}. This link to complexity theory suggests a novel approach to questions such as the open problem of finding an interactive proof for any problem in BQP with a BQP prover, but with a purely classical verifier. The conceptual link between blindness and interactive proof systems is the key ingredient for verifying the ``high complexity'' quantum-theoretical models with ``low complexity'' classical ones.

\section*{Acknowledgements}

We thank Anne Broadbent for endlessly many insightful discussions throughout the writing of this paper. We would also like to acknowledge Robert Raussendorf and Earl Campbell for their help on the properties of the topological fault tolerance scheme. We also thank Vedran Dunjko, Iordanis Kerenedis, Urmila Mahadev and Tomoyuki Morimae for helpful discussions on the proof of Theorem \ref{t-authen1} and for pointing out to us an error in the first draft. JF acknowledges support from the National Research Foundation and Ministery of Education, Singapore. This material is based on research supported
in part by the Singapore National Research Foundation under NRF Award No. NRF-NRFF2013-01. EK acknowledges support from Engineering and Physical Sciences Research Council grant EP/E059600/1.

\bibliographystyle{siam}
\bibliography{BlindQC}

\end{document}